\newcommand{\drawJob}[5][$ $]{%
\draw[fill = white] (#2,#3) rectangle node[midway]{#1} (#4,#5)}
\newcommand{\drawMachine}[5][$ $]{%
	\draw[pattern = north west lines] (#4,#5+#3) -- (#4,#5) -- (#4+#2,#5) -- (#4+#2,#5+#3);
	\node[below] at (#4+0.5*#2,0){#1};} 
\newcommand{\nfold}{$N$-fold}
\let\save@mathaccent\mathaccent
\newcommand*\if@single[3]{%
  \setbox0\hbox{${\mathaccent"0362{#1}}^H$}%
  \setbox2\hbox{${\mathaccent"0362{\kern0pt#1}}^H$}%
  \ifdim\ht0=\ht2 #3\else #2\fi
  }
\newcommand*\rel@kern[1]{\kern#1\dimexpr\macc@kerna}
\newcommand*\widebar[1]{\@ifnextchar^{{\wide@bar{#1}{0}}}{\wide@bar{#1}{1}}}
\newcommand*\wide@bar[2]{\if@single{#1}{\wide@bar@{#1}{#2}{1}}{\wide@bar@{#1}{#2}{2}}}
\newcommand*\wide@bar@[3]{%
  \begingroup
  \def\mathaccent##1##2{%
    \let\mathaccent\save@mathaccent
    \if#32 \let\macc@nucleus\first@char \fi
    \setbox\z@\hbox{$\macc@style{\macc@nucleus}_{}$}%
    \setbox\tw@\hbox{$\macc@style{\macc@nucleus}{}_{}$}%
    \dimen@\wd\tw@
    \advance\dimen@-\wd\z@
    \divide\dimen@ 3
    \@tempdima\wd\tw@
    \advance\@tempdima-\scriptspace
    \divide\@tempdima 10
    \advance\dimen@-\@tempdima
    \ifdim\dimen@>\z@ \dimen@0pt\fi
    \rel@kern{0.6}\kern-\dimen@
    \if#31
      \overline{\rel@kern{-0.6}\kern\dimen@\macc@nucleus\rel@kern{0.4}\kern\dimen@}%
      \advance\dimen@0.4\dimexpr\macc@kerna
      \let\final@kern#2%
      \ifdim\dimen@<\z@ \let\final@kern1\fi
      \if\final@kern1 \kern-\dimen@\fi
    \else
      \overline{\rel@kern{-0.6}\kern\dimen@#1}%
    \fi
  }%
  \macc@depth\@ne
  \let\math@bgroup\@empty \let\math@egroup\macc@set@skewchar
  \mathsurround\z@ \frozen@everymath{\mathgroup\macc@group\relax}%
  \macc@set@skewchar\relax
  \let\mathaccentV\macc@nested@a
  \if#31
    \macc@nested@a\relax111{#1}%
  \else
    \def\gobble@till@marker##1\endmarker{}%
    \futurelet\first@char\gobble@till@marker#1\endmarker
    \ifcat\noexpand\first@char A\else
      \def\first@char{}%
    \fi
    \macc@nested@a\relax111{\first@char}%
  \fi
  \endgroup
}
\newtheorem*{lemma*}{Lemma}
\newcommand{\ie}{i.\,e.\xspace}
\newcommand{\Wlogeneral}{W.\,l.\,o.\,g.\xspace}
\newcommand\Class[1]{%
    \mathchoice%
    {\text{\normalfont\fontsize{9pt}{10pt}\selectfont$\mathrm{#1}$}}%
    {\text{\normalfont\fontsize{9pt}{10pt}\selectfont$\mathrm{#1}$}}%
    {\text{\normalfont$\mathrm{#1}$}}%
    {\text{\normalfont$\mathrm{#1}$}}%
  }
\newcommand{\Lang}[1]{%
  \ifmmode{%
    \text{\normalfont\textsc{#1}}%
  }%
  \else
  {\normalfont\textsc{#1}}%
  \fi}
\newacro{EPTAS}{efficient polynomial time approximation scheme}
\newacro{FPTAS}{fully polynomial time approximation scheme}
\newacro{ILP}{integer linear program}
\newacro{LP}{linear program}
\newacro{ETH}{exponential time hypothesis}
\newcommand{\ints}{\mathbb{Z}}
\DeclarePairedDelimiter{\norm}{\lVert}{\rVert}
\DeclarePairedDelimiter\floor{\lfloor}{\rfloor}
\DeclarePairedDelimiter\ceil{\lceil}{\rceil}
\DeclarePairedDelimiter\set{\lbrace}{\rbrace}
\DeclarePairedDelimiterX\sett[2]{\lbrace}{\rbrace}{ #1 \,\delimsize| \,\mathopen{} #2 }
\newcommand{\cO}{\mathcal{O}}
\authorrunning{K. Jansen, A. Lassota, M. Maack}
\author{Klaus Jansen}{Department of Computer Science, Kiel University,  Kiel, Germany}{kj@informatik.uni-kiel.de}{}{Supported by supported by DFG Project "Strukturaussagen und deren Anwendung in Scheduling- und Packungsprobleme", JA 612/20-1}
\author{Alexandra Lassota}{Department of Computer Science, Kiel University,  Kiel, Germany}{ala@informatik.uni-kiel.de}{}{Supported by DFG Project "Strukturaussagen und deren Anwendung in Scheduling- und Packungsprobleme", JA 612/20-1}
\author{Marten Maack}{Department of Computer Science, Kiel University,  Kiel, Germany}{mmaa@informatik.uni-kiel.de}{}{}
\title{Approximation Algorithms for Scheduling with Class Constraints}
\begin{document}
\maketitle

\begin{abstract}
Assigning jobs onto identical machines with the objective to minimize the maximal load is one of the most basic problems in combinatorial optimization. Motivated by product planing and data placement, we study a natural extension called Class Constrainted Scheduling (CCS). In this problem, each job additionally belongs to a class and each machine can only schedule jobs from at most $c$ different classes. Even though this problem is closely related to the Class Constraint Bin Packing, the Class Constraint Knapsack and the Cardinality Constraint variants, CCS lacks results regarding approximation algorithms, even though it is also well-known to be NP-hard. We fill this gap by analyzing the problem considering three different ways to feasibly allot the jobs: The splittable case, where we are allowed to split and allot the jobs arbitrarily; the preemptive case, where jobs can be split but pieces belonging to the same job are not allowed to be scheduled in parallel; and finally the non-preemptive case, where no splitting is allowed at all. 
For each case we introduce the first PTAS where neither $c$ nor the number of all classes have to be a constant. In order to achieve this goal, we give new insights about the structure of optimal solutions. This allows us to preprocess the instance appropriately and by additionally grouping variables to set up a configuration Integer Linear Program (ILP) with \nfold{} structure. This \nfold{} structure allows us to solve the ILP efficiently yielding the desired running times. 
Further we developed the first simple approximation algorithms with a constant approximation ratio running in strongly polynomial time. The splittable and the preemptive case admit algorithms with ratio~$2$ and a running time of $\cO(n^2 \log(n))$. The algorithm for the non-preemptive case has a ratio of $7/3$ and a running time of $\cO(n^2 \log^2(n))$. All results even hold if the number of machines cannot be bounded by a polynomial in $n$. 

\ccsdesc[500]{Theory of computation~Scheduling algorithms}
\ccsdesc[500]{Mathematics of computing~Integer programming}
\keywords{Scheduling, Class Constraints, PTAS, \nfold{}}
\end{abstract}

\section{Introduction}
One of the most basic questions in combinatorial optimization is to assign jobs onto identical, parallel machines such that the makespan is minimized. In this paper, we study a natural extension where each job additionally admits a class and each machine can only schedule jobs from a limited number of different classes. This problem is called Class Constrained Scheduling. Its additional property arises commonly when considering product planing or data placement. For example, consider operations which need access to a database. Due to logistical limitations or time restrictions, the databases have to be stored locally. However, disk space is limited, hence we cannot afford to store all databases on each machine. Thus, we have to guarantee that for each job scheduled on a machine we are able to store all required information, which will be a true subset of all databases.

Formally, we are given $n \in \mathbb{N}$ jobs $J$, each job $j \in J$ with processing time $p_j \in \mathbb{N}$ and class $c_j \in \{1, \dots, C\}$. These jobs have to be allotted onto $m$ identical machines~$M$, each with a limitation of $c$ class slots. A class slot can contain any number of jobs from one arbitrary class. In other words, the allotted jobs on a machine can be from at most $c$ different classes. Clearly we may assume $c \leq C$ and $c \leq n$. Otherwise we either allow more classes to be scheduled on one machine than we have overall or we allow more classes to be scheduled on one machine than we have jobs. Hence in both cases, all jobs can go together on a single machine yielding the classical scheduling problem. Furthermore, we assume $C \leq n$ as we can discard classes without any belonging jobs.

In this paper we study three different variants of a feasible job placement:
\begin{itemize}
\item \emph{Splittable case:} In the splittable case, we are allowed to cut the jobs into arbitrary small pieces and place them anywhere as long as we do not schedule two jobs at the same time on the same machine and only assign jobs from at most $c$ different classes onto each machine.  Formally, we define a function $\pi \colon J \rightarrow \mathbb{Z}_{\geq 0}$ which maps every job to the number of pieces it is split into. Further we define a function $\lambda_j \colon [\pi(j)] \rightarrow (0,1]$ such that $\sum\nolimits_{k \in \pi(j)} \lambda_j(k) = 1$ which states the fraction of the overall processing time of job $j$ for each part. Define $\mathcal{J}= \{(j, p) \,|\, j \in J, p \in [\pi(j)]\}$ as the set of job parts. An assignment $\mu \colon \mathcal{J} \rightarrow M$ matches job pieces onto machines. Finally, we define a schedule $\sigma = (\pi, \lambda, \mu)$. The makespan of a schedule is defined by the maximum sum of processing times on a machine, \ie max$_{i \in M}\{\sum\nolimits_{(j,p) \in \mu^{-1}(i)} \lambda(p)p_j \}$.
\item \emph{Preemptive case:} This case resembles the splittable case, but in addition the fractions of the same job are not allowed to be scheduled in parallel, \ie at the same time on different machines. Thus we cannot solely assign job pieces onto machines as before, in addition we also need to state the starting points of the fractions. Let $\xi \colon \mathcal{J} \rightarrow \mathbb{Q}_{\geq 0}$ be a function defining the starting times of a job piece. A schedule is defined as $\sigma = (\pi, \lambda, \xi, \mu)$ and it has to hold that for each two job parts $(j, p), (j', p') \in \mathcal{J}$ with $\mu(j, p) = \mu(j', p')$ or $j = j'$ that $\xi(j,p) + \lambda_j(p)p_j \leq \xi(j', p')$ or $\xi(j',p') + \lambda_{j'}(p')p_{j'} \leq \xi(j, p)$. In other words, job pieces belonging to the same job are not allowed to be scheduled in parallel.
\item \emph{Non-preemptive case:} This case does not allow any splitting of the jobs. Hence the definition of a schedule is simpler as we can directly map jobs onto machines. In detail, a schedule $\sigma \colon J \rightarrow M$ assigns each job $j \in J$ onto a machine $ \sigma(j) = i \in M$. The makespan~$\mu(\sigma)$ of a schedule $\sigma$ is defined by the maximum sum of processing times a machine has to schedule, \ie max$_{i \in M}\{\sum\nolimits_{j \in \sigma^{-1}(i)} p_j \}$.
\end{itemize}

The input of each case can be described as an instance $I = [p_1, \dots, p_n, c_1, \dots, c_n, m, c]$. The output will be described by the schedule $\sigma$. In the splittable case as well as in the preemptive case, the output length may be huge as we can produce arbitrarily many job pieces. However, we managed for all algorithms in this paper to bound the output length by a polynomial in $n$ either by carefully decoding the output or by bounding the number of produced job pieces appropriately. Overall, we aim to find a schedule $\sigma$ such that the makespan is minimized while assigning jobs of at most $c$ different classes onto a machine. We denote an optimal schedule with minimum makespan by $\textsc{opt}(I)$.

Finding $\textsc{opt}(I)$ is well-known to be $\Class{NP}$-hard for all cases \cite{epstein2010class, shachnai2001polynomial}. Thus we are satisfied to produce a solution close to the makespan of an optimal schedule. However while doing so we have to deal with a trade-off between running time and approximation ratio. This means, we have to spend more time to compute a solution closer to the makespan of an optimal schedule. This can also be seen in our results. First, we present simple algorithms which produce a solution $\sigma$ efficiently with a guaranteed quality of $\mu(\sigma) \leq 2 \cdot \mu(\textsc{opt}(I))$ for the splittable case and the preemptive case. For the non-preemptive case we adapt the algorithms to obtain a ratio of $7/3$. To the best of our knowledge, these are the first algorithms to produce solutions with a guaranteed constant approximation ratio for this problem. Second, we also investigate \emph{approximation schemes}, that upon an input~$\epsilon$ compute a $(1 + \epsilon)$-approximation, \ie it is guaranteed that we find a schedule~$\sigma_A$ with $\mu(\sigma_A) \leq (1+ \epsilon) \cdot \mu(\textsc{opt}(I))$ arbitrarily close the makespan of an optimal solution. In this work we managed to establish the first polynomial time approximation scheme (PTAS) for each case. A PTAS is an approximation algorithm~$A$ receiving $I$ and an approximation value~$\epsilon \in (0, 1]$ as input and computes a schedule while the running time is bounded by $\cO(|I|^{f(1/\epsilon)})$. Here, $f$ is an arbitrary function and |I| is the encoding length. For the investigated problems the encoding length $|I|$ of an instance $|I|$ is given by $|I| = \cO(\sum\nolimits_{i=1}^n \lceil$log$(p_i) \rceil+ \sum\nolimits_{i=1}^n \lceil$log$(c_i) \rceil + n + \lceil \log(m) \rceil$, where log$(x)$ denotes the logarithm with basis two of a number $x$. Note that $c$ and $C$ are dominated by $n$ and thus do not appear in the landau-notation. The running time of the PTAS for the splittable case is given by $n^{\cO(1/\delta^4 \log(1/\delta))} m\log(m) \log(p_{\max})$ if the number of machines can be bounded by a polynomial in the number $n$ of jobs. Otherwise a careful adaption yields a running time of $n^{\cO(1/\delta^4 \log(1/\delta))} \log(m) \log(p_{\max})$. The preemptive case admits an PTAS with running time of $n^{2^{\cO(1/\delta^2)}} \log(m) \log(p_{\max})$ and the non-preemptive case of $n^{\cO(1/\delta^8 \log(1/\delta))} \log(m) \log(p_{\max})$. To obtain the PTASs we make use of special ILPs called \nfold{}s. These are Integer Linear Programs where the constraint matrix is of a specific block structure. In Detail, non-zero entries only appear block-wise in the first few rows and the diagonal underneath. This form allows us to compute the solution efficiently. However, it is not straight-forward to formulate our problems in this form. First, we need some preprocessing steps to bound the number of small job, which are the ones harder to handle. Furthermore, we need some novel insights into the structure of optimal solutions. Only then we can set up a configuration ILP with \nfold{} structure by hierarchically grouping the variables. Finally we use the structural properties to transform the solution of the \nfold{} ILP into a feasible schedule.  

 \subsection*{Related Work and Our Results}
Scheduling and packing problems are broadly studied regarding various aspects. For a survey we recommend~\cite{chen1998review, christensen2016multidimensional, kellerer2003knapsack}. 

First note that the makespan minimization on parallel machines, that is, the non-preemptive problem studied in this paper but without class constraints, is strongly NP-hard and well-known to admit a PTAS \cite{hochbaum1987using}.
For CCS approximation schemes are known for two special cases.
On the one hand, Shachnai and Tamir~\cite{shachnai2001polynomial} presented a PTAS for the case that number of classes $C$ is a constant, and, on the other, Chen et al.~\cite{JansenLZ16} designed a PTAS for the case that each class contains exactly one job.
The latter result even works if the class constraints are machine depended, that is, for each machine $i$ a capacity bound $c_i$ is given.

In the remainder of this section, we discuss the known results for the class constrained versions of Bin Packing (CCBP) and Multiple Knapsack (CCKP).
These problems are closely related to CCS and indeed NP-hard~\cite{epstein2010class, shachnai2001two}, and thus studying approximation schemes is a natural approach to obtain solutions efficiently. 

Golubchik et al. present in~\cite{golubchik2009approximation} a PTAS for CCKP for the case that all knapsacks are identical. If $C$ is a constant, there is also a PTAS for the general case by Shachnai and Tamir introduced in~\cite{shachnai2001polynomial}. Furthermore, they present in the same work an FPTAS --- a PTAS with running time also polynomial in $1/\epsilon$ --- for 0-1 CCKP when all items are distinct~\cite{shachnai2001polynomial}. However, the problem becomes APX-hard when each item admits a set of classes~\cite{shachnai2001polynomial}, thus there is no PTAS unless P = NP.

Xavier et al. prove in~\cite{xavier2008class} that CCBP admits an APTAS if $C$ is a constant, \ie a PTAS which additionally allows an additive error. Furthermore, there is no APTAS when $c$ is not constant~\cite{epstein2010class}. In the same paper Epstein et al.~\cite{epstein2010class} present an AFPTAS for the case of a constant $C$, improving upon the APTAS of Xavier et al. An AFPTAS is an APTAS where the running time is not only polynomial in the number of items but also in $1/\epsilon$.
A special sub-case for CCBP is the Cardinality Constrained version, where $C = n$ and thus the bound on the number of classes on in a bin corresponds to the bound on the number of items. This problem admits an APTAS \cite{caprara2003approximation} and an AFPTAS \cite{epstein2010afptas}.

We revisit the Class Constrained Scheduling problem. In this paper, we consider the case where $C$ and $c$ are parameters and not constants as assumed in the previous works. To tackle this problem, we make use of special ILPs, called \nfold{}s. These are Integer Linear Programs with a specific block structure regarding the constrained matrix. Meaning, small blocks only appear in the first few rows and the diagonal underneath. First, it was shown by De Loera et al.~\cite{de2008n} that these ILPs can be solved in polynomial time for constant parameters. This algorithm was greatly improved by a line of interesting results \cite{ DBLP:journals/corr/abs-1904.01361, klein, hemmecke2013n, DBLP:conf/icalp/JansenLR19, koutecky}. The currently best algorithms to solve this ILPs run in near-linear time, whereas in \cite{DBLP:conf/icalp/JansenLR19} the exponential dependency on the parameters $r,s$ is lower and in \cite{DBLP:journals/corr/abs-1904.01361} the polylogarithmic terms are smaller. The idea of using \nfold{}s to construct approximation schemes first appeared in \cite{DBLP:conf/innovations/JansenKMR19}. The authors obtain an \nfold{} by carefully setting up the configuration ILP for the Scheduling problem with setup times, \ie each job consists of (a machine dependent) processing time and a (machine dependent) setup time. The objective is to minimize the makespan. We use a similar approach for setting up the \nfold{}s. However due to the divergent problem setting, the preprocessing and postprocessing steps differ hugely. Futher it was necessary to prove some structural results regarding optimal solutions. But having them at hand, we managed to obtain the following results:
 
 \begin{itemize}
 \item We present the first constant approximation algorithm for the splittable case of the CCS problem with running time $\cO(n^2 \log(n))$ and a quality of $2$, even if the number of machines cannot be bounded by a polynomial in $n$.
 \item We introduce the first constant approximation algorithm with a quality of 2 for the preemptive case of the CCS problem by extending the first algorithm to handle jobs which may be scheduled in parallel. The running time of the extended algorithm is $\cO(n^2 \log(n))$.
 \item We state the first constant approximation algorithm with a quality of $7/3$ for the non-preemptive version of the CCS problem by extending the first algorithm. The algorithm handles the large classes more carefully as they cannot be split. The overall running time of this algorithm is bounded by $\cO(n^2 \log^2(n))$.
 \item Further we introduce the first PTAS for the splittable case by modeling the problem as an \nfold{} ILP. We obtain a running time of $n^{\cO(1/\delta^4 \log(1/\delta))} m\log(m) \log(p_{\max})$ for the case that the number of machines can be bounded by a polynomial in $n$. When setting up the ILP, we need some new insights into the structure of an optimal solution and a careful grouping of the variables. Again, we also managed to handle the case, that the number of machines cannot be bounded by a polynomial in $n$ yielding a running time of $n^{\cO(1/\delta^4 \log(1/\delta))} \log(m) \log(p_{\max})$.  Note that there is no exponential dependency on $c$ or $C$ in both running times.
 \item Adapting our techniques and proving new structural results regarding optimal solutions, we also established the first PTAS for the preemptive case. The overall running time of the PTAS is given by $n^{2^{\cO(1/\delta^2)}} \log(m) \log(p_{\max})$, again being polynomial in $C$ and $c$. 
 \item Lastly, we also present the first PTAS for the non-preemptive case where neither $c$ nor $C$ have to be constant. It admits a running time of $n^{\cO(1/\delta^8 \log(1/\delta))} \log(m) \log(p_{\max})$.
 \end{itemize}

 \subsection*{Structure of the Document}
In Section~\ref{CCS:ILP} we familiarize with \nfold{}s, which are ILPs admitting a specific block structured constrained matrix. These \nfold{}s will be used later on to construct the PTASs. Section~\ref{CCS:ConstantApproximationAlgorithms} presents the simple approximation algorithms for efficiently computing a solution with constant quality. Improving upon the approximation guarantee, we show the construction of these problems as \nfold{}s in Section~\ref{CCS:PTAS} yielding the desired PTASs. Finally, Section~\ref{CSS:OpenQuestions} states some interesting open questions.
 
 \section{\nfold{} Integer Linear Programming} \label{CCS:ILP}
In the following section we introduce a special type of Integer Linear Programs called \nfold{}s formally and state the main results regarding them necessary for understanding the polynomial time approximation schemes. These \nfold{}s consist of a specific structure, where non-zero entries only appear in the first rows and in small blocks along the diagonal underneath. Let $N,r,s,t \in \mathbb{Z}^+$. Furthermore let $A_1, \dots, A_N$ be integer matrices of size $r \times t$ and the integer matrices $B_1, \dots, B_N$ be of size $s \times t$. The constraint matrix $\mathcal A$ is then of following form:
\begin{center}
$
\mathcal A =
\begin{pmatrix}
A_1	& A_2	& \dots	& A_N      \\
B_1	& 0 	& \dots  	& 0 	  \\
0	&B_2	&\dots 	& 0	\\
\vdots	& \vdots 	& \ddots & \vdots \\
0 	& 0 & \dots 	 & B_N
\end{pmatrix}
$  
\end{center}

The constraint matrix $\mathcal A$ has dimension $(r+N\cdot s) \times N\cdot t$.  We will divide $\mathcal A$ into \emph{blocks} of size $(r+s) \times t$. Similarly, the variables of a solution $x$ are partitioned into \emph{bricks} of length~$t$. Hence, each brick $x^{(i)}$ corresponds to the columns of one submatrix $A_i$ and therefore also one submatrix $B_i$.

Given $w, l, u \in \mathbb{Z}^{N\cdot t}$ and $b \in \mathbb{Z}^{r+N\cdot s}$, the corresponding Integer Linear Programming problem is defined by
\begin{align*}
 \textrm{ min}\,\{wx|\, \mathcal Ax = b, l \leq x \leq u,\, x \in \mathbb{Z}^{N\cdot t} \} .
\end{align*}

Currently the best known algorithm for solving this problem with respect to the dependency on the parameters $r, s$ and a near-linear dependency on $nt$ is stated in the next theorem.

\begin{theorem}[\cite{DBLP:conf/icalp/JansenLR19}] \label{nfold}
The \nfold{} ILP is solvable in time $(rs\Delta)^{\cO(r^2s + s^2)} L \cdot Nt \log^{\cO(1)}(Nt)$ when all lower and upper bounds are finite. Here, $\Delta$ is the largest absolute number in $\mathcal A$ and $L$ is the encoding length of the largest number in the complete input.
\end{theorem}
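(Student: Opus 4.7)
The plan is to prove this via the Graver augmentation framework, which is the standard approach to \nfold{} ILPs since Hemmecke-Onn-Romanchuk, combined with the sharper Graver norm bounds and a near-linear-time augmentation subroutine from the cited paper. First I would reduce optimization to finding iterative improvements: given a feasible point $x$ (obtained, e.g., by solving the LP relaxation and using a proximity bound to round), I look for an augmenting direction $g \in \ints^{Nt}$ such that $x + \alpha g$ is still feasible for some integer step $\alpha \geq 1$ and $w^\top g < 0$. The classical fact is that it suffices to restrict $g$ to lie in the Graver basis $\mathcal G(\mathcal A)$ of the constraint matrix, and that Graver-best augmentations converge geometrically: after $\cO((Nt) \log(\|x_0 - x^*\|_\infty))$ steps, one reaches the optimum, giving the $L$ factor together with the $Nt$ factor in the running time.

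The first real step is to bound the elements of $\mathcal G(\mathcal A)$. For \nfold{} matrices, any Graver element $g = (g^{(1)}, \dots, g^{(N)})$ satisfies $\sum_i A_i g^{(i)} = 0$ and $B_i g^{(i)} = 0$ for each $i$, and by $\sqsubseteq$-minimality only $\cO(rs)$ bricks can be nonzero (this is the classical support bound). For each individual nonzero brick, $g^{(i)}$ lies in the Graver basis of $B_i$, so $\|g^{(i)}\|_\infty \leq (s\Delta)^{\cO(s)}$. The sharper part, and the one that drives the improved exponents $\cO(r^2 s + s^2)$ in the theorem, is to bound $\|g\|_\infty$ globally using the Steinitz lemma: any sequence of vectors summing to zero can be reordered so that all prefix sums stay within norm $\cO(rs\Delta)$ of the origin, and applying this to a decomposition of $g$ yields the bound $\|g\|_\infty \leq (rs\Delta)^{\cO(r^2 s + s^2)}$.

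Next I would design the augmentation subroutine. With the $\ell_\infty$ bound above in hand, finding a Graver-best step reduces to a dynamic program that scans the bricks $i = 1, \dots, N$ in order while carrying a state consisting of the partial sum $\sum_{j \leq i} A_j g^{(j)}$ (an $r$-dimensional vector of bounded entries by the Steinitz lemma). The number of reachable states is $(rs\Delta)^{\cO(r^2 s + s^2)}$, and for each brick we only need to enumerate local completions $g^{(i)}$ in the Graver basis of $B_i$, of which there are $(s\Delta)^{\cO(s)}$ many. A careful implementation using balanced-tree data structures and the Jansen-Lassota-Rohwedder layered sparsification gives the announced near-linear $Nt \log^{\cO(1)}(Nt)$ dependence on the brick dimensions per augmentation call.

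Finally I would combine the ingredients. Using a proximity argument, the LP-to-IP gap is bounded by $\cO(Nt \cdot (rs\Delta)^{\cO(r^2s+s^2)})$, so starting from a rounded LP optimum and applying Graver-best augmentation yields optimality in $\cO(Nt \cdot L)$ steps, each costing $(rs\Delta)^{\cO(r^2 s + s^2)} \cdot Nt \log^{\cO(1)}(Nt)$ time; amortizing and charging $L$ to the bit-length of the numbers gives the claimed bound. The main obstacle is the Steinitz-based $\ell_\infty$ bound on Graver elements: without it, the exponent would blow up to $\cO(rs)$ per brick times the support size, which degrades the dependence on $r$ and $s$ and ruins the stated running time. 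Everything else (augmentation convergence, DP, proximity) is essentially folklore modulo careful bookkeeping.
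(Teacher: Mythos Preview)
The paper does not prove this theorem at all: it is quoted as a black-box result from \cite{DBLP:conf/icalp/JansenLR19} and used only through the running-time bound. So there is no ``paper's own proof'' to compare against; your write-up is really a sketch of the argument in the cited ICALP paper, not of anything appearing here.

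As a sketch of that external result your outline is in the right spirit (Graver augmentation, Steinitz-based $\ell_\infty$ bounds, brick-wise dynamic program, proximity), but two technical claims are inaccurate and would not survive as written. First, it is not true that each nonzero brick $g^{(i)}$ of a Graver element of $\mathcal A$ must itself lie in the Graver basis of $B_i$; one only has $B_i g^{(i)} = 0$, and the correct step is to decompose each $g^{(i)}$ into a sign-compatible sum of Graver elements of $B_i$ and then apply Steinitz to the resulting collection of $r$-dimensional vectors $A_i h$. Second, the ``only $\cO(rs)$ bricks can be nonzero'' support bound is not what drives the exponent here; the $(rs\Delta)^{\cO(r^2 s + s^2)}$ bound comes from counting reachable prefix-sum states after the Steinitz reordering, not from a brick-support argument. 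If you want to actually present a proof, those two points need to be fixed; for the purposes of this paper, though, simply citing the theorem is all that is done or needed.
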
 

In the following we will call constraints appearing in $A_1, \dots A_N$ globally uniform constraints. Constraints corresponding to $B_1, \dots, B_N$ will be called locally uniform, since the entries in $b$ can differ for each block. Obviously $r$ matches the number of globally uniform constraints, analogously $N \cdot s$ corresponds the number of the locally uniform ones.

 
\section{Constant Approximation Algorithms} \label{CCS:ConstantApproximationAlgorithms}
As a warm-up, we start with the approximation algorithms admitting a constant quality. These algorithms trade off a worse guaranteed quality of a solution in return for simplicity and an efficient running time. Each of the following algorithms share the same framework with just minor adaptions for the distinct cases: Guess the optimal makespan $T$; group the jobs belonging to classes which admit a accumulated processing time greater than $T$ into as few sub-groups as an optimal schedule has to use; distribute all sub-groups via round robin, a cyclic approach, such that a minimum of class slots is used. This guarantees a feasible solution. Further we can show that doing so we get a $2$-approximation algorithm for the splittable case and for the preemptive case. As for the non-preemptive version this yields a $7/3$-approximation, since we are not allowed to cut the jobs and thus introduce a larger error while grouping them. The splittable and the preemptive case admit algorithms with a running time of $\cO(n^2 \log(n))$. As for the non-preemptive case we get a running time of  $\cO(n^2 \log^2(n))$ as the grouping takes more time in this case.

\subsection*{Splittable Case}
First, we discuss the splittable case, where we are allowed to cut and distribute the jobs arbitrarily between the machines as long as the class restriction is not violated and jobs do not overlap on a machine. Let the overall running time $P_u$ of a class $u \in [C]$ be the accumulated processing time of all jobs belonging to this class, \ie $P_u = \sum\nolimits_{\{j \,|\, c_j = u\}} p_j$. The algorithm for solving the splittable version is displayed in Algorithm \ref{alg:AlgoSplit}.

\begin{algorithm}
\begin{center}
 \fbox{\pseudocode[head={Input: Processing times $p_{1}, \ldots,p_{n}$, \newline corresponding classes $c_1, \ldots, c_n$, \newline number of machines $m$, \newline number of class slots $c$ on each machine}, width = 9.3cm]{\\
 	\text{Calculate $P_u$ for each class $u$.}\\
	\text{Calculate the lower bound $LB = \sum_{j=1}^{n} p_j/m$ and the upper bound \textit{UB}$ = c\cdot \max_u\{P_u\}$.}\\
	\text{Do a binary search between $LB$ and \textit{UB}, where $T$ is the current guess of the makespan:}\\
		\t \text{For each class with $P_u > T$ do: Divide the class into $C_u = \lceil P_u / T \rceil$ new, unique classes,}\\
		\t[2] \text{each with $P_{u'} \leq T$.}\\
	\t \text{Delete divided classes.}\\
	\t \text{If the number of present classes is greater than $ c\cdot m$: Increase the current guess $T$.}\\
	\t \text{Otherwise lower the guess $T$.}\\
	\text{Compute $P_{u'}$ for the new classes.}\\
	\text{Sort the present classes in non-ascending order regarding $P_{u'}$.}\\
	\text{Allot the classes in non-ascending order regarding $P_{u'}$ onto the machines via round robin. (*)}\\
	\text{Reassign the original classes to the jobs.}\\
	\text{Output schedule.}
   }
 }
  \end{center}
 \caption{Algorithm for solving the splittable version of the Class Constrained Scheduling problem. This algorithm also serves as a framework for the preemptive and non-preemptive setting of this problem. The label (*) will serve as an orientation point for the changes when handling the other settings.} \label{alg:AlgoSplit}
 \end{algorithm}
 
The algorithm searches for the optimal makespan via a binary search. In each iteration the classes with $P_u > T$ are divided into $\lceil P_u / T \rceil$ new, unique classes with accumulated processing time smaller or equal to $T$. Therefore all classes considered for the remaining algorithm have $P_{u'} \leq T$. Splitting the classes with $P_u > T$ is indeed easy, as we simply cut each class into pieces of size $T$ until the last fraction with load smaller or equal to $T$ remains. Next, the algorithm checks if the number of remaining and new classes exceeds $c \cdot m$. If so, we discard the guess. Otherwise the current guess is lowered until we find the lowest feasible guess $T$ satisfying this condition. Then all original classes with $P_u \leq T$ and all new classes will be distributed among the machines via \emph{round robin}. This is a procedure where the classes are placed in non-ascending order regarding $P_{u'}$, such that the first class is assigned onto the first machine, the second one onto the second machine and so on until all machines admit a class. This is then repeated until all classes are assigned. An example is given in Figure \ref{fig:RoundRobin}. Note that this procedure is independent of the guess $T$. However, we prove in Theorem \ref{t:ConstSplit} that this algorithm indeed computes a feasible 2-approximation for the splittable version of the Class-Constrainted Scheduling problem.
 
But first we have to look more closely at the binary search. Here we cannot simply test the mid of the current interval and then cut it in half as the optimal makespan can be fractional. Thus we would need infinite many steps or have to be satisfied with a finite precision. Gladly, we can circumvent this obstacle by formulating our binary search a little smarter. Recall that we divide large classes with $P_u > T$ into some new classes, each with $P_{u'} \leq T$. Regarding the algorithm -- whose correctness is proven below -- the only obstacle preventing us from computing a feasible schedule is the amount of different classes after splitting the large ones. Thus the only interesting guesses on the makespan are those values, where by guessing below the number of new classes increases. In the following we will call them \emph{borders}. Now we only have to search along these borders to find the optimal makespan. We do so in the following way: For a class $u$ compute the first $m$ borders via $P_u / k$ for $k \in \{1, \dots m\}$; search in a binary search manner along these borders and save the smallest, feasible guess; do so for each class; output the smallest feasible border of all classes. The following lemma will prove the correctness this advanced binary search: 

\begin{lemma} \label{t:BinarySearch} 
The binary search explained above computes a value smaller or equal to the optimal makespan in time $\cO(C \log(m))$.
\end{lemma}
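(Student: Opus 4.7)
The plan is to split the argument into a correctness part (the returned value is at most the optimal makespan) and a running-time analysis. The central object will be the function $g(T) := \sum_{u=1}^{C} \max(1, \lceil P_u/T \rceil)$, which equals the total number of (sub-)classes produced by the splitting step on a guess $T$. Since each summand is non-increasing in $T$, the function $g$ is non-increasing in $T$ as well, so the feasibility predicate ``$g(T) \leq c \cdot m$'' is monotone: it holds exactly for all $T$ above some threshold $T^\dagger$.

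I would first show that $T^\dagger \leq \mu(\OPT(I))$. In any optimal schedule of makespan $T^* := \mu(\OPT(I))$ a single machine can carry at most $T^*$ units of any class $u$, so class $u$ must occupy at least $\lceil P_u/T^*\rceil$ distinct machines (and always at least one). Summing over classes, the optimal schedule uses at least $g(T^*)$ class slots, and since it uses at most $c\cdot m$ slots in total, $g(T^*) \leq c\cdot m$. Monotonicity of the predicate then yields $T^\dagger \leq T^*$.

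The next step is to argue that $T^\dagger$ is always one of the values enumerated by the advanced binary search. For fixed $u$ the summand $\max(1, \lceil P_u/T\rceil)$ is piecewise constant, equal to $k$ on $(P_u/(k+1), P_u/k]$ for $k \geq 1$ and equal to $1$ for $T \geq P_u$. Hence $g$ only changes its value at points of the form $P_u/k$, and therefore $T^\dagger$ coincides with some $P_u/k$ with $u \in [C]$. Inside the search interval $[LB, UB]$ one has $P_u/T \leq P_u \cdot m / \sum_j p_j \leq m$, so only indices $k \in \{1,\dots,m\}$ are relevant --- exactly the list maintained for each class by the algorithm. A binary search on the sorted borders $P_u/1 > P_u/2 > \dots > P_u/m$ of class $u$ returns, by monotonicity of $g$, the smallest feasible border attached to $u$; taking the minimum across all classes then recovers $T^\dagger$.

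For the running time, each of the $C$ per-class binary searches performs $\cO(\log m)$ tests, and each test can be resolved in amortised constant time after precomputing the $P_u$ --- for instance by maintaining the running value of $g$ as the tested borders slide --- giving the claimed $\cO(C \log m)$ bound. I expect the most delicate part of the proof to be the border-characterisation step, namely arguing that $T^\dagger$ really is one of the $P_u/k$ with $k \leq m$ so that restricting the search to these values loses nothing; the remaining pieces are essentially routine monotonicity considerations.
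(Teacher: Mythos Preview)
Your correctness argument is essentially the paper's, just cast more explicitly through the function $g(T)=\sum_u \max(1,\lceil P_u/T\rceil)$ and its monotonicity; the paper argues the same three points (the optimum is a feasible guess, feasibility only changes at the borders $P_u/k$, and $k\le m$ suffices because of the lower bound) in slightly more informal language. One small slip: the interval on which $\lceil P_u/T\rceil=k$ is $[P_u/k,\,P_u/(k-1))$, not $(P_u/(k+1),\,P_u/k]$, though this does not affect the conclusion.

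The running-time part has a genuine gap. You assert that each feasibility test is amortised $O(1)$ ``by maintaining the running value of $g$ as the tested borders slide''. Binary search does not slide: it jumps by roughly halving the index range, so an incremental update of $g$ from the previous test point is not available, and a direct evaluation of $g$ at an arbitrary border costs $\Theta(C)$. The paper does not attempt this stronger claim; its proof simply counts that $\cO(\log m)$ borders are visited per class and hence $\cO(C\log m)$ tests are performed, and the enclosing theorem then charges $\cO(n)$ per test when tallying the overall running time. So either match the paper and read the $\cO(C\log m)$ bound as the number of iterations, or supply a different argument for constant-time feasibility tests (the sliding idea as stated does not work).
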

\begin{proof}
First of all, let us discuss the upper and lower bound. The lower bound  $LB = \sum_{j=1}^{n} p_j/m$ corresponds to an equal distribution of processing times onto the machines, which clearly bounds the optimal makespan from beneath. Further, as a total of $c$ classes fit onto a single machine, $c$ times the maximal accumulated processing time, \ie \emph{UB}$ = c*\max_u\{P_u\}$, states a correct upper bound. Thus it is correct to search for the optimal makespan in between these bounds.

Further it is sufficient to consider just the borders as solely the number of produced classes prevent a guess from being successful. As per definition this number does not change between two neighboring borders, we thus aim to find the lowest border where we do not create more than $c \cdot m$ classes.

Computing $P_u / k$ for $k \in \{1, \dots m\}$ for a class $u$ indeed corresponds to the borders regarding that class, \ie where the class $u$ is split into the same number of sub-classes. This can be seen as for each guess $T \in [P_u/(k+1), P_u/k)$ we get $k$ pieces of size $T$ and one piece of size smaller or equal $T$ yielding $k+1$ classes for all $k \in \{1, \dots, m-1\}$. Remark that we further have an interval $[P_u, UB]$ where class $u$ is not cut at all. Moreover, we do not have to consider larger values for $k$ as it would include that we already produced $m$ new classes with load~$T$ and thus the total area of our current guess would be exceeded. This contradicts the lower bound.

Proceeding with a binary search along the borders of one class results in the correct lowest value. A binary search is applicable as below each border the number of sub-classes increases monotonically. Setting the lowest guess on the smallest value among all classes gives the correct value as when computing the number of sub-classes for a certain class and border all other classes are also considered. So this corresponds to the smallest number resulting in at most $c \cdot m$ classes overall.

Regarding the running time we have to compute at most $\log(m)$ bounds which are visited while doing the binary search for each class. Then we just take the smallest number. Overall this takes time $\cO(C \log(m))$. 
\end{proof}

Next we prove a general lemma regarding round robin which comes in handy when proving the approximation ratio of our algorithm. 

\begin{lemma} \label{t:RoundRobin} 
Let $\sigma$ be a schedule produced by round robin when packing $n$ jobs with processing times $p_1, \dots, p_n$ onto $m$ maschines. Then $\mu(\sigma) \leq \sum_{j=1}^n p_j/m + \max\sett{p_j}{j\in[n]}$.   
 \end{lemma}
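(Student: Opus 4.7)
My plan is to identify machine~$1$ as the maximally loaded machine and then bound its load by comparison to the average. Because the algorithm sorts the classes non-ascendingly before applying round robin, I would work under the assumption $p_1 \ge p_2 \ge \cdots \ge p_n$ throughout. I would then index the passes of the round-robin procedure by $r = 1, 2, \ldots$, where pass $r$ places the job at position $(r-1)m + i$ on machine $i$ for each $i \in \{1, \ldots, m\}$, whenever this index does not exceed $n$. Within any single pass the job assigned to machine $i$ is at least as large as the one assigned to any machine $i' > i$, and machine~$1$ receives at least as many jobs as any other machine. A term-by-term comparison would then yield $L_1 \ge L_2 \ge \cdots \ge L_m$, so $\mu(\sigma) = L_1$ and it suffices to bound $L_1$.

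Next I would isolate the first job on machine~$1$ and write
\[
L_1 \;=\; p_1 \;+\; \sum_{r \ge 1,\; rm+1 \le n} p_{rm+1}.
\]
The decisive observation is that for each $r \ge 1$ with $rm+1 \le n$, the job $p_{rm+1}$ comes after the $m$ jobs $p_{(r-1)m+1}, \ldots, p_{(r-1)m+m}$ of pass $r$ in the sorted order, and is therefore dominated by each of them. This immediately gives
\[
p_{rm+1} \;\le\; \frac{1}{m} \sum_{i=1}^{m} p_{(r-1)m+i}.
\]
Since the index blocks $\{(r-1)m+1, \ldots, rm\}$ are pairwise disjoint subsets of $\{1, \ldots, n\}$, summing this inequality over all admissible $r$ yields $L_1 - p_1 \le \frac{1}{m}\sum_{j=1}^{n} p_j$, and combining with $p_1 = \max_j p_j$ gives the claim.

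I expect the only subtle point to be the assertion that machine~$1$ carries the maximum load; this relies on both the sorted input (dominance of machine~$1$'s jobs within each pass) and on machine~$1$'s lead in the number of jobs received (it gets the first job of every pass). With that in hand the averaging inequality does all the remaining work, and no case distinction on $n \bmod m$ is required: whether or not the final pass is complete, the index blocks appearing on the right-hand side of the averaging inequality are disjoint subsets of $\{1, \ldots, n\}$, so their contributions sum to at most the total processing time.
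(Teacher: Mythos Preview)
Your proof is correct and takes a genuinely different route from the paper's. The paper argues by contradiction: assuming some machine $i^*$ has load exceeding $\sum_j p_j/m + \max_j p_j$, it removes the first (largest) job on $i^*$ and then uses the cyclic order in which round robin visits the machines to show that the \emph{residual} load on every machine (from jobs placed after that first job) already exceeds $\sum_j p_j/m$; summing over all machines then contradicts $\sum_i L_i = \sum_j p_j$. Your argument is direct: you first establish from the sorted input that machine~$1$ realizes the makespan, and then bound $L_1 - p_1$ by the block-wise averaging inequality $p_{rm+1}\le \frac{1}{m}\sum_{i=1}^m p_{(r-1)m+i}$ over disjoint index blocks. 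Both proofs exploit the same monotonicity of the sorted sequence, but your version avoids the contradiction framing and the cyclic comparison of residual loads, at the modest cost of first pinning down which machine is maximally loaded. The paper's version, conversely, never needs to identify the maximum-load machine explicitly.
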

 \begin{proof}
We prove this lemma by leading the opposite assumption to a contradiction: 
Consider the loads $L_i$ for each machine $i \in \{1, 2, \dots, m\}$. 
Assume there is one machine $i^*$ with $L_{i^*} > \sum_{j=1}^n p_j/m + \max\sett{p_j}{j\in[n]}$. We will show that under this assumption each machine has a load greater than $\sum_{j=1}^n p_j/m$ and hence $\sum_{j=1}^n p_j = \sum_{i=1}^{m}L_i > m\sum_{j=1}^n p_j/m = \sum_{j=1}^n p_j$.
If we remove the largest job $j$ placed on $i^*$, then $i^*$ still has a load of at least $T$.
\Wlogeneral we may assume that $j$ was placed in the first iteration of round robin.
Let $L'_i$ be the load any machine $i$ receives after $j$ was placed.
We have $L'_{i+1} \geq \dots \geq L'_{m} \geq L'_1 \geq \dots L'_{i} > T$ since the jobs are assigned ordered decreasingly by their processing times. This is graphically displayed in Figure \ref{fig:RoundRobin}.
Hence, we have $L_i \geq L'_i > T$ thereby completing the proof. 
\end{proof}

\begin{center}
\begin{figure}
	\begin{tikzpicture}
	\pgfmathsetmacro{\w}{1}
	\pgfmathsetmacro{\h}{0.5}
	\pgfmathsetmacro{\mH}{11}
		
	\drawMachine[$m_1$]{1*\w}{\mH*\h}{0}{0};
	\pgfmathsetmacro{\hh}{0}
	\foreach \x/\z in {
		5/1,
		3/5,
		1/9
	}{
		\pgfmathsetmacro{\hhh}{\hh+\x}
		\drawJob[\z]{0*\w}{\hh *\h}{\w}{\hhh*\h};	
		\global\let\hh=\hhh
	}

	\pgfmathsetmacro{\ww}{2}
	\drawMachine[$m_2$]{1*\w}{\mH*\h}{\ww*\w}{0};
	\pgfmathsetmacro{\hh}{0}
	\foreach \x/\z in {
		5/2,
		2/6,
		1/10
	}{
		\pgfmathsetmacro{\hhh}{\hh+\x}
		\drawJob[\z]{\ww*\w}{\hh *\h}{\ww*\w +\w}{\hhh*\h};	
		\global\let\hh=\hhh
	}
	
	\pgfmathsetmacro{\ww}{4}
	\drawMachine[$m_3$]{1*\w}{\mH*\h}{\ww*\w}{0};
	\pgfmathsetmacro{\hh}{0}
	\foreach \x/\z in {
		4/3,
		2/7
	}{
		\pgfmathsetmacro{\hhh}{\hh+\x}
		\drawJob[\z]{\ww*\w}{\hh *\h}{\ww*\w +\w}{\hhh*\h};	
		\global\let\hh=\hhh
	}

	\pgfmathsetmacro{\ww}{6}
	\drawMachine[$m_4$]{1*\w}{\mH*\h}{\ww*\w}{0};
	\pgfmathsetmacro{\hh}{0}
	\foreach \x/\z in {
		3/4,
		1/8
	}{
		\pgfmathsetmacro{\hhh}{\hh+\x}
		\drawJob[\z]{\ww*\w}{\hh *\h}{\ww*\w +\w}{\hhh*\h};	
		\global\let\hh=\hhh
	}

	\draw[dashed] (-1*\w,5*\h) node[left]{$T$} -- (8*\w,5*\h); 
	\draw[dashed] (-1*\w,10*\h) node[left]{$2T$} -- (8*\w,10*\h); 
	\draw[decorate,decoration={brace,amplitude=4pt}] (0*\w,5*\h) -- (0*\w,9*\h) node[midway,left,xshift=-0pt,align=center]{$P \geq $};
	\draw[decorate,decoration={brace,amplitude=4pt}] (7*\w,4*\h) -- (7*\w,0*\h) node[midway,right,xshift=4pt,align=center]{$ =: P$};
	\end{tikzpicture}
\caption{The figure shows an example of the round robin scheduling approach. The classes are sorted and numbered regarding their total processing times. Further it is highlighted that the load $L'_i$ of a machine $i$, here $i = 1$ when removing the largest job $j$, here $j = 1$, satisfies $P = L'_{i+1} \geq \dots \geq L'_{m} \geq L'_1 \geq \dots L'_{i} \leq P$.} \label{fig:RoundRobin}
\end{figure}
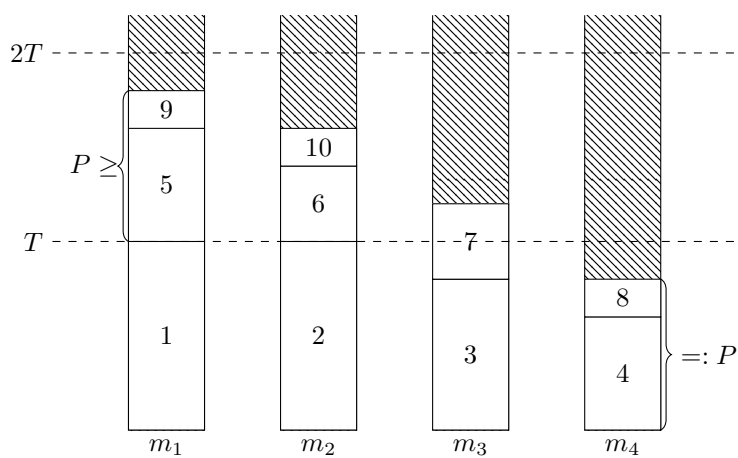
\end{center}

\begin{theorem} \label{t:ConstSplit}
The approximation algorithm above produces a solution $\sigma$ with $\mu(\sigma) \leq 2 \cdot \mu(\textsc{opt}(I))$ for the splittable case in time $\cO(n^2 \log(n))$.
 \end{theorem}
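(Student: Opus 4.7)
My plan is to establish three properties of the output schedule $\sigma$: feasibility, a makespan bound of $2\mu(\textsc{opt}(I))$, and a running time of $\cO(n^2\log n)$. Feasibility I would argue is essentially built into the binary search: once it terminates with a value $T$, the number of classes after splitting is at most $c\cdot m$, so the round robin step assigns at most $c$ different classes to any machine, and the pieces on each machine can be laid out sequentially because jobs may be split arbitrarily.

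The central step in the approximation analysis is to show that the value $T$ returned by the binary search satisfies $T\le\mu(\textsc{opt}(I))$. I would fix any optimal schedule and, for each class $u$, let $m_u$ denote the number of machines on which $u$ is present. Since each machine contributes at most $\mu(\textsc{opt}(I))$ units of processing time to class $u$, it follows that $m_u\ge\lceil P_u/\mu(\textsc{opt}(I))\rceil$, and the class-slot budget gives $\sum_u m_u\le c\cdot m$. Hence $\sum_u\lceil P_u/\mu(\textsc{opt}(I))\rceil\le c\cdot m$, which is exactly the feasibility condition checked by the algorithm; so $T=\mu(\textsc{opt}(I))$ passes the test and the binary search returns some $T\le\mu(\textsc{opt}(I))$. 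After splitting, every remaining class has load at most $T$, and $\sum_j p_j/m = LB \le T$. Applying Lemma~\ref{t:RoundRobin} to the round robin step then yields $\mu(\sigma)\le \sum_j p_j/m + \max_{u'}P_{u'}\le T + T = 2T \le 2\mu(\textsc{opt}(I))$.

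For the running time, Lemma~\ref{t:BinarySearch} handles the binary search in $\cO(C\log m)$ probes; each probe can be answered in $\cO(n)$ work to evaluate $\sum_u\lceil P_u/T\rceil$. Since $C\le n$, the binary search contributes $\cO(n^2\log n)$. Sorting and performing round robin I expect to fit within the same bound because each original class that is split produces sub-classes of equal load $T$ (plus at most one smaller residue), so only $\cO(n)$ distinct sizes must be processed and the assignment can be carried out in aggregate. The main obstacle I foresee is avoiding explicit bookkeeping over the $m$ machines, since $m$ need not be polynomial in $n$: this is handled by representing the round robin output succinctly as a class-wise pattern rather than enumerating all machines, consistent with the compact output-length convention announced in the introduction.
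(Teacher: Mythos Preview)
Your proposal is correct and mirrors the paper's proof closely: the same class-slot counting argument ($m_u\ge\lceil P_u/\mu(\textsc{opt}(I))\rceil$ and $\sum_u m_u\le cm$) shows that $\mu(\textsc{opt}(I))$ passes the feasibility test, hence the returned $T\le\mu(\textsc{opt}(I))$; then Lemma~\ref{t:RoundRobin} together with $LB\le T$ and $\max_{u'}P_{u'}\le T$ gives the $2T$ bound; and the running-time analysis, including the succinct handling of the many size-$T$ sub-classes when $m$ is large, is exactly what the paper does.
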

 \begin{proof}
By using the advanced binary search we proved in Lemma \ref{t:BinarySearch} that we check the relevant guesses.  In each iteration with guess $T$ we divide each class with $P_u > T$ into  $C_u = \lceil P_u / T \rceil$ new ones simply by cutting it into pieces of size $T$. Obviously $C_u$ new classes are sufficient. Each new class will have $P_{u'} \leq T$. Manipulating the instance in such a manner is legitimate. Firstly, if there is a feasible schedule with makespan $T$, then each class with $P_u > T$ will use at least $C_u = \lceil P_u / T \rceil$ class slots as we need at least this many machines to distribute its total load. Second, our algorithm allots each of the new classes and the original small ones as a whole, hence we use the absolute minimum of class slots any schedule with makespan at most $T$ could occupy. Thus we either obtain a feasible schedule as $c\cdot m$ class slots are sufficient or there is no schedule with makespan $T$ and we can for a certainty discard this guess.

If the number of class slots are sufficient, we allot the new classes and the original ones with $P_u \leq T$ using round robin. Note that this procedure is independent of the guess $T$. In the last step we then reassign the original classes via a simple mapping. This produces no error, as we have enough class slots and reinserting the original ones can solely decrease the number of used class slots. This occurs, when two new classes map to the same original class. Further as we use exactly the same space as reserved for the new class, the load is not increased.

We argued that we obtain a feasible schedule. Further applying Lemma \ref{t:RoundRobin} we get that the makespan has to be $\sum_{j=1}^n p_j/m + \max\sett{P_{u'}}{{u'}\in[cm]} \leq LB + T \leq T + T = 2T$ as the optimal makespan has to be at least the lower bound and the largest $P_{u'}$ in the manipulated instance is also $T$.

Regarding the running time, we first compute all $P_u$ in time $\cO(n)$ as we have to add the processing time of each job exactly once. Then we compute the upper and lower bound in time $\cO(n)$. By Lemma \ref{t:BinarySearch} the advanced binary search admits $\cO(C \log(m))$ many iterations. For each guess $T$ of the makespan we first handle the large classes $P_u \geq T$ by distributing each class greedily onto a new, unique class until the accumulated processing times reach $T$. Then a job is cut if necessary and placed into the next new class. The last class may have a load of at most $T$. As for each class we have to touch each job exactly once to distribute it, we touch exactly $n$ jobs, thus this step is computable in time $\cO(n)$. If the amount of original classes with $P_u \leq T$ and new classes is larger than $c \cdot m$, we can discard the current guess $T$ as it cannot be a feasible schedule. Otherwise, we lower the guess. After finding the lowest feasible guess on $T$, we sort all classes in time $\cO(cm \log(cm))$. In the next step we allot each class as a whole onto the machines in time $\cO(cm)$ by round robin. Lastly, we reassign the correct classes via a simple mapping to the corresponding jobs, again in time $\cO(cm)$. This results in an overall running time of 
\begin{align*}
&\cO(n) + \cO(C\log(m)) \cdot \cO(n) + \cO(n) + \cO(cm + cm \log(cm)) \\
& = \cO(n) + \cO(Cn \log(m)) + \cO(cm \log(cm)) \\
& = \cO(n^2 \log(m)) + \cO(nm \log(nm))
\end{align*}
as $n \geq C$ and  $n \geq c$ holds. To achieve a fully polynomial algorithm regarding the number $n$ we have to make a small case distinction on the number of machines. If $m \leq n$ then our algorithm directly leads to the desired running time of $\cO(n^2 \log(n))$. However, in the splittable case the number of machines can be much larger, even exponential in the number of jobs. We can handle this obstacle by carefully reformulating some steps of the algorithm as shown next.

A large number of machines is first problematic when we sort all classes. It may appear that by introducing the new ones, we end up with $\cO(cm)$ many. But most of them will have size $T$ due to the slicing of the original jobs. Solely $C$ many classes will have size smaller than $T$, one for each original class. Thus instead of saving and sorting all of them, we only do so for the $C$ many with $P_{u'} < T$. For the remaining ones we just store their number. Now by applying round robin, we just distribute the $C$ classes onto $C \leq n < m$ machines and put an arbitrary class with $P_{u'} = T$ on top. For the remaining ones we just save the number of machines which are filled with two classes of size $T$. The space has to be sufficient, as we are only allowed to place two classes with $P_{u'} = T$ on top of each other or $T$ is a wrong guess. Hence the algorithm and the output only use $C$ classes and machine configurations explicitly, while $m$ only appears logarithmic. Thus this algorithm is fully polynomial in $n$, \ie
\begin{align*}
&\cO(n) + \cO(C\log(m)) \cdot \cO(n) + \cO(n) + \cO(C \log(C) + \log(m)) \\
& = \cO(n^2 \log(m))
\end{align*}
\end{proof}

 \subsection*{Preemptive Case}
In the preemptive case we are also allowed to cut the jobs and place them onto different machines. But in addition we have to take care that no job is scheduled in parallel, \ie at the same time on different machines. Indeed we can use the same algorithm here as for the one for the splittable case with just two slight adaptions taking care of the additional constraint. Firstly, the lower bound $LB$ has to guarantee that $T \geq p_{\max}$, such that there is enough space to schedule each job sequentially (on different machines). Thus we compute the lower bound as $LB = \max\{p_{\max}, \sum_{j=1}^{n} p_j/m \}$. Secondly, we have to reschedule some jobs to make sure, that no job piece is scheduled in parallel. Remark that this can only happen while dividing large classes into $C_u$ sub-classes if a job is cut at $T$ and the remaining piece is of size $P_{u'} < T$. Otherwise a sub-class with $P_{u'} = T$ will be assigned by round robin onto the bottom of a machine due to the lower bound and thus it will not collide with its other part being at the top. Hence, we only have to repack the machines, if there is at least one new class with $P_{u'} = T$. The modified code to repack the jobs is presented below and will be executed after (*) in the Algorithm \ref{alg:AlgoSplit}. Further an example showing such a repacking is visualized in Figure~\ref{fig:Repacking}. 

 \begin{center}
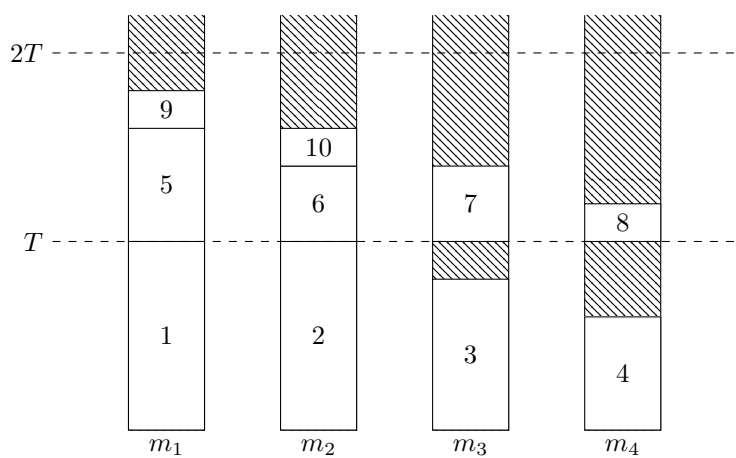
\begin{figure}
	\begin{tikzpicture}
	
	\pgfmathsetmacro{\w}{1}
	\pgfmathsetmacro{\h}{0.5}
	\pgfmathsetmacro{\mH}{11}
	
	\pgfmathsetmacro{\ww}{0}
	\drawMachine[$m_1$]{1*\w}{\mH*\h}{\ww*\w}{0};
	\pgfmathsetmacro{\hh}{0}
	\foreach \x/\z in {
		5/1,
		3/5,
		1/9
	}{
		\pgfmathsetmacro{\hhh}{\hh+\x}
		\drawJob[\z]{0*\w}{\hh *\h}{\w}{\hhh*\h};	
		\global\let\hh=\hhh
	}
	
	\pgfmathsetmacro{\ww}{2}
	\drawMachine[$m_2$]{1*\w}{\mH*\h}{\ww*\w}{0};
	\pgfmathsetmacro{\hh}{0}
	\foreach \x/\z in {
		5/2,
		2/6,
		1/10
	}{
		\pgfmathsetmacro{\hhh}{\hh+\x}
		\drawJob[\z]{\ww*\w}{\hh *\h}{\ww*\w +\w}{\hhh*\h};	
		\global\let\hh=\hhh
	}
	
	\pgfmathsetmacro{\ww}{4}
	\drawMachine[$m_3$]{1*\w}{\mH*\h}{\ww*\w}{0};
	\foreach \x/\xx/\z in {
		0/4/3,
		5/7/7
	}{
		\drawJob[\z]{\ww*\w}{\x *\h}{\ww*\w +\w}{\xx*\h};	
	}
	
	\pgfmathsetmacro{\ww}{6}
	\drawMachine[$m_4$]{1*\w}{\mH*\h}{\ww*\w}{0};
	\foreach \x/\xx/\z in {
		0/3/4,
		5/6/8
	}{
		\pgfmathsetmacro{\hhh}{\hh+\x}
		\drawJob[\z]{\ww*\w}{\x*\h}{\ww*\w +\w}{\xx*\h};	
		\global\let\hh=\hhh
	}

	\draw[dashed] (-1*\w,5*\h) node[left]{$T$} -- (8*\w,5*\h); 
	\draw[dashed] (-1*\w,10*\h) node[left]{$2T$} -- (8*\w,10*\h);
	\end{tikzpicture}
\caption{The figure shows the repacking for the preemptive case regarding the schedule of Figure \ref{fig:RoundRobin}. The approach is to shift the classes above the first one such that they start at $T$.} \label{fig:Repacking}
\end{figure}
\end{center}

\begin{algorithm}
\begin{center}
\fbox{\pseudocode[width = 9.5cm]{
\\
	\text{If there exists a class $i$ with $P_{u'} = T$:}\\
	\t[2] \text{For each machine:}\\
	\t[3] \text{Shift the schedule above the largest class, such that it starts at time $T$.} 
 	} }
 \end{center}
 \caption{Extension of the Algorithm \ref{alg:AlgoSplit} to be executed after (*) to solve the preemptive version of the CCS problem.}
 \end{algorithm}

 \begin{theorem}
The modified approximation algorithm above produces a solution $\sigma$ with $\mu(\sigma) \leq 2 \cdot \mu(\textsc{opt}(I))$ for the preemptive case in time  $\cO(n^2 \log(n))$.
 \end{theorem}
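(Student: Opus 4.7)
My plan is to piggy-back on the correctness argument for the splittable case (Theorem~\ref{t:ConstSplit}) and then account for the two modifications: the strengthened lower bound and the repacking step above the bottom class. First I would observe that the new lower bound $LB = \max\{p_{\max}, \sum_{j=1}^{n} p_j / m\}$ is still a valid underestimate of $\mu(\textsc{opt}(I))$, since every feasible schedule must process the longest job in full. Therefore Lemma~\ref{t:BinarySearch} carries over verbatim and the binary search returns a value $T \leq \mu(\textsc{opt}(I))$. The same cutting-into-$\lceil P_u/T \rceil$ argument as in Theorem~\ref{t:ConstSplit} shows that the class-slot count is feasible whenever $T$ does not undershoot the optimum, and round robin yields a pre-repacking per-machine load of at most $2T$ by Lemma~\ref{t:RoundRobin}.

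Next I would argue that the repacking preserves both the class constraint and the makespan bound $2T$. The shift only reorders jobs vertically on each machine, so the set of classes assigned to each machine is unchanged. Because the bottom class on a machine is the largest and has size at most $T$, while the total load of that machine is at most $2T$, the stack sitting above the bottom class has total length at most $T$; shifting it to start at time $T$ therefore lets it finish by time $2T$.

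The crux of the proof is verifying that no two pieces of the same job are scheduled in parallel. A split of a job can occur only at the boundary between two consecutive sub-classes $A_i, A_{i+1}$ of the same original class $U$ with $P_U > T$. Because round robin processes sub-classes in non-ascending order of $P_{u'}$, all full-size sub-classes (of size exactly $T$) are placed at the bottom of distinct machines in the first round; their total count is at most $m$ since $T \geq \sum_j p_j/m$. Each original class contributes at most one remainder sub-class (of size strictly less than $T$), which is placed above a bottom class in a later round. I would then distinguish two cases for a boundary-split job with piece lengths $\alpha$ on $A_i$ and $\beta$ on $A_{i+1}$: (a)~if both $A_i$ and $A_{i+1}$ are full-size and thus sit at the bottom of distinct machines, the invariant $\alpha + \beta \leq p_{\max} \leq T$ guarantees that the intervals $[T-\alpha, T]$ and $[0, \beta]$ are disjoint; (b)~if $A_{i+1}$ is the small remainder, then after the repacking its pieces start no earlier than $T$, while the preceding piece on the bottom of another machine ends exactly at $T$, so again no overlap occurs. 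The main obstacle I anticipate is case~(a): it is precisely the strengthened lower bound $T \geq p_{\max}$ that forces the two boundary pieces to fit into disjoint intervals of their respective bottom sub-classes.

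Finally, the running time matches the splittable algorithm: the additional repacking costs only one linear pass per machine, which is absorbed by the $\cO(n^2 \log(n))$ bound from Theorem~\ref{t:ConstSplit}, and the strengthened lower bound is computed in $\cO(n)$ by also tracking $p_{\max}$ alongside the total processing time.
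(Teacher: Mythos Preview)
Your proposal has the right overall structure and matches the paper's approach. In particular, your no-overlap argument is correct and in fact clearer than the paper's: the case distinction between (a) two full-size sub-classes at the bottom and (b) a full-size sub-class followed by the short remainder is exactly what is needed, and your use of $\alpha+\beta\le p_{\max}\le T$ in case~(a) is precisely the role the strengthened lower bound plays.

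There is, however, a genuine gap in your makespan-preservation step. You write that ``the bottom class on a machine is the largest and has size at most $T$, while the total load of that machine is at most $2T$, [so] the stack sitting above the bottom class has total length at most $T$.'' This implication is false as stated: if the bottom class has size $T/2$ and the total load is $2T$, the upper stack has size $3T/2$. Since the repacking shifts the upper stack on \emph{every} machine, not only on those whose bottom class has size exactly $T$, you must rule this out. The paper does so by a short contradiction argument that you are missing: suppose some machine $m_k$ has bottom class of size $<T$ and upper stack $>T$ after shifting. Repacking is only triggered when some sub-class of size exactly $T$ exists; that sub-class sits at the bottom of some machine $m_\ell$ which precedes $m_k$ in the round-robin order. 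By the round-robin monotonicity you already invoke for Lemma~\ref{t:RoundRobin}, the upper stack of $m_\ell$ is at least as large as that of $m_k$, hence also $>T$. But then $m_\ell$ has gap-free pre-repacking load $>2T$, contradicting Lemma~\ref{t:RoundRobin}. Add this argument and your proof goes through.

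One smaller omission: your running-time claim tacitly assumes $m\le n$. The paper makes explicit that in the preemptive case one may assume $m\le n$ (otherwise placing each job alone on its own machine is optimal, since $T\ge p_{\max}$), which is what turns the $\cO(nm\log(nm))$ term from the splittable analysis into $\cO(n^2\log n)$.
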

 \begin{proof}
 Since it was proven in the previous theorem that the algorithm produces a feasible solution bounded by $2T$, we now focus on the adaptions. As mentioned above, the new lower bound is $LB = \max\{p_{\max}, \sum_{j=1}^{n} p_j/m \}$ as apart from an equal distribution the makespan has also to be at least as large as the largest processing time, since we cannot process a job in parallel. This also implies, that each job will be cut by the algorithm at most once while introducing the new sub-classes.
 
The repacking will now guarantee that job parts of the same job will not be executed in parallel. As explained above, a job is only cut once at the height of $T$ if a class $u$ has $P_u > T$. Otherwise we do not have to repack the machines and the estimation from the splittable case holds. Thus assume we have a class $u$ with $P_u > T$, where by dividing it into $C_u$ new classes we cut one job. Now we assign the new classes via round robin and repack the machines. Afterwards we cannot have a collision between the parts of the same job, as either the cut job is placed at the bottom of a machine not intersecting with its other part at the top due to our definition of the lower bound. Or otherwise the job is scheduled above $T$, also not intersecting.

It remains to prove that while repacking we do not exceed a makespan of $2T$ by shifting jobs, which then start and thus end later. Suppose the contrary, we have a machine $m_k$ with a load $L_k > 2T$. Meaning, the classes have to be shifted and thus the largest class on that machine has a processing time smaller than $T$. Further the shifted jobs have a total processing time larger than $T$ to exceed $2T$ by starting at time $T$. As the first class on $m_k$ has a processing time smaller than $T$, the machine $m_\ell$ admitting a class with $P_{u'} = T$ has to be packed earlier via round robin. Thus the load on $m_\ell$ excluding the first class with processing time $T$ is also greater than $T$ as they have to be larger than the load on $m_k$ without the first class, see Lemma \ref{t:RoundRobin}. Thus we exceeded $2T$ on a machine without any gaps which is not possible by Lemma \ref{t:RoundRobin}.  

Regarding the running time it is not reasonable to have more machines than jobs as we have a lower bound of at least $p_{\max}$ and thus each job could simply be scheduled separately onto each machine obtaining an optimal schedule for all $m \geq n$ using only $n$ machines. Thus inserting $n$ instead of $m$ for the number of machines and different classes, we get a running time of
\begin{align*}
& \cO(n) + \cO(C\log(m)) \cdot  \cO(n) + \cO(nm \log(nm))\\
& = \cO(n^2 \log(n) + \cO(n^2 \log(n)) \\
& = \cO(n^2 \log(n))
\end{align*}
for the algorithm as the other parts stay the same regarding Theorem \ref{t:ConstSplit}. Remark that we could use a simple binary search instead of the advanced one here as the optimal makespan in the preemptive case is integral. However, it would not change the running time as a normal binary search would need $\cO(\log(n \cdot p_{\max})) = \cO(\log(cn2^n)) = \cO(n \log(n))$ many iterations as we can estimate the upper bound with each job of size $p_{\max}$ being placed in the same machine.
 \end{proof}
 
 \subsection*{Non-Preemptive Case}
Unlike the other cases we are not allowed to cut the jobs in the non-preemptive version of this problem. So, first, we also need a lower bound of $LB = \max\{p_{\max}, \sum_{j=1}^{n} p_j/m \}$ to guarantee that the largest job fits as well as the overall area is sufficient. Further handling the large classes is much trickier as we have to calculate a lower bound $C_u$ on the necessary class slots for class $u$ more carefully. Hence, we do not only compute the needed classes regarding the area, \ie  $C_{u}^1 = \lceil P_u / T \rceil$, we also compute a lower bound $C_u^2$ on needed class slots for large jobs with $p_j > (1/3)T$. Then we take the maximum of both values, \ie $C_u = \max\{C_u^1, C_u^2\}$. To compute $C_u^2$ we first count the number of jobs $k_u$ with $p_j > (1/2)T$. These jobs have to be on different machines to not exceed $T$ in an optimal schedule. On top of them we would like to place as many jobs with $(1/2) T \geq p_j > (1/3)T$ as possible. This problem can be solved greedily by assigning the largest fitting job of this size on top of the jobs with $p_j>(1/2)T$. Now dividing the number $\ell_u$ of still unassigned jobs with $p_j > (1/3)T$ by $2$ we get a lower bound on the number of machines needed to schedule them. Summing up, we compute $C_u^2 = k_u + \ell_u/2$. Finally, taking the maximum of the needed area $C_u^1$ and the needed number of machines $C_u^2$ to schedule the large jobs, we get $C_u$, a lower bound on the number of class slots used to schedule this class in an optimal schedule. Now we can divide the jobs of class $u$  into $C_u$ classes by using the LPT algorithm. Roughly speaking, LPT assigns the jobs in decreasing order regarding their processing time $p_j$, such that a job is placed onto the class with the lowest load at this moment. Further we will proceed with a standard binary search between the upper and lower bound, as the optimal makespan has to be integral and we cannot define the borders in the same way we did before. The rest of the algorithm stays the same yielding the following theorem:
 
 \begin{theorem} \label{t:ConstNPreem}
The approximation algorithm above with a modified lower bound, a new calculation of $C_u$ and LPT as a subroutine produces a solution $\sigma$ with $\mu(\sigma) \leq (7/3) \cdot \mu(\textsc{opt}(I))$ for the non-preemptive case in time $\cO(n^2 \log^2(n))$.
 \end{theorem}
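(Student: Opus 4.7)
The plan mirrors Theorem~\ref{t:ConstSplit}, reusing the binary-search/round-robin skeleton of Algorithm~\ref{alg:AlgoSplit}. I would first justify the adjusted lower bound $LB = \max\{p_{\max},\, \sum_{j} p_j/m\}$, which is needed because a non-splittable job must fit entirely on one machine, so $p_{\max}$ is a trivial lower bound on the optimum; a standard binary search between $LB$ and $UB$ suffices since the optimum is integral. The two substantive changes relative to the splittable case are the refined lower bound $C_u = \max\{C_u^1, C_u^2\}$ on the number of class slots for class $u$, and the use of LPT to split each class into sub-classes.

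The first key step is to verify that $C_u$ is a valid lower bound on the class slots any $T$-feasible schedule spends on class $u$. The area bound $C_u^1 = \lceil P_u/T\rceil$ is immediate. For $C_u^2$: every job with $p_j > T/2$ must occupy its own machine, so at least $k_u$ slots are used by these big jobs. A medium job ($T/3 < p_j \leq T/2$) can sit on top of a big only if their sum is at most $T$, and the greedy pairing in the algorithm counts the maximum number of mediums that can be saved in this way. The leftover $\ell_u$ mediums cannot share a slot three at a time (their sum would exceed $T$), so they need at least $\lceil \ell_u/2\rceil$ further slots, giving $C_u^2 = k_u + \lceil \ell_u/2\rceil$.

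The second key step is to show that LPT on class $u$'s jobs with target $C_u$ produces sub-classes of load at most $(4/3)\,T$. Let $j^*$ denote the last job placed on the maximum-load sub-class. If $p_{j^*} \leq T/3$, the classical LPT estimate gives $L \leq P_u/C_u + p_{j^*} \leq P_u/C_u^1 + T/3 \leq T + T/3 = (4/3)\,T$. If $p_{j^*} > T/3$, then all jobs placed so far are large, and I would combine Graham's $(4/3)$-approximation bound for LPT with a constructive argument that an optimal partition of class $u$ into $C_u$ slots has makespan at most $T$: place bigs on $k_u$ distinct slots, greedily pair $q$ mediums with bigs (load $\leq T$), fill $\lceil \ell_u/2\rceil$ further slots with two mediums each, and distribute the small jobs ($p \leq T/3$) into the remaining capacity using that the total area fits because $C_u \geq C_u^1 \geq P_u/T$. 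Proving that this constructive partition really stays within capacity $T$ on every slot, together with the LPT factor, is the main technical obstacle.

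Having $\max_{u'} P_{u'} \leq (4/3)\,T$, feasibility follows by exactly the same $c\cdot m$ class-slot check as in Theorem~\ref{t:ConstSplit}, and Lemma~\ref{t:RoundRobin} applied to the sub-classes yields machine makespan at most $\sum_{u'} P_{u'}/m + \max_{u'} P_{u'} \leq LB + (4/3)\,T \leq T + (4/3)\,T = (7/3)\,T$, which gives the claimed ratio. For the running time, $LB \geq p_{\max}$ lets us assume $m \leq n$, a standard binary search over integer makespans contributes the extra logarithmic factor in $p_{\max}$ compared with the splittable case, and within each iteration sorting plus LPT over all classes costs $\cO(n\log n)$ while round-robin adds another $\cO(n\log n)$, which combine to the claimed $\cO(n^2 \log^2 n)$ bound.
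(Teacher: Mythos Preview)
Your overall plan mirrors the paper's: establish that $C_u=\max\{C_u^1,C_u^2\}$ lower-bounds the slots any $T$-bounded schedule uses for class~$u$, show LPT on $C_u$ sub-classes yields load $\le(4/3)T$ per sub-class, and conclude via round robin (Lemma~\ref{t:RoundRobin}). The first step and the final assembly are fine; the gap is in case~2 of your LPT analysis.

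In case~2 you invoke Graham's $(4/3)$ bound on the full job set of class~$u$, which requires the optimal $C_u$-partition of \emph{all} class-$u$ jobs to have makespan $\le T$. You propose to certify this by a constructive partition whose last step ``distribute[s] the small jobs into the remaining capacity using that the total area fits''. Area sufficiency does \emph{not} imply packability here: after steps 1--3 a slot may carry load close to~$T$ and hence have residual capacity below the size of a small job. For instance, with $\{0.51T,0.51T,0.3T,0.3T,0.3T\}$ one has $C_u^1=C_u^2=2$, yet every $2$-partition has makespan $>T$ (the best is $\{0.51T,0.51T\}$ vs.\ $\{0.3T,0.3T,0.3T\}$ with makespan $1.02T$). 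So the ``main technical obstacle'' you flag is not just hard---the claim is false as stated.

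The repair is short and is essentially what the paper does. In case~2 the maximum-load sub-class contains only jobs of size $>T/3$; since LPT processes jobs in decreasing order, its load equals the LPT makespan on the big/medium jobs alone. Your constructive packing restricted to steps 1--3 (no smalls) shows these jobs fit into $C_u^2\le C_u$ groups of load $\le T$, so the optimal big/medium partition has makespan $\le T$. As every big/medium job exceeds one third of this optimum, the easy case of Graham's analysis gives that LPT is \emph{optimal} on big/medium jobs, hence the case-2 max load is $\le T\le(4/3)T$. Equivalently (the paper's phrasing): after LPT has placed all big/medium jobs no sub-class exceeds~$T$, so only a single small job can push a sub-class above~$T$, by at most $T/3$.
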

 \begin{proof}
In Theorem \ref{t:ConstSplit} we already argued the correctness of the algorithm. Thus it remains to show, that the new calculation of $C_u$ and the LPT schedule as a subroutine are correct yielding the desired approximation ratio. When computing $C_u$  we first distribute the $k_u$ jobs with $p_j > (1/2)T$ over the classes. Clearly, if there is a packing with makespan $T$, then these jobs cannot be packed onto the same machine. Further, to compute a more precise lower bound on the number of needed class slots, we also distribute the remaining jobs with $p_j > (1/3)T$ onto as few machines as possible. Following the same argument as before, three of these jobs cannot be packed onto a single machine. Thus dividing these jobs by $2$ yields the correct number of needed classes. But some of these jobs could be packed on top of jobs with $p_j > (1/2)T$. Thus we first allot as many of them as possible by greedily assigning the largest fitting one onto the jobs with $p_j > (1/2)T$. This is optimal regarding the number of packed large jobs, as at most one fits on top. Furthermore, when we use the largest possible job, it could only be replaced by a smaller one. But exchanging these two jobs will only lead to a larger load on the other machine. Thus is can only worsen the distribution. Hence by our approach we pack as many jobs as possible on top of the jobs with $p_j > (1/2)T$. The remaining $\ell_u$ ones yield the number of additional machines, \ie class slots, to pack the large items of this class by dividing their number by $2$. Now taking the maximum of needed machines $C_u^2 = k_u + \ell_u/2$ to pack the large jobs and the area $C_u^1 = \lceil P_u / T \rceil$, we get a sufficient good lower bound on the number $C_u$ of needed class slots in an optimal schedule. Finally using the $LPT$ algorithm, we get a schedule with makespan at most $T + (1/3)T$, as the area is sufficient and we over pack each class with at most one job $j$ with $p_j \leq (1/3)T$. This can be seen, as the LPT schedule first distributes the jobs with $p_j > (1/2)T$ as they are the largest ones. These will be distributed over $C_u$ and as due to its calculation $C_u \geq k_u$. Then the jobs with $(1/2)T \geq p_j > (1/3)T$ are placed. Here we first distribute them onto the remaining empty machines. Then we place them on top of other already assigned jobs $j$ with $(1/2)T \geq p_j > (1/3)T$ and finally on top of the already assigned jobs processing time larger than $(1/2)T$. Since we calculated the number to assign them without over packing a machine, solely jobs with $p_j \leq (1/3)T$ will fill a machine above $T$. Further at most one of these jobs exceeding the makespan is packed onto each machine as otherwise the total area would not be sufficient. Proceeding with the algorithm we thus get an overall makespan of $\sum_{j=1}^n p_j/m + \max\sett{P_{u'}}{{u'}\in[cm]} \leq LB + (4/3)T \leq T+ (4/3)T = (7/3)T$ yielding the desired approximation ratio. 

Regarding the running time, calculating $C_u$ can be done in time $\cO(n \log(n))$ as we have to sort and then distribute the jobs of each class. Using LPT also takes time $\cO(n \log(n))$. The standard binary search takes time $\cO(\log(n \cdot p_{\max}) = \cO(\log(n2^2)) = \cO(n \log(n))$ due to our estimation of the lower bound. Since the other parts stay the same, we get an overall running time of $\cO(n) + \cO(n\log(n)) \cdot \cO(n \log(n)) + \cO(nm \log(nm)) = \cO(n^2 \log^2(n))$.
 \end{proof}

 
\section{PTAS for CCS} \label{CCS:PTAS}

We have seen how to solve different versions of the Class Constrainted problem efficiently and simple if we are satisfied with a quality of~2 or $7/3$ respectively. Now, we aim to improve this factor up to an additive term of $\epsilon \in (0, 1]$ for each case at the cost of more a complex procedure. We do so by introducing three PTASs each computing a solution $\sigma_P$ arbitrarily close to the optimal makespan, i.e., $\mu(\sigma_P) \leq (1+ \epsilon) \cdot \mu(\textsc{opt}(I))$. 
The results all have a similar structure: We assume that there is some accuracy parameter $\delta > 0$ with $1/\delta\in\ints$ depending on $\epsilon$ that is specified concretely for each sub-case. Assume that a guess $T$ on the optimal makespan is given. We design a procedure that computes a schedule with makespan $(1 + \cO(\delta))T$ or verifies that a solution with makespan $T$ does not exists. Embedding this in a binary search to find the correct guess on the makespan yields the PTAS. The idea to solve the problem for fixed guesses on the makespan instead of solving the minimization problem directly was introduced by Hochbaum and Shmoys \cite{hochbaum1987using}. 

In each of the cases, we first simplify the instance using grouping and rounding techniques.
In particular, we call a class $u \in [C]$ \emph{large} if each job of class $u$ has a processing time bigger than $\delta T$.
On the other hand, we call $u$ \emph{small} if there is exactly one job with class $u$ and this job has a processing time of at most $\delta T$. Grouping will then guarantee that each class $u$ is either small or large and we set $\xi_u = 1$ in the former and $\xi_u = 0$ in the latter case. Furthermore, rounding ensures that there are only few distinct processing times.
For the splittable and preemptive case, we have to prove additionally the existence of certain well-structured schedules. These can then be modeled using \nfold{} IPs. In the design of the \nfold{}s, we adapt and extend the so-called module configuration IP (MCIP) introduced in \cite{haessler1991cutting} to our needs. Finally, we have to prove that the solution of the \nfold{} indeed can be used to construct a feasible schedule with the desired approximation ratio.

Most effort and new ideas incorporate for these results are the techniques and observation for grouping the items accordingly and to prove the existence of certain well-structured schedules. Only then it was possible to design the corresponding \nfold{}s and finally to show that the solution of the IP can indeed be used to construct a feasible schedule with the desired approximation ratio. In the following, we consider the splittable, non-preemptive and preemptive case in this order.

\subsection{Splittable Case}
We begin with the splittable case as it appears to be the easiest of the three problems. However, there is some extra difficulty arising from the fact that the number of machines can be exponential. In that case we manage to lower the dependency on $m$ to logarithmic terms using some insights of the structure and extending our algorithm accordingly. But for now assume that $m$ can be bounded polynomial by $n$. The other case will be handled afterwards in a separate section. 

\subparagraph{Preprocessing.}
In the splittable case, we can simply group all jobs belonging to a class $u\in[C]$ into one job with processing time $p_u = \sum_{j\in J, c_j =u}p_j$.
If we have $p_u > \delta T$, the class $u$ is large and we set $\xi_u = 0$. Otherwise $u$ is small and we set $\xi_u = 1$.
It is easy to see that the problem is equivalent since the newly created jobs still can be split arbitrarily and behave the same concerning the class constraints.

We round up the processing times as follows. Let $j$ be a job. If $c_j$ is a large class, we set $p'_j = \ceil{p_j/(\delta^2 T)}\delta^2 T$. If not, $c_j$ is a small class and we set $p'_j = \ceil{p_j/(\delta^2 T/c)}\delta^2 T/c$.
Furthermore, we scale the makespan bound $T$ and the processing times by $c/(\delta^2 T)$ to ensure integral values, that is, we have $\delta^2 T/c  = 1$ afterwards. The resulting instance is denoted by $I'$, the job set by $J'$ and the processing time and class of job $j\in J'$ by $p'_j$ and $c'_j$, respectively. We also write $p'_u$ to denote the processing time of the single job of class $u\in[C]$.

\begin{lemma}
If there is a schedule with makespan $T$ for instance $I$, then there is also a schedule with makespan $(1+2\delta)T$ for $I'$.
\end{lemma}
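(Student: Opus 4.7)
The plan is to start from an optimal splittable schedule for $I$ of makespan $T$, reinterpret it after the grouping step, and then reuse the same fractional assignment on the rounded instance $I'$, arguing that the rounding inflates the load on each machine by at most $2\delta T$.

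First, I would handle the grouping step. Since every job of class $u$ can be split arbitrarily and the class constraint treats two jobs of the same class identically, any schedule of makespan $T$ for $I$ can be transformed into one of makespan $T$ for the grouped instance by simply merging, on each machine, all fragments that belong to the same class into a single fragment of the grouped job with processing time $p_u = \sum_{j:\,c_j = u} p_j$. No machine's load changes, and no machine gains a class. Thus it suffices to start with a schedule of the grouped instance.

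Second, for the rounding step, I would let $\lambda_{u,i} \in [0,1]$ denote the fraction of the class-$u$ job that is placed on machine $i$, with $\sum_i \lambda_{u,i} = 1$ and load $\sum_u \lambda_{u,i} p_u \leq T$ on every machine. I keep the same $\lambda_{u,i}$ but replace $p_u$ by $p'_u$. The class constraint is still satisfied, since the set of classes present on any machine does not change. The load increase on machine $i$ is $\sum_u \lambda_{u,i}(p'_u - p_u)$, and I split this sum according to whether $u$ is large or small.

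For the small classes, the rounding rule gives $p'_u - p_u \leq \delta^2 T/c$, and at most $c$ classes appear on any machine, so the contribution from small classes is bounded by $c \cdot \delta^2 T/c = \delta^2 T \leq \delta T$. For the large classes, the rounding rule only yields $p'_u - p_u \leq \delta^2 T$, and a naive summation over the (possibly many) large classes on machine $i$ would give $c \cdot \delta^2 T$, which is too weak for large $c$. The key observation is that every large class satisfies $p_u > \delta T$; hence the large-class part of the load alone satisfies $\delta T \cdot \sum_{u \text{ large}} \lambda_{u,i} \leq \sum_{u \text{ large}} \lambda_{u,i} p_u \leq T$, so $\sum_{u \text{ large}} \lambda_{u,i} \leq 1/\delta$. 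Multiplying by $\delta^2 T$ then gives an increase of at most $\delta T$ from large classes. Combining the two estimates, the load on each machine in $I'$ is at most $(1+2\delta)T$, which proves the lemma.

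The main technical point that needs emphasis is precisely this large-class bound: the argument hinges on trading off the number of large classes against the definitional lower bound $p_u > \delta T$ to replace $c$ with $1/\delta$ in the estimate. Everything else is a direct computation, and no reshuffling of the fractional assignment is needed.
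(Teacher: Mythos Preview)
Your proof is correct and follows essentially the same approach as the paper. The only cosmetic difference is in the large-class estimate: the paper phrases it multiplicatively (each large $p_u$ grows by at most a factor $1+\delta$, so the large-class load on a machine grows by at most $\delta T$), whereas you phrase it additively by bounding the total fractional mass $\sum_{u\text{ large}}\lambda_{u,i}\le 1/\delta$ and multiplying by $\delta^2 T$; these are the same computation.
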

\begin{proof}
A schedule with makespan $T$ for $I$ directly induces a schedule with the same makespan for the instance with the grouped jobs and reordered classes.
To realize the increased processing times, we may distribute the increase proportionally to its job pieces.
Note that the jobs belonging to large classes are increased at most by a factor of $(1+\delta)$ by the rounding procedure as each of them had a size of at least $\delta T$ before.
Hence, the load on each machine due to such jobs may increase at most by this factor. 
Furthermore, there can be at most $c$ job pieces belonging to small classes scheduled on each machine, and therefore the increase due to small jobs is upper bounded by $c\cdot \delta T/c = \delta T$.
\end{proof}

\subparagraph{Well-Structured Schedule.}
In the splittable case, we call a schedule well-structured if the following holds:
The size of each split piece of a job belonging to a large class is at least $\delta T$ and an integer multiple of $\delta^2 T$. 
Furthermore, jobs belonging to small classes are not split at all.
\begin{lemma}\label{lem:splittable_well-structured}
If there is a schedule with makespan $T'$ for instance $I'$, then there is also a well-structured schedule with makespan at most $T' + 2\delta T$ for $I'$.
\end{lemma}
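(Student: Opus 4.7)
My plan is to transform a given makespan-$T'$ schedule $\sigma$ into a well-structured one in two successive stages, each contributing at most $\delta T$ to the makespan. \textbf{Stage 1} unsplits every small class: since each small class $u$ has a single job of size $p'_u \le (1+\delta/c)\delta T$, I would consolidate its pieces onto a single target machine chosen from the set of machines currently hosting a piece of $u$, which preserves the class-slot constraint on the target. \textbf{Stage 2} then reshapes the pieces of each large class so that every remaining piece is at least $\delta T$ and an integer multiple of $\delta^2 T$.

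For Stage 1, I would interpret the existing piece fractions $p_{i,u}/p'_u$ as a fractional assignment of small classes to machines and round this assignment integrally. Since the support of $u$ is respected by construction and the fractional small-class load on any machine is at most $T'$, a Shmoys--Tardos style rounding guarantees that each machine's total load grows by at most the size of one small job, i.e.\ at most $(1+\delta/c)\delta T \le \delta T + \delta^2 T/c$, which fits within the Stage~1 $\delta T$-budget up to a lower-order correction. For Stage 2, for each large class $u$ I would first eliminate any piece of size strictly less than $\delta T$ by transferring that entire piece into another machine currently hosting $u$; such a destination always exists while $u$ is still split because $p'_u > \delta T$, and because the destination already carries $u$, the transfer introduces no new class slot. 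I would then round piece sizes to multiples of $\delta^2 T$ by absorbing the fractional remainders into a single designated piece of each class, which is always possible since $p'_u$ itself is already a multiple of $\delta^2 T$.

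The main obstacle I foresee is the per-machine load accounting in Stage 1: a naive bound charges up to $c$ small jobs to a single machine, yielding an unacceptable $\Theta(c\delta T)$ blow-up. The key is to route consolidations so that each machine absorbs at most one ``extra'' small job beyond its original fractional load, which is precisely what the Shmoys--Tardos rounding (or an equivalent matching/exchange argument along the bipartite support graph of small classes versus machines) delivers. A secondary subtlety in Stage 2 is ensuring that the merging of sub-$\delta T$ large-class pieces does not cascade into a large per-machine load increase; I would handle this by always routing into the largest existing piece of the same class, so that each machine absorbs at most one such transfer per class and the total contribution stays within the Stage~2 $\delta T$ budget.
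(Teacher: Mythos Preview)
Your Stage~1 is sound and essentially coincides with what the paper does for small classes: the Lenstra--Shmoys--Tardos rounding, restricted to the support of the fractional schedule, places each small job integrally while raising any machine's load by at most one small-job size.

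Stage~2, however, has a genuine gap in the load accounting. Your claim that ``each machine absorbs at most one such transfer per class'' does not follow from always routing into the largest piece. If class $u$ has one piece of size $\delta T$ on machine $i$ and $k$ pieces each of size just below $\delta T$ on other machines, all $k$ of them are sub-$\delta T$ and all get routed into $i$, so $i$'s load grows by roughly $k\delta T$ from this class alone. Even granting the ``one transfer per class'' claim, a single machine may host up to $c^* = \Theta(1/\delta)$ distinct large classes, so the aggregate increase on that machine would be $\Theta(1/\delta)\cdot\delta T = \Theta(T)$, not $\delta T$. The same problem recurs in your rounding step: absorbing all fractional remainders of a class into one designated piece can add up to $(k-1)\delta^2 T$ on one machine, with $k$ unbounded.

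The paper sidesteps these difficulties by never doing Stage~2 as a separate reshaping pass. Instead, for each large-class job $j$ it \emph{first} partitions $p'_j$ deterministically into $\lfloor p'_j/(\delta T)\rfloor$ parts, each of size in $[\delta T, 2\delta T)$ and already an integer multiple of $\delta^2 T$ (this is where the prior rounding of $p'_j$ is used). It then runs a \emph{single} Lenstra--Shmoys--Tardos rounding over all these parts together with the small-class jobs, constrained to the support of the given fractional schedule. The load increase on any machine is bounded by the maximum part size, which is at most $2\delta T$, and the support constraint ensures no new class is introduced on any machine. In short, the paper achieves both the ``pieces $\ge \delta T$'' and the ``multiple of $\delta^2 T$'' properties \emph{before} rounding, by how it chooses the parts, rather than trying to repair piece sizes after the fact.
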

\begin{proof}
Let there be a schedule with makespan $T'$ for instance $I'$ and $j\in J'$.
If $c_j$ is a large class, we divide $j$ into $n_j := \floor{p'_j/(\delta T)}$ many parts.
The size $s_{j,\ell}$ of the $\ell$-th part for each $\ell\in[n_j - 1]$ is $\delta T$ and we set $s_{j,n_j} = p'_j - \sum_{\ell = 1}^{n_j-1}s_{j,\ell} \in [\delta T, 2\delta T)$.
Note that due to the rounding, all the parts have a size that is an integer multiple of $\delta^2 T$.
The schedule for $j$ translates into a schedule for the job parts in a straight-forward fashion.
If $c_j$ is a small class, we have just one job part given by the whole job and therefore set $n_j = 1$ as well as $s_{j,1} = p'_j$.
Now, let $x^*_{(j,\ell), i}$ be the fraction of the $\ell$-th part of job $j$ that is assigned to machine $i$ in the given schedule.
Furthermore, let $z_{(j,\ell), i}\in \set{0,1}$ be equal to $1$ if some piece of the $\ell$-th part of job $j$ is assigned to machine $i$ and equal to $0$ otherwise.
It is easy to verify that $(x^*_{(j,\ell), i})$ is a feasible solution of the following LP:
\begin{align}
\sum_{j\in J',\ell\in[n_j]} s_{j,\ell}x_{(j,\ell),i} & \leq T'  & \forall i\in M \label{eq:splittable_structured_schedule_load}\\
\sum_{i\in M} x_{(j,\ell),i} & = 1  & \forall j\in J',\ell\in [n_j] \label{eq:splittable_structured_schedule_jobs}\\ 
0 \leq & x_{(j,\ell),i} \leq z_{(j,\ell), i} & \forall i\in M, j\in J',\ell\in [n_j]\label{eq:splittable_structured_schedule_bounds}
\end{align}
Employing a classical rounding result by Lenstra et al.~\cite{lenstra} yields a rounded solution $(\bar{x}_{(j,\ell), i})$ such that $\bar{x}_{(j,\ell), i}\in\set{0,1}$ holds, (\ref{eq:splittable_structured_schedule_jobs}) and (\ref{eq:splittable_structured_schedule_bounds}) are satisfied, and furthermore we have $\sum_{j\in J',\ell\in[n_j]} s_{j,\ell}\bar{x}_{(j,\ell),i}  \leq T' + \max_{j\in J',\ell\in[n_j]}s_{j,\ell} \leq T' + 2\delta T$ for each $i\in M$.
The rounded solution directly yields a well-structured schedule for $I'$ with makespan at most $T' + 2\delta T$.
\end{proof}

\subparagraph{Setting up the \nfold{}.}
Taking the above steps into considerations, we set $\bar{T} = (1+4\delta) =(1+\cO(\delta))T$ and search for a well-structured schedule with makespan $\bar{T}$ via an \nfold{} IP.
Following the module configuration framework \cite{haessler1991cutting} in which \emph{modules} are used to cover the basic objects we design our \nfold{}. That is, basic objects correspond to jobs and configurations in turn are used as modules.
In this context, we define the set of modules $\mathcal{M}$ to be the set of possible split sizes of jobs from large classes in a well-structured schedule with makespan $\bar{T}$, that is, $\mathcal{M} = \sett[\big]{\ell \delta^2 T}{ \ell \in\set{1/\delta,\dots,\bar{T}/(\delta^2 T)}}$.
A configuration $K\in \ints_{\geq 0}^{\mathcal{M}}$ is a multiplicity vector of modules and its size $\Lambda(K)$ is given by $\sum_{q\in\mathcal{M}} K_q q$.
Intuitively, each module in a configuration should belong to a distinct class and $\norm{K}_1 = \sum_{q\in\mathcal{M}} K_q$ corresponds to the number of class slots used in the configuration.
We consider the set $\mathcal{K}$ of configurations $K$ with $\sum_{q\in\mathcal{M}} K_q q \leq \bar{T}$ and $\norm{K}_1 \leq c$ and denote the set of configuration sizes as $\Lambda(\mathcal{K})$.
Let $K\in \mathcal{K}$. 
Because $q\geq \delta T$ for each $q\in\mathcal{M}$, we know that $\norm{K}_1\leq \bar{T}/\delta T =\cO(1/\delta) $.
We set $c^* = \min\set{\bar{T}/\delta T, c}$, and for each $h\in\Lambda(\mathcal{K})$ and $b\in[c^*]$, we define $\mathcal{K}(h,b) = \sett{K\in\mathcal{K}}{\Lambda(K) = h, \norm{K}_1 = b}$.

We introduce three types of variables each of which is duplicated for each class $u\in [C]$ corresponding to the blocks of the \nfold{}. Remark that the duplication has no meaning itself as we always consider all duplicates simultaneously. It is solely used to obtain the \nfold{} structure. Let $u\in[C]$ be a class.
We have a variable $y^u_q\in\set{0,\dots, m\bar{T}/(\delta T)}$ for each module $q\in \mathcal{M}$ indicating how often $q$ is chosen to cover the job of class $u$.
Moreover, we introduce a variable $x_K^u \in\set{0,\dots, m}$ for each configuration $K\in \mathcal{K}$. 
We use the above variables to handle the assignment of large classes.
To deal with the small classes, we have binary variables $ z^u_{h,b}\in\set{0,1}$ for each $h\in\Lambda(\mathcal{K})$ and $b\in[c^*]$ which are used to decide whether the class is assigned to a machine on which $b$ job pieces with overall size $h$ belonging to large classes are scheduled.
The actual schedule of the small classes will be determined using the round robin procedure, which we described in the previous section, for each size $h$ and number of class slots $b$.
The \nfold{} has the following constraints:
 \begin{align*}
\tag{0}
\label{c0}
&\sum\limits_{u=1}^C \sum\limits_{K \in \mathcal K} x^u_K = m\\
\tag{1}
\label{c1}
&\sum\limits_{u=1}^C \sum\limits_{K \in \mathcal K} K_q  x^u_K = \sum\limits_{u=1}^C  y^u_{q} & \forall q \in \mathcal M \\
\tag{2}
\label{c2}
&\sum\limits_{u=1}^{C} z^u_{h,b} + b  \sum\limits_{u=1}^C \sum\limits_{K \in \mathcal K(h,b)} x^u_K \leq c  \sum\limits_{u=1}^C \sum\limits_{K \in \mathcal K(h,b)} x^u_K &\forall h \in \Lambda(\mathcal{K}),b \in [c^*]\\
\tag{3}
\label{c3}
&\sum\limits_{u=1}^{C} p'_u  z^u_{h,b} + h  \sum\limits_{u=1}^C \sum\limits_{K \in \mathcal K(h,b)} x^u_K \leq \bar{T} \sum\limits_{u=1}^C \sum\limits_{K \in \mathcal K(h,b)} x^u_K &\forall h \in \Lambda(\mathcal{K}),b \in [c^*]\\
\tag{4}
\label{c4}
&\sum\limits_{q\in\mathcal{M}} q y^u_q = (1 - \xi_u)p'_u &\forall u \in [C]\\
\tag{5}
\label{c5}
&\sum\limits_{h\in \Lambda(\mathcal{K})} \sum\limits_{b\in [c^*]} z^u_{h,b} = \xi_u & \forall u \in [C]
\end{align*}
Constraint~(\ref{c0}) guarantees that we choose the correct number of configurations.
The next constraint, namely Constraint~(\ref{c1}) , is satisfied if the chosen configurations cover the chosen modules; and due to Constraint~(\ref{c4}) the chosen modules cover the job of a class if that class is large.
If a class is small, on the other hand, Constraint~(\ref{c4}) and (\ref{c5}) ensure that no modules are chosen for this class and that the job of this class is assigned to exactly one type of configuration, respectively.
Lastly, Constraint~(\ref{c2}) and (\ref{c3}) make sure that there is a proper amount of space and class slots for the small classes left.
It is easy to see that the last two constraints are locally uniform and the remaining ones are globally uniform.
To apply Theorem \ref{nfold}, we still have to state in which direction we aim to optimize. 
However, as we are only aiming for a feasible makespan for the given guess $T$, we can set the objective function to zero. 

\begin{lemma}\label{lem:splittable_nfold_correct}
If there is a well-structured schedule with makespan $\bar{T}$ for instance $I'$, then there is also a solution to the above \nfold{} IP.
\end{lemma}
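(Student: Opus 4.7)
The plan is to construct a feasible IP assignment by reading it directly off the well-structured schedule $\sigma$, machine by machine and class by class. A key preliminary observation is that every constraint involving $x^u_K$ only touches its sum $\sum_u x^u_K$, so how the total count of machines using a given configuration $K$ is partitioned among the classes $u$ carries no semantic weight and is forced only by the $N$-fold block structure; accordingly I will simply put the entire count into block $u = 1$.

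For each machine $i$, well-structuredness guarantees that every large-class piece on $i$ has a size in $\mathcal{M}$, so these pieces define a configuration $K_i \in \ints_{\geq 0}^{\mathcal{M}}$ with $(K_i)_q$ equal to the number of pieces of size $q$ on $i$. The makespan bound gives $\Lambda(K_i) \leq \bar{T}$ and the class constraint gives $\|K_i\|_1 \leq c$, hence $K_i \in \mathcal{K}$. Let $m_K$ be the number of machines $i$ with $K_i = K$. I would set $x^1_K = m_K$, $x^u_K = 0$ for $u > 1$, and let $y^u_q$ equal the number of pieces of size $q$ into which $\sigma$ splits the job of the large class $u$ (and $y^u_q = 0$ for small $u$). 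Then Constraint~(\ref{c0}) reduces to $\sum_K m_K = m$; Constraint~(\ref{c4}) for large $u$ reads $\sum_q q\, y^u_q = p'_u$ and holds because the pieces reconstruct the whole job; and Constraint~(\ref{c1}) becomes $\sum_K K_q m_K = \sum_u y^u_q$, where both sides count the total number of size-$q$ pieces appearing in $\sigma$, grouped respectively by machine and by class.

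For the small classes, each single job is placed without splitting on one machine $i$ in $\sigma$; setting $h = \Lambda(K_i)$, $b = \|K_i\|_1$ and $z^u_{h,b} = 1$ with all other $z^u_{h',b'} = 0$ immediately gives Constraint~(\ref{c5}). For Constraints~(\ref{c2}) and~(\ref{c3}), observe that $M(h,b) := \sum_u \sum_{K \in \mathcal{K}(h,b)} x^u_K$ equals the number of machines whose configuration has total size $h$ and uses $b$ class slots for large classes; on any such machine $\sigma$ already uses $b$ class slots and load $h$ for large classes, leaving room for at most $c - b$ additional class slots and at most $\bar{T} - h$ additional load for small classes. Summing these per-machine bounds over the $M(h,b)$ machines of type $(h,b)$ yields $\sum_u z^u_{h,b} \leq (c - b)\, M(h,b)$ and $\sum_u p'_u z^u_{h,b} \leq (\bar{T} - h)\, M(h,b)$, which rearrange exactly into Constraints~(\ref{c2}) and~(\ref{c3}).

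The main obstacle is conceptual rather than calculational: one must recognise that the class-index on $x^u_K$ is purely cosmetic and that Constraints~(\ref{c2}) and~(\ref{c3}) are simply aggregations of per-machine capacity bounds over machines of matching type $(h,b)$. Once this is seen, checking the variable ranges is routine: $m_K \leq m$ trivially, $y^u_q \leq m\bar{T}/(\delta T)$ since every piece contributes at least $\delta T$ to a total load of at most $m\bar{T}$, and the $z^u_{h,b}$ are binary by construction.
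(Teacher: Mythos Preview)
Your proposal is correct and follows essentially the same approach as the paper: derive one configuration per machine from the large-class pieces, load all configuration counts into the $u=1$ block, set $y^u$ from the piece multiset of each large class, and set $z^u_{h,b}$ from the configuration type of the machine hosting each small class. The paper simply concludes with ``it is easy to verify that this solution is feasible,'' whereas you actually carry out the verification of Constraints~(\ref{c2}) and~(\ref{c3}) via the per-machine aggregation argument, which is a welcome addition; the only tacit step (present in both your argument and the paper's) is that pieces of the same large class on the same machine may be merged so that $\lVert K_i\rVert_1$ genuinely counts class slots and hence is at most $c$.
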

\begin{proof}
Given a well-structured schedule, the size of each job piece belonging to a large class and scheduled on any machine is included in $\mathcal{M}$.
Hence, for each machine we may count for each possible size the number of present pieces and thereby derive a configuration.
We set the $x^1$-variables accordingly, and set the $x^u$ variables for $u\neq 1$ to $0$.
Let $u\in[c^*]$.
If $u$ is a large class, it is split into pieces with sizes included in $\mathcal{M}$ for the schedule.
We set the $y^u$ variables accordingly and the $z^u$ variables to $0$.
If, on the other hand, $u$ is a small class, then the whole class is scheduled on the same machine $i$.
Let $K$ be the configuration corresponding to $i$, $h = \Lambda(K)$, and $b = \norm{K}_1$.
We set $z^u_{h,b} = 1$ and $z^u_{h',b'} = 0$ for each $(h',b')\neq (h,b)$.
Furthermore, we set all the $y^u$ variables to $0$.
It is easy to verify that this solution is feasible.
\end{proof}
Hence, if the \nfold{} has no feasible solution, we can reject the makespan guess $T$.

\subparagraph{Solving the \nfold{}.}
Now, we can make use of Theorem \ref{nfold} to solve the given \nfold{} Inter Linear Program. 
To estimate the running time, we have to bound the parameters $r, s, t$, $\Delta$ and $L$.
First note that $|\mathcal{M}| = \cO(1/\delta^2)$, $c^* = \cO(1/\delta) = |\Lambda(\mathcal{K})|$ and $|\mathcal{K}| \leq (|\mathcal{M}| + 1)^{\cO(1/\delta)} = 2^{\cO(1/\delta\log(1/\delta))}$.
Hence, we have $r = 1 + |\mathcal{M}| + c^*\times|\Lambda(\mathcal{K})| = \cO(1/\delta^2)$ many globally uniform constraints and $s = 2$ many locally uniform constraints.
Furthermore, concerning the brick size $t$, we have  $t = |\mathcal{K}| + |\mathcal{M}| + 3c^*|\Lambda(\mathcal{K})| + 2 = 2^{\cO(1/\delta\log(1/\delta))}$, taking into account the introduction of slack variables in each brick to transform (\ref{c2}) and (\ref{c3}) into equality conditions.
The absolute value of each number in the constraint matrix is upper bounded by $\bar{T}$ and due to the scaling we have $\bar{T} = \cO(c/\delta^2)$.
Hence, for the largest number $\Delta$, we have $\Delta =\cO(c/\delta^2)$.
Lastly, we have to estimate the encoding length $L$ of the largest number in the input.
Again, numbers with absolute value $\cO(c/\delta^2)$ as well as multiples of the machine number $m$ as big as $\cO(m/\delta)$ may occur (the upper bounds of the $y$ variables).
Therefore, we have $L = \cO(\log(m/\delta + c/\delta^2))$.
Summing up, we get the following running time (using $C\geq \max\set{2,c}$):
\[(rs\Delta)^{\cO(r^2s + s^2)} L \cdot Nt \log^{\cO(1)}(Nt) \leq C^{\cO(1/\delta^4\log(1/\delta))} \cdot \log (m)\]

\subparagraph{Constructing the Schedule.}
Given this solution, we still have to build the schedule. 
For each large class $u$, we split the job of class $u$ into $y_q^u$ pieces of size $q$ for each $q\in\mathcal{M}$.
Next, we assign the configurations chosen by the $x$-variables onto the machines.
Given a machine with configuration $K$, we create $K_q$ slots of size $q$ for each $q\in\mathcal{M}$.
Then, we assign the job pieces greedily into fitting slots on the machine.
It is easy to see that these steps are successful due to the constraints of the \nfold{}. 
Lastly, we have to assign the small classes.
To do so, we again employ the round robin approach:
For each $h\in\Lambda(\mathcal{K})$ and $b\in[c^*]$, we assign the jobs of the small classes $u$ with $z_{h,b}^u = 1$ onto the machines with configurations $K\in\mathcal{K}(h,b)$ via round robin.
Due to (\ref{c2}), all the jobs can be placed by this procedure.
Furthermore, due to Lemma \ref{t:RoundRobin} and (\ref{c3}), this yields a schedule with makespan at most $\bar{T} + \delta T$.
Lastly, we have to use the original running times and jobs, which can be done using a greedy approach.

The overall running time for placing the large classes is linear in the number of involved job pieces, that is $\cO(m/\delta)$.
When placing the small jobs, we touch each class at most once. Further, when we insert the original jobs and job sizes, we have to consider each job and job piece in the schedule once.
The overall running time can thus be bounded by $\cO(n(m/\delta + C))$.

\subparagraph{Total running time and error.}
We have seen how to solve the  splittable version of the Class Constrained Scheduling problem, when we are given a guess $T$ on the makespan.  
Indeed, this requires a binary search for the optimal makespan, which can be done in time $\cO(\log((n \cdot p_{\max}/\delta))$, as $n \cdot p_{\max}$ is an upper bound on the largest possible makespan and we allow an error of $\cO(\delta)$. 
Using the (reasonable) assumption $ C\leq n$, we get a total running time of:
\begin{align*}
& \underbrace{\cO(\log(n \cdot p_{\max}/\delta))}_\text{Binary Search} \cdot [\underbrace{\cO(n)}_\text{Preprocess} + \underbrace{ C^{\cO(1/\delta^4\log(1/\delta))} \cdot \log (m)}_\text{\nfold{}} + \underbrace{\cO(n(m/\delta + C))}_\text{Constructing the packing}] \\
= & n^{\cO(1/\delta^4\log(1/\delta))} m\log (m) \log(p_{\max})
\end{align*}
Furthermore, the error in every phase can be bounded by $\cO(\delta)$ as we analyzed above. Thus the overall error is given by $\cO(\delta)$. Setting $\epsilon = \cO(\delta)$ we get the desired approximation ratio. This yields the total algorithm and analysis for the problem. 
The next theorem summarizes the results.

\begin{theorem}
A schedule $\sigma$ for the splittable version of the Class Constrained Scheduling problem is obtained in time $n^{\cO(1/\epsilon^4\log(1/\epsilon))}m \log (m) \log(p_{\max})$ with makespan $\mu(\sigma_P) \leq (1+ \epsilon) \cdot \mu(\textsc{opt}(I))$, where OPT$(I)$ denotes a solution with optimal makespan for packing the instance $I$ when $m$ can be bounded by a polynomial in $n$. This yields the desired PTAS $P$ for this problem.
\end{theorem}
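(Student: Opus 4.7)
The plan is to assemble the ingredients already developed in this section into a single algorithm and then track errors and running times carefully. Fix an accuracy parameter $\delta \in (0,1]$ with $1/\delta \in \ints$ and $\delta = \Theta(\epsilon)$, chosen small enough that the additive error of $\cO(\delta)T$ accumulated during the pipeline is bounded by $\epsilon T$. For each guess $T$ on the optimal makespan I would run the following subroutine. First, perform the grouping and rounding preprocessing to obtain the scaled instance $I'$; by the preprocessing lemma a feasible schedule of makespan $T$ for $I$ induces one of makespan $T' = (1+2\delta)T$ for $I'$. Second, apply Lemma \ref{lem:splittable_well-structured} to pass to a well-structured schedule for $I'$ of makespan at most $\bar{T} = (1+4\delta)T$. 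Third, build the \nfold{} IP as described and solve it with Theorem \ref{nfold}; by Lemma \ref{lem:splittable_nfold_correct}, infeasibility of the IP certifies that no schedule of makespan $T$ exists for $I$, so in that case we reject $T$. Otherwise, decode the solution as in the construction step to obtain a schedule of the large classes on configurations; then, for each pair $(h,b)$, distribute the small classes onto the machines of type $(h,b)$ via round robin. By Constraint (\ref{c2}) there are enough class slots, and by Constraint (\ref{c3}) together with Lemma \ref{t:RoundRobin} the extra load contributed by the small classes exceeds $\bar{T} - h$ by at most the largest small processing time $\delta T$, so the final makespan is at most $\bar{T} + \delta T = (1+5\delta)T$. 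Finally, reinsert original processing times and original class identifiers by the straightforward greedy mapping; this step only identifies classes and can only decrease loads or class-slot usage.

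Next, embed the above subroutine in a binary search. The optimal makespan lies in $[p_{\max}, n\cdot p_{\max}]$, so $\cO(\log(n\,p_{\max}/\delta))$ iterations suffice to locate a guess $T^{*}$ that is rejected from below but accepted, yielding a schedule $\sigma$ with $\mu(\sigma) \leq (1+5\delta)T^{*} \leq (1+5\delta)\mu(\textsc{opt}(I))$. Choosing $\delta = \epsilon/5$ gives the claimed $(1+\epsilon)$-approximation.

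For the running time, the preprocessing and final construction cost $\cO(n)$ and $\cO(n(m/\delta+C))$ respectively. The dominating term is solving the \nfold{} IP: with the parameter bounds $r = \cO(1/\delta^{2})$, $s=2$, $t = 2^{\cO(1/\delta \log(1/\delta))}$, $\Delta = \cO(c/\delta^{2})$, $L = \cO(\log(m/\delta + c/\delta^{2}))$, and $N = C\leq n$, Theorem \ref{nfold} yields a running time of $C^{\cO(1/\delta^{4}\log(1/\delta))}\log(m)$. Multiplying by the $\cO(\log(n\,p_{\max}/\delta))$ binary-search iterations and absorbing lower-order factors using $C \leq n$ gives
\[
n^{\cO(1/\delta^{4}\log(1/\delta))} m\log(m)\log(p_{\max}) = n^{\cO(1/\epsilon^{4}\log(1/\epsilon))} m\log(m)\log(p_{\max}),
\]
matching the stated bound.

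The only nontrivial step, already handled by the lemmas above, is verifying that the decoding of an IP solution yields a valid well-structured schedule on $I'$ whose makespan does not exceed $\bar{T}+\delta T$; in particular, that the round-robin placement of small classes over machines with configuration type $(h,b)$ respects both the class-slot constraint (ensured by (\ref{c2})) and the load constraint up to the additive $\delta T$ from Lemma \ref{t:RoundRobin} (ensured by (\ref{c3})). With those guarantees in hand, the composition argument above completes the proof.
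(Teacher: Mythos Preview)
Your proposal is correct and follows essentially the same pipeline as the paper: preprocess, invoke the well-structured schedule lemma, set up and solve the \nfold{} IP, decode via configuration placement plus round robin for small classes, and wrap everything in a binary search, with the same running-time accounting. The only minor slip is that in the splittable case $p_{\max}$ is not a valid lower bound on the optimum (jobs can be split across machines), so your binary-search interval should start at $\sum_j p_j/m$ rather than $p_{\max}$; this does not affect the asymptotic iteration count or the correctness of the overall argument.
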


\subparagraph{Handling an Exponential Number of Machines.}

Again we can have the problem that $m$ can be exponentially large in the number of jobs. 
Then the algorithm described above would not compute a solution in polynomial time regarding~$n$. However, we can handle this by extending our algorithm using a simple idea from \cite{DBLP:conf/innovations/JansenKMR19}. 
 
First, observe that we can convert any schedule into a schedule in which each machine has that same load and each pair of classes occurs on at most one machine.
Indeed, if we have two machines $i_1$ and $i_2$ on which the same pair $(u_1,u_2)$ occurs, we can apply a simple swap.
Let $p(i,u)$ be the overall load of a class $u\in\set{u_1,u_2}$ on a machine $i\in\set{i_1,i_2}$.
\Wlogeneral we may assume that $p(i_1,u_1)$ is minimal.
We move all the job pieces of class $u_1$ placed on machine $i_1$ to machine $i_2$ and job pieces of class $u_2$ with overall size $ p(i_1,u_1)$ from machine $i_2$ to $i_1$.
Afterwards, both machines have the same load, and class $i_1$ does not occur on machine $i_1$.
Moreover, the number of used class slots has not increased on any machine. 
The approach is visualized in Figure \ref{fig:ChangJobs}.
\begin{figure}
\begin{center}
 	\begin{tikzpicture}
	
	\pgfmathsetmacro{\w}{1}
	\pgfmathsetmacro{\h}{0.5}
	\pgfmathsetmacro{\mH}{8}
	
	\pgfmathsetmacro{\ww}{0}
	\drawMachine[$m_1$]{1*\w}{\mH*\h}{\ww*\w}{0};
	\pgfmathsetmacro{\hh}{0}
	\foreach \x/\xx/\z in {
		0/1/$ $,
		1/3/1,
		3/4/$ $,
		4/6/2
	}{
		\pgfmathsetmacro{\hhh}{\hh+\x}
		\drawJob[\z]{\ww*\w}{\x*\h}{\ww*\w +\w}{\xx*\h};	
		\global\let\hh=\hhh
	}
	
	\draw[dashed] (1*\w,1*\h) -- (2*\w,1*\h);
	\draw[dashed] (1*\w,2*\h) -- (2*\w,2*\h);
	
	\pgfmathsetmacro{\ww}{2}
	\drawMachine[$m_2$]{1*\w}{\mH*\h}{\ww*\w}{0};
	\pgfmathsetmacro{\hh}{0}
	\foreach \x/\xx/\z in {
		0/1/$ $,
		1/2/2,
		2/4/$ $,
		4/7/1
	}{
		\pgfmathsetmacro{\hhh}{\hh+\x}
		\drawJob[\z]{\ww*\w}{\x*\h}{\ww*\w +\w}{\xx*\h};	
		\global\let\hh=\hhh
	}
	
	\node[scale=2] at (4.5*\w,4*\h) {$\Rightarrow$};
	
	\pgfmathsetmacro{\ww}{6}
	\drawMachine[$m_1$]{1*\w}{\mH*\h}{\ww*\w}{0};
	\foreach \x/\xx/\z in {
		0/1/$ $,
		1/2/1,
		2/3/2,
		3/4/$ $,
		4/6/2
	}{
		\pgfmathsetmacro{\hhh}{\hh+\x}
		\drawJob[\z]{\ww*\w}{\x*\h}{\ww*\w +\w}{\xx*\h};	
		\global\let\hh=\hhh
	}
	
	\pgfmathsetmacro{\ww}{8}
	\drawMachine[$m_2$]{1*\w}{\mH*\h}{\ww*\w}{0};
	\foreach \x/\xx/\z in {
		0/1/$ $,
		1/2/1,
		2/4/$ $,
		4/7/1
	}{
		\drawJob[\z]{\ww*\w}{\x *\h}{\ww*\w +\w}{\xx*\h};	
	}

	\end{tikzpicture}
 \caption{This figure shows the exchange of two job pieces, such that the machines have distinct pairs of classes.} \label{fig:ChangJobs}
 \end{center} 
\end{figure}
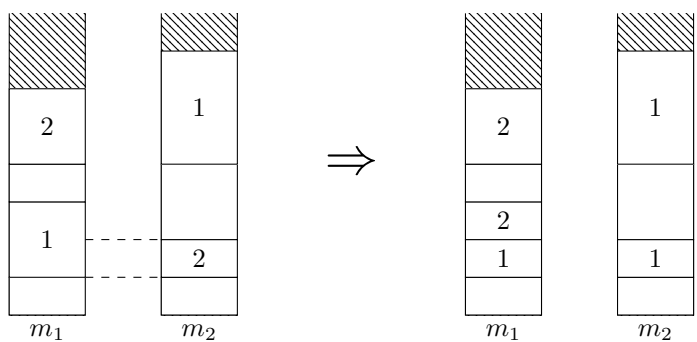
Furthermore, in a second step, we can transform the schedule even further.
For each class $u$, we can guarantee that there is at most one machine that exclusively executes pieces belonging to $u$ and is not fully filled.
Again, this is realized via a simple swapping argument. 

We can modify the \nfold{} correspondingly:
There two configurations that we call trivial, namely the one that that chooses the largest module exactly once, and the one which does not choose any module.
Let $\mathcal K'$ be the subset of non-trivial configurations $\mathcal K'$.
Because of the above considerations, we may introduce the following globally uniform constraint without violating Lemma \ref{lem:splittable_nfold_correct}:
 \begin{align*}
&\sum\limits_{u=1}^C \sum\limits_{K \in \mathcal K'} x_K^u \leq {{C}\choose{2}} + C
\end{align*}
It is easy to verify that the increase in the running time vanishes in the $\cO$-Notation. 

Now, when constructing the packing, we first can deal with the trivial configurations and remember for each class the corresponding number of machines that are fully filled with these class.
The sizes of the classes are decreased accordingly and in the following only the at most ${{C}\choose{2}} + C$ machines are taken into account.
Besides this, the algorithm stays the same.

This approach yields an improved running time of $n^{\cO(1/\epsilon^4\log(1/\epsilon))} \log (m) \log(p_{\max})$.

\begin{theorem}
A schedule $\sigma$ for the splittable version of the Class Constrained Scheduling problem is obtained in time $n^{\cO(1/\epsilon^4\log(1/\epsilon))}\log (m) \log(p_{\max})$ with makespan $\mu(\sigma_P) \leq (1+ \epsilon) \cdot \mu(\textsc{opt}(I))$, where OPT$(I)$ denotes a solution with optimal makespan for packing the instance $I$ when $m$ cannot be bounded by a polynomial in $n$. This yields the desired PTAS $P$ for this problem.
\end{theorem}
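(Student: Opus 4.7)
The plan is to adapt the algorithm of the previous theorem, whose bottleneck when $m$ is exponential in $n$ lies both in the explicit output of a schedule on $m$ machines and in the dependency on $m$ inside the \nfold{} (via the variable upper bounds and the constraint $\sum x^u_K = m$). The strategy is to exploit a structural reduction: show that there is always a near-optimal schedule in which only $\cO(C^2)\leq \cO(n^2)$ machines carry a \emph{non-trivial} configuration, while all remaining machines are either empty or completely filled by a single module of a single class. Then only the non-trivial machines have to be encoded explicitly, and the trivial ones can be represented implicitly by counters.

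First I would formalize two swap lemmas. The first (already sketched in Figure~\ref{fig:ChangJobs}) says that any feasible splittable schedule can be transformed so that no pair of classes $(u_1,u_2)$ simultaneously occurs on two distinct machines: whenever $i_1,i_2$ both host pieces of $u_1$ and $u_2$, we shift the smaller of the two $u_1$-loads entirely to the other machine and swap an equal amount of $u_2$ back. This keeps loads and the number of used class slots invariant. A second swap ensures that for each class at most one machine is exclusively filled with that class while not being completely full; otherwise the two such machines can be consolidated. After both swaps, at most $\binom{C}{2}+C$ machines carry a non-trivial configuration, and both swaps preserve the well-structured property (only whole job pieces of sizes in $\mathcal{M}$ are moved).

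Next I would modify the \nfold{} from the previous theorem. Let $\mathcal{K}_{\text{triv}}$ consist of the empty configuration together with all configurations of the form ``one module $q\in\mathcal{M}$ of multiplicity one whose size equals $\bar T$'' (these correspond to machines completely filled by a single class), and put $\mathcal{K}' = \mathcal{K}\setminus\mathcal{K}_{\text{triv}}$. I add the single globally uniform constraint
\begin{align*}
\sum_{u=1}^{C}\sum_{K\in\mathcal{K}'} x_K^u \;\leq\; \binom{C}{2}+C .
\end{align*}
By the swap lemmas, Lemma~\ref{lem:splittable_nfold_correct} still applies: every well-structured schedule corresponds to a feasible \nfold{} solution of the strengthened system. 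The upper bounds of the $x^u_K$ for $K\in\mathcal{K}'$ can now be tightened to $\binom{C}{2}+C$, and $m$ only enters through the bounds on the $y$-variables and on the $x$-variables for $K\in\mathcal{K}_{\text{triv}}$, i.e.\ through the encoding length $L=\cO(\log(m/\delta+c/\delta^2))$. The added row increases $r$ only by one, the parameters $s=2$, $t=2^{\cO(1/\delta\log(1/\delta))}$, $\Delta=\cO(c/\delta^2)$ are unchanged asymptotically, and Theorem~\ref{nfold} therefore still yields a running time of $C^{\cO(1/\delta^4\log(1/\delta))}\log m$ for solving the \nfold{}.

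Finally, for output construction I would first read off the $x^u_K$ values for $K\in\mathcal{K}_{\text{triv}}$: for each large class $u$ these give the number of machines entirely occupied by pieces of a single module of $u$, which is emitted compactly as (module, multiplicity) pairs using $\cO(C\log m)$ bits total. The small classes assigned by the round-robin step to trivial configuration classes are handled analogously. On the remaining $\cO(C^2)$ non-trivial machines the placement of large-class pieces and the round-robin distribution of small classes from the previous theorem are applied verbatim, taking $\poly(n)$ time; the error analysis of $\cO(\delta)$ per phase is inherited unchanged, and embedding the whole procedure into the binary search over $T$ gives the claimed bound $n^{\cO(1/\epsilon^4\log(1/\epsilon))}\log m\log p_{\max}$ after setting $\epsilon=\Theta(\delta)$. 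The main technical obstacle is the verification that the two swap arguments preserve the well-structured property of the schedule (sizes in $\mathcal{M}$, no splitting of small classes), so that the new feasibility Lemma for the strengthened \nfold{} really holds; this has to be done at the granularity of job pieces rather than whole classes.
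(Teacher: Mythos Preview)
Your proposal is correct and follows essentially the same route as the paper: the same two swap arguments, the same definition of the trivial configurations (empty and the single largest module), the same added globally uniform constraint $\sum_{u}\sum_{K\in\mathcal K'} x^u_K \le \binom{C}{2}+C$, and the same implicit bookkeeping for the trivial machines in the output. The one point worth noting is that the information ``how many full trivial machines belong to class $u$'' is read off from $y^u_{\bar T}$ rather than from the $x$-variables (whose class index is meaningless), and that the subtlety you flag about swaps preserving well-structuredness is indeed glossed over in the paper---your instinct to carry the swap out at the level of job pieces is the right one.
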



\subsection{Non-Preemptive Case}
For most parts, the non-preemptive case works quite similar to the splittable one. However, we have to consider that the jobs have to be assigned as a whole. Hence, we can not simply glue jobs from the same class together and thus forgetting their original structure. Instead, we group the jobs appropriately and later on look at each grouped job separately by defining modules, module sizes, and configurations more carefully. Adapting the remaining steps, we get the desired PTAS. Note that regarding the grouping a similar approach has been used in \cite{shachnai2001polynomial}.

\subparagraph{Preprocessing.}
First, we construct an instance $I'$ in which each class is either small or large by grouping the jobs. For each class $u$, we perform the following steps:
\begin{itemize}
\item As long as it is possible, repeatedly perform the following steps: Select a set of jobs $X \subseteq\sett{j\in J}{c_j = u, p_j < \delta T} $ such that $p(X) \in [\delta T, 2\delta T)$; remove $X$; introduce a new job with class $u$ and size $p(X)$.
\item Let $Y = \sett{j\in J}{c_j = u, p_j < \delta T}$. We have $p(Y)< \delta T$ because of the above step. 
\item If $u$ contains jobs not belonging to $Y$, we pick such a job $j$, remove $j$ and $Y$ from the instance, and introduce a new job of class $u$ with size $p_j + p(Y)$.
\item Otherwise, we remove $Y$ from the instance and introduce a new job of class $u$ and with size $p(Y)$.
\end{itemize}
Note that each of the newly created jobs has a size of at most $3\delta T$ and that there are indeed only small and large classes left. 
If a class $u$ is small, we set $\xi_u = 1$. Otherwise we set $\xi_u = 0$.
We call the resulting instance $I'$, the corresponding set of jobs $J'$, and write $p'_j$ and $c'_j$, respectively, to denote the processing time and class of job $j\in J'$. 
Furthermore, the following holds:
\begin{lemma}\label{lem:preprocess_non-preemptive}
If there is a schedule with makespan $T$ for $I$, then there is also a schedule with makepan $(1+3\delta)T$ for instance $I'$.
\end{lemma}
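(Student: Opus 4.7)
The plan is to take a schedule $\sigma$ for $I$ with makespan $T$ and construct a schedule $\sigma'$ for $I'$ in two stages: first preserve the assignment of every original job of size at least $\delta T$ (these survive the grouping essentially unchanged), then carefully re-place the new jobs that descend from original small jobs via an LP-rounding argument. Throughout, for each class $u$ I restrict new jobs of class $u$ to $M_u := \sett{i\in M}{\text{some original job of class }u\text{ is on }i\text{ in }\sigma}$, which guarantees that the class-slot usage of $\sigma'$ is no worse than that of $\sigma$, so the only thing left to control is the load per machine.

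For each class $u$, I first keep every original big job ($p_j\geq\delta T$) on its $\sigma$-machine. If the leftover small jobs $Y_u$ were absorbed into an original big job $j^*_u$, then I place the resulting merged job, of size $p_{j^*_u}+p(Y_u)<p_{j^*_u}+\delta T$, on $j^*_u$'s original machine, which adds at most $\delta T$ to that machine's load and removes $p(Y_u)$ of small-class-$u$-load from the other machines. If instead $Y_u$ was absorbed into a grouped job, the resulting grouped-plus-$Y$ job has size less than $3\delta T$ and is handled alongside the other grouped jobs. For the remaining new jobs — the grouped jobs, any grouped-plus-$Y$ job, and standalone $Y_u$-jobs in the small case — I set up an LP with variables $x_{j,i}\in[0,1]$ for each such $j$ and each $i\in M_{c_j}$ requiring $\sum_i x_{j,i}=1$ and per-machine load at most $T$ after accounting for the already-fixed big-job contribution. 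A feasible fractional solution is the proportional one: for each class $u$ distribute its small-derived new jobs over $M_u$ so that machine $i$ receives fractional small-derived load equal to $\ell_i^{u,s}$ (minus $p(Y_u)$ on $j^*_u$'s machine in the absorbed case). The totals balance because the grouping preserves the total class-load. Applying the Lenstra--Shmoys--Tardos rounding (as in the proof of Lemma~\ref{lem:splittable_well-structured}) yields an integer assignment whose load on each machine differs from the fractional one by at most the size of a single new small-derived job, that is, by at most $3\delta T$; since the rounding preserves zero entries of the LP, the constraint that class-$u$-jobs go only to $M_u$ is maintained.

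Putting the two stages together on a single machine $i$ gives total load $\ell_i^b+p(Y\text{-absorbed on }i)+\text{(rounded small-derived load)}\leq\ell_i^u+3\delta T\leq(1+3\delta)T$, which is the desired bound. The main obstacle is precisely that this bound must not degrade to something like $(1+c\delta)T$ when many classes share a machine: a naive per-class next-fit argument incurs up to $3\delta T$ of rounding error on each of the up to $c$ classes on a machine, and the $Y$-absorption can stack on the same machine for several classes. The way around this is to perform the rounding \emph{globally} across all classes simultaneously in one LP (whose capacities already subtract the absorbed-$Y$ mass on $j^*_u$-machines), so that the $3\delta T$ overflow from the rounding is incurred only once per machine, not once per class; checking LP feasibility then reduces to the identity $\ell_i^b+\ell_i^s=\ell_i\leq T$.
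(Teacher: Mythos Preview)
Your overall plan—set up one global LP for the newly created jobs with the proportional fractional solution and apply Lenstra--Shmoys--Tardos rounding once, so that the $3\delta T$ overflow is incurred per machine rather than per class—is exactly the paper's approach. The gap is your separate treatment of the case ``$Y_u$ absorbed into an original big job $j^*_u$'': placing that merged job deterministically on $j^*_u$'s machine can itself stack across classes, and your compensation in the LP does not repair it, because the corrected fractional class-$u$ load $\ell^{u,s}_{i^*_u}-p(Y_u)$ on $j^*_u$'s machine can be negative (so the point you describe is not even a feasible LP solution and LST cannot be invoked on it).

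Concretely, take $\delta=0.1$, $c\geq 5$, and five classes each consisting of one big job of size $0.2T$ on machine~$1$ and a single small job of size $0.09T$ on machine~$2$. The preprocessing creates no grouped jobs, so each $Y_u$ is merged into the big job, yielding five new jobs of size $0.29T$. Your rule places all five on machine~$1$ for load $1.45T>(1+3\delta)T$; there are no LP jobs at all here, so the LP is vacuous and cannot absorb the surplus, and on machine~$1$ your ``adjusted'' fractional class-$u$ load is $0-0.09T<0$. The paper avoids this by putting \emph{every} new job $j'\in J'\setminus J$—including any job that absorbed a big $j^*_u$—into the LP with the canonical fractions $x^*_{ij'}=(\sum_{j\in J(j')}p_jy_{ij})/p'_{j'}\in[0,1]$ and upper bounds $z_{ij'}=\lceil x^*_{ij'}\rceil$; then a single LST rounding overshoots each machine by at most $\max_{j'}p'_{j'}\leq 3\delta T$, and the bound $z_{ij'}$ already encodes the restriction to $M_{c_{j'}}$.
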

\begin{proof}
There is a set $J(j')\subseteq J\setminus J'$ for each newly introduced job $j'\in J'\setminus J$ such that $p'_{j'} = \sum_{j\in J(j')}p_j$ and $\sett{J(j')}{j'\in J'\setminus J}$ is a partition of $J\setminus J'$.
Given a schedule for $I$ with makespan $T$, let $y_{i,j} \in \set{0,1}$ be equal to $1$ if $j\in J$ is executed on machine $i$ and equal to $0$ otherwise.
We set $T_i = \sum_{j\in J\setminus J'} y_{ij}p_j$ for each machine $i$, $x^*_{ij'} = (\sum_{j\in J(j')}p_j y_{ij})/p_{j'}$ for each job $j'\in J'\setminus J$, and $z_{ij'} = \ceil{x^*_{ij'}}$.
Note that $x^*_{ij'} \in [0,1]$ and $z_{ij'} \in\set{0,1}$.
It is easy to see that $(x^*_{ij'})$ is a feasible solution of the following LP:
\begin{align}
\sum_{j'\in J'\setminus J} p'_{j'}x_{ij'} & \leq T_i  & \forall i\in M \\
\sum_{i\in M} x_{ij'} & = 1  & \forall j'\in J'\setminus J \label{eq:LP_preprocess_non_preemptive_1}\\ 
0 \leq & x_{ij'} \leq z_{ij'} & \forall i\in M, j'\in J'\setminus J\label{eq:LP_preprocess_non_preemptive_2}
\end{align}
Similar to the proof of Lemma \ref{lem:splittable_well-structured}, we can employ the classical result by Lenstra et al.~\cite{lenstra} to get a rounded solution $(\bar{x}_{i,j'})$ such that $\bar{x}_{i,j'}\in\set{0,1}$ holds, (\ref{eq:LP_preprocess_non_preemptive_1}) and (\ref{eq:LP_preprocess_non_preemptive_2}) are satisfied, and furthermore we have $\sum_{j'\in J'\setminus J} p'_{j'}\bar{x}_{i,j'}  \leq T_i + \max_{j'\in J'\setminus J}p'_{j'} \leq T_i + 3\delta T$ for each $i\in M$.

Hence, we can generate a suitable schedule by removing the jobs belonging to $J\setminus J'$ and assigning the jobs belonging to $J'\setminus J$ based on the $\bar{x}$-variables.
\end{proof}

Lastly, we round and scale the processing times and the makespan like in the splittable case and call the resulting instance $I''$.
For small classes $u$, we write $p''_u$ to denote the processing time of the single job of the class.
Furthermore, we denote the set of rounded processing times occurring in large classes by $\mathcal{P}$ and the number of jobs of class $u$ and size $p\in\mathcal{P}$ by $n_p^u$.

\subparagraph{Setting up the \nfold{}.}

Considering the error produced by the preprocessing, we set  $\bar{T} = (1+3\delta)(1+2\delta) =(1+\cO(\delta))T$. 
As for this case, modules are defined as multiplicity vectors of processing times, i.e., $\mathcal{M} = \sett{M\in\ints_{\geq 0}^\mathcal{P}}{\sum_{p\in\mathcal{P}}M_p p \leq \bar{T}}$. Indeed the modules are similar to the configurations in the splittable case. However, the modules represent the packing of distinct jobs of a single class on some machine instead of solely stating the volume of the class. The size $\Lambda(M)$ of a module $M$ is given by $\sum_{p\in\mathcal{P}}M_p p$ and the set of module sizes is denoted as $\Lambda(\mathcal{M})$.
Note that in the splittable case we did not distinguish a module and its sizes and considered configurations of module sizes.
In contrast, we define configurations for this case as multiplicity vectors of module \emph{sizes} $K\in\ints_{\geq 0}^{\Lambda(\mathcal{M})}$, and the size $\Lambda(K)$ of a configuration $K$ is given by $\sum_{q\in\Lambda(\mathcal{M})}K_q q$.
The set of configurations $\mathcal{K}$ is given by the configurations $K$ with $\Lambda(K) \leq \bar{T}$ and $\norm{K}_1\leq c$. Like before, $\Lambda(\mathcal{K})$ is the set of configuration sizes occurring in $\mathcal{K}$, $c^* = \min\set{\bar{T}/(\delta T),c}$, and $\mathcal{K}(h,b) = \sett{K\in\mathcal{K}}{\Lambda(K) = h, \lVert K\rVert_1 =b}$.

Let $u\in[C]$ be a class.
We have a variable $y^u_M\in\set{0,\dots, m}$ for each module $M\in \mathcal{M}$ indicating how often $M$ is chosen to cover the jobs of class $u$.
Moreover, we introduce a variable $x_K^u \in\set{0,\dots, m}$ for each configuration $K\in \mathcal{K}$. 
Like before, the duplication of these variables does not carry meaning and is solely used to obtain the \nfold{}-structure. 
Furthermore, we have binary variables $ z^u_{h,b}\in\set{0,1}$ for each $h\in\Lambda(\mathcal{K})$ and $b\in[c^*]$ which are used to decide whether the class is assigned to a machine on which $b$ job pieces with overall size $h$ belonging to large classes are scheduled.
The \nfold{} has the following constraints:

\begin{align*}
\tag{0}
\label{c0_np}
&\sum\limits_{u=1}^C \sum\limits_{K \in \mathcal K} x^u_K = m\\
\tag{1}
\label{c1_np}
&\sum\limits_{u=1}^C \sum\limits_{K \in \mathcal K} K_q  x^u_K = \sum\limits_{u=1}^C \sum\limits_{M\in\mathcal{M}:\Lambda(M) = q} y^u_{M} & \forall q \in \Lambda(\mathcal M)\\
\tag{2}
\label{c2_np}
&\sum\limits_{u=1}^{C} z^u_{h,b} + b  \sum\limits_{u=1}^C \sum\limits_{K \in \mathcal K(h,b)} x^u_K \leq c  \sum\limits_{u=1}^C \sum\limits_{K \in \mathcal K(h,b)} x^u_K &\forall h \in \Lambda(\mathcal{K}),b \in [c^*]\\
\tag{3}
\label{c3_np}
&\sum\limits_{u=1}^{C} p''_u  z^u_{h,b} + h  \sum\limits_{u=1}^C \sum\limits_{K \in \mathcal K(h,b)} x^u_K \leq \bar{T} \sum\limits_{u=1}^C \sum\limits_{K \in \mathcal K(h,b)} x^u_K &\forall h \in \Lambda(\mathcal{K}),b \in [c^*]\\
\tag{4}
\label{c4_np}
&\sum\limits_{M\in\mathcal{M}} M_p y^u_M = (1 - \xi_u)n_p^u &\forall u \in [C],p\in \mathcal{P}\\
\tag{5}
\label{c5_np}
&\sum\limits_{h\in \Lambda(\mathcal{K})} \sum\limits_{b\in [c^*]} z^u_{h,b} = \xi_u & \forall u \in [C]
\end{align*}

Note that the \nfold{} is very similar to the one used in the splittable case. Nevertheless, the Constraints (\ref{c1_np}) and (\ref{c4_np}) had to be adjusted to deal with the changed definitions of modules and configurations. In detail, Constraint~(\ref{c0_np}) guarantees that the number of configurations matches the number of available machines $m$. The second constraint~(\ref{c1_np}) is satisfied if the chosen configurations cover the modules. Similarly, (\ref{c4_np}) assures that the chosen modules indeed cover the large jobs. For the small classes, (\ref{c4_np}) ensures that the job is not covered. Instead, Constraint~(\ref{c5_np}) watches that a small job is assigned to exactly one configuration. Finally, Constraint~(\ref{c2_np}) and (\ref{c3_np}) assure that the area and the number of class slots is sufficient for the small classes. Again only (\ref{c4_np}) and (\ref{c5_np}) are locally uniform. Lastly we set the objective function to zero as we are solely aiming for a feasible solution.

\begin{lemma}\label{lem:non-preemptive_nfold_correct}
If there is a schedule with makespan $\bar{T}$ for instance $I''$, then there is also a solution to the above \nfold{} IP.
\end{lemma}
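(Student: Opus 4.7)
The plan is to mimic the construction in Lemma~\ref{lem:splittable_nfold_correct}, but now respecting the two-level structure (jobs grouped into modules, modules into configurations) that distinguishes the non-preemptive case. Given a schedule $\sigma$ for $I''$ with makespan $\bar T$, I would first, for every machine $i\in M$ and every large class $u$ that has jobs on $i$, aggregate those jobs into a module $M^{i,u}\in\mathcal M$ by setting $M^{i,u}_p$ equal to the number of class-$u$ jobs of size $p\in\mathcal P$ that $\sigma$ puts on $i$. The feasibility of $\sigma$ immediately gives $\Lambda(M^{i,u})\le \bar T$, so $M^{i,u}\in\mathcal M$. Next, I would define the configuration of machine $i$ as the vector $K^i\in\mathbb{Z}_{\ge 0}^{\Lambda(\mathcal M)}$ with $K^i_q=|\{u\text{ large}:\Lambda(M^{i,u})=q\}|$, i.e.~I count the large classes on $i$ by the size of the module they contribute. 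Because $\sigma$ uses at most $c$ class slots per machine and has load $\le\bar T$, we have $K^i\in\mathcal K$.

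Based on this bookkeeping I would set the variables as follows. For the configurations put $x^1_K=|\{i\in M:K^i=K\}|$ for every $K\in\mathcal K$ and $x^u_K=0$ for $u\ne 1$; this immediately gives Constraint~(\ref{c0_np}). For each large class $u$ let $y^u_M=|\{i\in M:M^{i,u}=M\}|$ for every $M\in\mathcal M$ and $z^u_{h,b}=0$ for every $(h,b)$. For each small class $u$, which contains exactly one job by the preprocessing, let $i_u$ be the machine on which this job is placed, and set $z^u_{h,b}=1$ for $(h,b)=(\Lambda(K^{i_u}),\|K^{i_u}\|_1)$, $z^u_{h',b'}=0$ otherwise, and $y^u_M=0$ for every $M$. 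Constraints (\ref{c4_np}) and (\ref{c5_np}) are then immediate: for large $u$ the values $n_p^u$ are recovered by counting jobs of size $p$ through their module memberships, $\sum_M M_p y^u_M=\sum_i M^{i,u}_p=n_p^u$, while $\xi_u=0$; for small $u$ exactly one $z^u_{h,b}$ equals $1=\xi_u$ and the $y^u$-variables vanish.

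The main obstacle, and the step I expect to need the most care, is verifying Constraint~(\ref{c1_np}), because here I have to match ``covers'' at the interface between configurations (which record module sizes) and modules (which record processing-time multiplicities). This is handled by a clean double-counting argument: the left-hand side equals $\sum_i K^i_q=\sum_i|\{u\text{ large}:\Lambda(M^{i,u})=q\}|$, while the right-hand side rewrites as $\sum_u\sum_{M\in\mathcal M:\Lambda(M)=q} |\{i:M^{i,u}=M\}|=\sum_u|\{i:\Lambda(M^{i,u})=q\}|$; swapping the order of summation makes the two sides identical.

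What remains are the load and slot inequalities (\ref{c2_np}) and (\ref{c3_np}). For a fixed pair $(h,b)$ the quantity $\sum_u\sum_{K\in\mathcal K(h,b)}x^u_K$ equals the number $N_{h,b}$ of machines $i$ with $\Lambda(K^i)=h$ and $\|K^i\|_1=b$. On each such machine at most $c-b$ slots and at most $\bar T-h$ units of load are still available for small classes. Summing these slot- and load-budgets over the $N_{h,b}$ machines, and comparing against the total contribution $\sum_u z^u_{h,b}$ and $\sum_u p''_u z^u_{h,b}$ of the small classes that were assigned there by $\sigma$, gives exactly (\ref{c2_np}) and (\ref{c3_np}). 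This shows that the constructed vector is a feasible solution of the \nfold{} IP, completing the proof.
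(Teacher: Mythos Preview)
Your proposal is correct and follows essentially the same construction as the paper: derive a module $M^{i,u}$ for each large class on each machine, turn these into a configuration $K^i$ per machine, load the $x^1$-, $y^u$-, and $z^u$-variables exactly as you describe, and check the constraints. The paper's proof is terser (it stops at ``It is easy to verify that this solution is feasible''), whereas you spell out the double-counting for~(\ref{c1_np}) and the budget argument for~(\ref{c2_np}) and~(\ref{c3_np}); those additional details are correct and in fact clarify the paper's sketch.
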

\begin{proof}
Given a well-structured schedule, each subset of jobs belonging to a large class is included in $\mathcal{M}$ having a specific size in $\Lambda(M)$. Hence, for each machine we may count for each possible size the number of present module sizes and thereby derive a configuration. We set the $x^1_K$ variables accordingly, and set the $x^u$ variables for $u\neq 1$ to $0$.
Let $u\in[c^*]$. If $u$ is a large class, its jobs may be distributed along the machines whereas on each machine the jobs of a class build up a module. Thus we can count the multiplicity of each used module for a certain class and set the $y^u_M$ variables accordingly and the $z^u$ variables to $0$. If, on the other hand, $u$ is a small class, then it contains only one job which is scheduled on exactly one machine $i$. Let $K$ be the configuration corresponding to $i$, $h = \Lambda(K)$, and $b = \norm{K}_1$. We set $z^u_{h,b} = 1$ and $z^u_{h',b'} = 0$ for each $(h',b')\neq (h,b)$. Furthermore, we set all the $y^u$ variables to $0$.
It is easy to verify that this solution is feasible.
\end{proof}
Hence, if the \nfold{} has no feasible solution, we can reject the makespan guess $T$. 

\subparagraph{Solving the \nfold{}.}

Again, we have to bound the parameters $r, s, t$, $\Delta$ and $L$ in the application of Theorem \ref{nfold}.
We have $s= |\mathcal{P}| + 1 = \cO(1/\delta^2)$ locally uniform constraints due to the rounding of the processing times.
Moreover, there are $r = 1 + |\Lambda(\mathcal{M})| + 2|\Lambda(\mathcal{K})|c^*$ globally uniform constraints.
Note that $|\mathcal{M}| = (|\mathcal{P}| + 1)^{\cO(1/\delta)} = 2^{\cO(1/\delta\log(1/\delta)}$, $|\Lambda(\mathcal{M})| = \cO(1/\delta^2)$, $|\mathcal{K}| = (|\mathcal{M}| + 1)^{\cO(1/\delta)} = 2^{\cO(1/\delta\log(1/\delta)}$, $|\Lambda(\mathcal{K})| = \cO(1/\delta^2)$, and $c^* = \cO(1/\delta)$.
Hence, we have $r = \cO(1/\delta^3)$.
There are $t = |\mathcal{K}| + |\mathcal{M}| + 3|\Lambda(\mathcal{K})|c^*$ many variables for each block (including slack variables for Constraint (\ref{c2_np}) and (\ref{c3_np})), and therefore we have $t = 2^{\cO(1/\delta\log(1/\delta)}$.
Like in the splittable case, the largest number $\Delta$ can be upper bounded by $\cO(c/\delta^2)$.
Concerning the encoding length $L$ of the largest number in the input, we additionally have to take the upper bounds of the variables into account yielding $L=\cO(\log(c/\delta^2 + m))$.
Summing up, we get the following running time (using $C\geq \max\set{2,c}$):
\[(rs\Delta)^{\cO(r^2s + s^2)} L \cdot Nt \log^{\cO(1)}(Nt) \leq C^{\cO(1/\delta^8\log(1/\delta))} \cdot \log m\]

\subparagraph{Constructing the Schedule.}

Having this solution at hand, we still have to build the schedule. 
We assign the configurations chosen by the $x$-variables onto the machines. 
Now we unfold the configurations as  Figure \ref{fig:unfoldingConfig} shows. 
In detail, given a machine with configuration $K$ we create $\norm{K}_1$ slots where exactly $K_q$ slots have size $q$ for each $q \in \Lambda(\mathcal M)$. 
Each slot is then greedily filled with a module corresponding to the assignment of $y^u_M$. 
Next, each module $M$ is dissolved in the corresponding multiplicities $M_p$ of job lengths $p$. 
Then we assign the corresponding jobs greedily into the slots of the job lengths. 
Afterwards all large jobs are allotted. 
Due to the constraints of the ILP, it is easy to see that these steps are successful. 
Regarding the small jobs, we distribute them like we did in the splittable case: 
For each $h\in\Lambda(\mathcal{K})$ and $b\in[c^*]$, we assign the jobs of the small classes $u$ with $z_{h,b}^u = 1$ onto the machines with configurations $K\in\mathcal{K}(h,b)$ via round robin. 
Due to Lemma \ref{t:RoundRobin} and (\ref{c3}), this yields a schedule with makespan at most $\bar{T} + \delta T$. 
In the last step we have to reinsert the original, non-rounded processing times and jobs. 

\begin{figure}
\begin{center}

\begin{tikzpicture}[stack/.style={
  rectangle split, rectangle split parts=#1, draw, anchor=center},
  myarrow/.style={single arrow, draw=none}]

\node [stack=4, name=K] (K)  {$K_1=3$\nodepart{two}$K_2=2$%
   \nodepart{three}$K_3=3$\nodepart{four}$K_4=0$};

\node [stack=2,right=of K, name=M1] (M1) {$L$\nodepart{two}$M$};
  
  \node [stack=4,right=of M1] (M2) {$M_1=0$\nodepart{two}$M_2=1$
  \nodepart{three}$M_3=1$\nodepart{four}$M_4=3$};
  
  \node [stack=1,right=of M2] (J) {$J_2^4$};

\node [above=of K,anchor=north,align=left] {K};
\node [above=of M1,anchor=north,align=left] {Dissolved \\modules};
\node [above=of M2,anchor=north,align=left] {Subset of\\ job sizes};
\node [above=of J,anchor=north,align=left] {Correspon- \\ding job};

\draw (K.text split east) -- (M1.north west);
\draw (K.two split east) -- (M1.south west);

\draw (M1.text split east) -- (M2.north west);
\draw (M1.south east) -- (M2.south west);

\draw (M2.text split east) -- (J.north west);
\draw (M2.two split east) -- (J.south west);

\end{tikzpicture}
\caption{This figure partially dissolves a configuration $K$ into a job $J_2^4$. First, a configuration consists of occurrences of module sizes $K_1, \dots, K_4$. For example the second module size appears $K_2 = 2$ times in the configuration $K$. Meaning we have a placeholder for two modules of exactly this size, here $L$ and $M$. Modules themselves hold multiplicities of job sizes, for example module $M$ has an occurrence of $M_2 = 1$ of the fourth job size. This space is then filled with the job $J_2^4$ of class 4 with precisely that size.} \label{fig:unfoldingConfig}
\end{center}
\end{figure}

The overall running time for placing the large classes onto the machines is linear in the number of involved jobs, i.e. dissolving all configurations takes time $\cO(m \cdot 1/\delta)$. 
When placing the small jobs, we touch each small job and thus at most each class at most once. 
To insert the original jobs and job sizes, we have to consider each job once. 
As $m \leq n$ and $C \leq n$, this step yields an overall running time of $\cO(m \cdot 1/\delta + C + n) = \cO(n/\delta)$.

\subparagraph{Total running time and error.}

We have seen how we can also solve the non-preemptive case of the Class Constraint Scheduling problem when we are given a guess $T$ on the makespan. 
Indeed, a complete algorithm again requires to embed the algorithm given above in a binary search. 
As $n \cdot p_{\max}$ is an upper bound on the largest possible makespan and the optimal makespan is integral, the search is exhausted after at most $\cO(\log((n \cdot p_{\max}/\delta))$ steps. 
Again we can reasonably assume that $C \leq n$, we get a total running time of: 
\begin{align*}
\underbrace{\cO(\log(n \cdot p_{\max}/\delta))}_\text{Binary Search} \cdot [\underbrace{\cO(n)}_\text{Preprocess} + \underbrace{C^{\cO(1/\delta^8\log(1/\delta))} \cdot \log m}_\text{\nfold{}} + \underbrace{\cO(n/\delta^2)}_\text{Constructing the packing}] \\
= n^{\cO(1/\delta^8 \log(1/\delta))} \log(m) \log(p_{\max})
\end{align*}

Since we bounded the error with $\cO(\delta)$ in each step, the overall error is also at most $\cO(\delta)$. Setting $\epsilon = \cO(\delta)$ we get the desired approximation ratio. This completes the algorithm and its analysis. Overall, we get:

\begin{theorem}
A schedule $\sigma$ for the non-preemptive version of the Class Constrained Scheduling problem is obtained in time $n^{\cO(1/\delta^8 \log(1/\delta))} \log(m) \log(p_{\max})$ with makespan $\mu(\sigma_P) \leq (1+ \epsilon) \cdot \mu(\textsc{opt}(I))$, where OPT$(I)$ denotes a solution with optimal makespan for packing the instance $I$. This yields the desired PTAS $P$ for this problem.
\end{theorem}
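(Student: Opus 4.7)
The plan is to assemble the theorem from the components developed throughout this subsection and argue that their composition yields both the claimed approximation ratio and running time. Concretely, upon receiving $(I,\epsilon)$, set $\delta = \Theta(\epsilon)$ with $1/\delta\in\ints$ small enough that the cumulative error analysis below gives $(1+\cO(\delta))\le(1+\epsilon)$, and embed the inner algorithm (described below) into a binary search over a guess $T$ for the optimal makespan. For each guess $T$, first apply the grouping procedure from the preprocessing step to obtain the instance $I'$ in which every class is purely small or purely large, invoking Lemma~\ref{lem:preprocess_non-preemptive} to argue that feasibility of $T$ for $I$ implies feasibility of $(1+3\delta)T$ for $I'$. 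Then round and scale as described to obtain $I''$ with only $|\mathcal{P}|=\cO(1/\delta^2)$ distinct large processing times and integral data, paying at most another factor of $(1+2\delta)$, so that any schedule of makespan $T$ for $I$ induces one of makespan $\bar T = (1+3\delta)(1+2\delta)T = (1+\cO(\delta))T$ for $I''$.

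Next, set up the \nfold{} IP with constraints (\ref{c0_np})--(\ref{c5_np}) for the target $\bar T$ and solve it via Theorem~\ref{nfold}. If the IP is infeasible then by the contrapositive of Lemma~\ref{lem:non-preemptive_nfold_correct} no schedule of makespan $\bar T$ exists for $I''$, hence no schedule of makespan $T$ exists for $I$, and the current $T$ may safely be rejected in the binary search. If the IP is feasible, decode a solution into a schedule by the unfolding procedure described earlier: assign configurations to machines from the $x$-variables, expand each configuration slot into its module (using the $y$-variables) and each module into its constituent original jobs, then place small jobs by round-robin within each $(h,b)$-group using the $z$-variables. Constraints (\ref{c0_np})--(\ref{c4_np}) guarantee that the unfolding never overflows a slot, while Lemma~\ref{t:RoundRobin} together with (\ref{c2_np}) and (\ref{c3_np}) bound the additional load due to small jobs by $\delta T$, yielding a schedule of makespan at most $\bar T + \delta T = (1+\cO(\delta))T$. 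Finally, replace the rounded processing times by the original ones greedily; since rounding was already accounted for in $\bar T$, no further error is introduced.

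For the running time, the binary search runs over $\cO(\log(n p_{\max}/\delta))$ values since $n p_{\max}$ upper bounds the optimal makespan and the search is discretized at accuracy $\delta$. Each iteration performs $\cO(n)$ preprocessing work, solves the \nfold{} in time $C^{\cO(1/\delta^8\log(1/\delta))}\log m$ by the bounds on $r,s,t,\Delta,L$ already computed, and constructs the schedule in time $\cO(n/\delta)$. Using $C\le n$ these combine to $n^{\cO(1/\delta^8\log(1/\delta))}\log(m)\log(p_{\max})$, and identifying $\epsilon=\Theta(\delta)$ gives the stated bound.

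The main obstacle is really bookkeeping: verifying that each step's error, which was bounded by $\cO(\delta)$ individually, truly composes to $(1+\cO(\delta))T$ overall (so that rescaling $\delta$ by a constant absorbs everything into $(1+\epsilon)$), and checking that the unfolding of configurations into modules into original jobs preserves feasibility of the class-slot and makespan constraints simultaneously for both large and small classes. Both of these follow directly from the structural lemmas and from the constraints (\ref{c1_np})--(\ref{c5_np}), so assembling them into a single clean argument is the only technical work required.
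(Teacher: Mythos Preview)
Your proposal is correct and follows essentially the same approach as the paper: the theorem is proved by assembling the pieces of the subsection---preprocessing via Lemma~\ref{lem:preprocess_non-preemptive}, rounding/scaling, the \nfold{} IP with Lemma~\ref{lem:non-preemptive_nfold_correct} for soundness, the unfolding plus round-robin reconstruction using Lemma~\ref{t:RoundRobin}, and the binary-search wrapper---and checking that the individual $\cO(\delta)$ errors compose and the running-time terms multiply out as stated. The only minor discrepancy is your justification of the binary-search iteration bound (``discretized at accuracy $\delta$''): in the non-preemptive case the optimal makespan is integral, so the search is over integers and terminates in $\cO(\log(n\,p_{\max}))$ steps regardless of $\delta$; this does not affect the final bound.
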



\subsection{Preemptive Case}
In the preemptive case we are allowed to split jobs arbitrary as long as pieces belonging to the same job are not executed in parallel. This additional constraint makes it the hardest case. However, we handle these obstacles by proving some nice structure about an optimal solution. Using it we can then formulate the \nfold{}. Again, adapting the remaining steps of the algorithms presented before, we get the desired PTAS.

\subparagraph{Preprocessing.}

We perform the same preprocessing we did for the non-preemptive case and derive an instance $I'$ with a set of jobs $J'$ and the property that each class is either large or small.
We have:
\begin{lemma}\label{lem:preprocess_preemptive}
If there is a schedule with makespan $T$ for $I$, then there is also a schedule with makepan $(1+3\delta)T$ for instance $I'$ in which each job belonging to a small class is completely scheduled on one machine.
\end{lemma}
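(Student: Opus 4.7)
I would mirror the proof of Lemma \ref{lem:preprocess_non-preemptive}, noting that its LP-based argument extends almost verbatim to a fractional assignment. Given the preemptive schedule for $I$ with makespan $T$, let $y_{ij}\in[0,1]$ denote the fraction of job $j$ processed on machine $i$. Then the definitions $x^*_{ij'} = \sum_{j\in J(j')} y_{ij}p_j / p'_{j'}$, $T_i = \sum_{j\in J\setminus J'} y_{ij}p_j$, and $z_{ij'}=\lceil x^*_{ij'}\rceil$ still give $x^*_{ij'}\in[0,1]$ and $z_{ij'}\in\{0,1\}$, and $(x^*_{ij'})$ remains a feasible solution of the analogous LP. Applying the rounding of Lenstra et al.\ then yields $\bar x_{ij'}\in\{0,1\}$ that assigns each new job $j'\in J'\setminus J$ to exactly one machine, with added new-job load at most $\max_{j'}p'_{j'} \leq 3\delta T$ per machine.

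Because the preprocessing always introduces a fresh job whenever it touches a class, every small class of $I'$ consists of a single newly introduced job, so in particular every small-class job is placed on one machine under $\bar x$. The total load on each machine is bounded by $(T-T_i)+(T_i+3\delta T) = (1+3\delta)T$, the total processing time of every job is at most $(1+3\delta)T$ (bounded by $T$ for $j\in J\cap J'$ and by $3\delta T$ for merged jobs), and the constraint $\bar x_{ij'}\leq z_{ij'}$ forces a new job to be placed only on machines that already contained its class in the original schedule, preserving the class constraints.

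What remains is to realize this bipartite load matrix as an actual preemptive schedule, which I would do via a Gonzalez-Sahni-style preemptive open-shop scheduling result: given arbitrary nonnegative per-job, per-machine loads, a preemptive schedule of makespan equal to the maximum of the largest job total and the largest machine total always exists, and both quantities are at most $(1+3\delta)T$ here. The main obstacle, and the reason the given preemptive schedule for $I$ cannot simply be re-used, is that naively turning the original pieces of $J(j')$ into pieces of the merged job $j'$ would violate preemptivity whenever those original pieces ran in parallel on different machines; forcing each new job onto a single machine via the rounding renders that parallelism constraint vacuous for merged jobs, so that only the standard preemptive scheduling of the (still split) old jobs remains.
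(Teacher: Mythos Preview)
Your proposal is correct and mirrors the paper's approach closely: the paper also replaces the binary $y_{ij}$ by fractions $y_{ij}\in[0,1]$, sets up the same LP, and applies the Lenstra et al.\ rounding to place each grouped job on a single machine. The only divergence is in the final step of realizing a preemptive schedule for $I'$: the paper keeps the original time-structure for the untouched jobs $J\cap J'$ and simply inserts each new job into the gaps on its (now unique) machine left by the removed jobs of $J\setminus J'$, whereas you discard the time-structure entirely and rebuild a preemptive schedule from the load matrix via an open-shop (Gonzalez--Sahni) argument. Both routes are valid; the paper's is a touch more direct since the new jobs are already integral and the old jobs already form a feasible preemptive schedule, while your reduction to a known black box is cleaner to state.
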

\begin{proof}
The proof is very similar to the proof of Lemma \ref{lem:preprocess_non-preemptive}.
The main difference is that we define $y_{i,j} \in [0,1]$ to be the fraction of job $j\in J$ that is scheduled on machine $i$.
Furthermore, when constructing the schedule for $I'$ from the $\bar{x}$-variables, we have to place the jobs from $J'\setminus J$ into the gaps left when removing the jobs from $J\setminus J'$.
Note that we do not have to change the approach to guarantee that each job belonging to a small classes is completely scheduled on one machine afterwards.
\end{proof}
The existence of a schedule in which each job belonging to a small class is completely scheduled on one machine is an important detail in the following. 

Furthermore, we round and scale the processing times and the makespan like in the two other cases and call the resulting instance $I''$.
For small classes $u$, we write $p''_u$ to denote the processing time of the single job of the class.
Like before, we denote the set of rounded processing times occurring in large classes by $\mathcal{P}$ and the number of jobs of class $u$ and size $p\in\mathcal{P}$ by $n_p^u$.

\subparagraph{Well-Structured Schedule.}

In the preemptive case, we call a schedule well-structured if the following two conditions hold:
\begin{itemize}
\item Each job belonging to a small class is completely scheduled on one machine.
\item For jobs belonging to large classes, each job piece starts at a multiple of $\delta^2 T$ and its size is a multiple of $\delta^2 T$. 
\end{itemize}
Given some fixed makespan bound $T'$, we define the set of layers $L$ as $\{ \ell\in\ints_{> 0} |  (\ell - 1)\delta^2 T \leq T'\}$. 
If a job $j$ is at least partially scheduled in the time window $[(\ell-1)\delta^2T, \ell\delta^2 T)$ for some $\ell\in\ints_{> 0}$, we say that job $j$ is placed in layer $\ell$.
Moreover, we call pairs of layers and machines slots, denote the set of slots as $S = M \times L$, and say that a job $j$ is placed in a slot $s = (i,\ell)$ if it is (partially) scheduled on $i$ in the layer $\ell$.
Obviously, in a well-structured schedule, pieces of jobs belonging to large classes that are placed in some slot have to fill the slot completely.

\begin{lemma}
If there is a schedule with makespan $T'$ for instance $I'$ in which each job belonging to a small class is completely scheduled on one machine, then there is also a well-structured schedule for $I'$ with a makespan of at most $T' + \delta^2T$.
\end{lemma}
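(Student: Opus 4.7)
The plan is to convert the given schedule into a well-structured one by aligning every large-class piece to the layer grid at a cost of at most one additional layer. I would proceed in three steps, mirroring the approach of Lemma~\ref{lem:splittable_well-structured}.

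First, I split every large-class piece at each layer boundary it crosses, so that afterwards every piece lies inside a single slot $s=(i,\ell)$ and has size $p_P \leq \delta^2 T$. This cutting step does not change the underlying schedule and ensures that each resulting piece is associated to a unique (machine, layer) pair while the no-parallelism property is preserved.

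Second, I would set up a linear program of the same flavor as in the splittable case: introduce variables $x_{P,s}\in[0,1]$ representing the fraction of piece $P$ placed in slot $s$, with constraints saying each piece is fully placed ($\sum_{s} x_{P,s}=1$), each slot has load at most $\delta^2 T$ ($\sum_{P} p_P x_{P,s} \leq \delta^2 T$), and, for each large-class job $j$ and each layer $\ell$, at most one slot $(i,\ell)$ contains pieces of $j$ (encoding the no-parallelism requirement across machines). The fragmented schedule obtained after Step~1 gives a feasible fractional solution. Applying the classical rounding of Lenstra et al.~\cite{lenstra} yields an integral assignment in which the load of each slot exceeds $\delta^2 T$ by at most the size of one piece, hence by at most $\delta^2 T$. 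Introducing a single extra layer of padding absorbs this overflow and produces a schedule with makespan $T' + \delta^2 T$ in which every large-class piece starts at a layer boundary and has size a multiple of $\delta^2 T$.

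Third, the small-class jobs, which are already scheduled entirely on one machine in the input by hypothesis, remain on that same machine and are simply fit into the space not occupied by large-class pieces, using the newly added slack layer when needed. Since the well-structured property imposes no alignment requirement on small-class jobs, nothing more has to be done.

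The main obstacle, and the reason why this case is harder than the splittable one of Lemma~\ref{lem:splittable_well-structured}, lies in preserving the no-parallelism constraint under rounding: a naive aligned packing might take two pieces of the same job that were originally non-overlapping in time within a common layer and blow them up into two parallel full-slot pieces on different machines. To handle this cleanly I would apply the rounding layer by layer; within each fixed layer the no-parallelism constraint becomes a bipartite assignment constraint of the form $\sum_i x_{P,(i,\ell)} \leq 1$ for pieces $P$ of the same job, which is exactly the type of constraint the Lenstra-Shmoys-Tardos argument preserves in integrality.
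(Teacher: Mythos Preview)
Your plan has a genuine gap: the Lenstra--Shmoys--Tardos rounding is the wrong tool here. That rounding takes an LP of the form ``assign items to bins, $\sum_s x_{P,s}=1$, $\sum_P p_P x_{P,s}\le B$'' and returns an integral assignment where each bin's load exceeds $B$ by at most one item size. Applied to your LP it would place each \emph{piece} integrally into a slot, but a slot could then receive several pieces (possibly from several jobs and several classes) with total size up to $2\delta^2T$. That is not a well-structured schedule, which requires each slot to hold \emph{one} job filling it entirely. Your ``one extra layer'' absorbs the load overflow but does not turn a multi-piece slot into a single-job slot. Moreover, LST rounding only preserves the assignment equalities and the $0\le x\le z$ box constraints; it does \emph{not} preserve side constraints such as ``all pieces of job $j$ lying in layer $\ell$ occupy at most one machine'', so your per-layer fix does not enforce non-parallelism after pieces are inflated to full slots. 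Finally, you never restrict a piece to machines that already carry its class, so the rounding can introduce new classes on a machine and break the class constraint.

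The paper sidesteps all of this by recasting the problem as an integral max-flow. The nodes are jobs, (job,layer) pairs, slots, and machines; unit capacities on $(x_j,u_{j,\ell})$ encode ``at most one slot per layer for job $j$'' exactly, unit capacities on $(v_{i,\ell},y_i)$ encode ``one job per slot'' exactly, and capacities $\chi_{i,j}\in\{0,1\}$ forbid moving a job to a machine not already hosting its class. The given schedule yields a fractional flow of value $\sum_j p_j/(\delta^2T)$, and because all capacities are integers (here the rounding of $p_j$ to multiples of $\delta^2T$ is used), an integral max flow of the same value exists, directly giving the well-structured schedule. The only slack appears on the $(y_i,\omega)$ edges via $\lceil D_i/(\delta^2T)\rceil$, which accounts for the single extra $\delta^2T$ in the makespan. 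So the missing idea in your approach is to use flow integrality (which handles several families of unit constraints simultaneously and exactly) instead of LST rounding (which relaxes one family of packing constraints additively).
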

\begin{proof}
Let there be a schedule with makespan $T'$ for instance $I'$ in which each job belonging to a small class is completely scheduled on one machine.
For each machine $i$, let $D_i$ denote the overall processing time of jobs belonging to large classes that is executed on $i$. 
Moreover, let $\hat{J}$ denote the set of jobs belonging to large classes, and for each job $j$ and machine $i$, let $\chi_{i,j} = 1$ if a job belonging to the same class as $j$ is scheduled on $i$ and $\chi_{i,j} = 0$ otherwise.
We construct a flow network with the following nodes:
\begin{itemize}
\item A source $\alpha$ and a sink $\omega$,
\item $x_j$ for each job $j\in \hat{J}$, 
\item $u_{j \times \ell}$ for each job $j\in \hat{J}$ and layer $\ell\in L$, 
\item $v_{i\times \ell}$ for each slot $(i,\ell) \in S$, 
\item and $y_i$ for each machine $i\in M$.
\end{itemize}
Furthermore, we have the following edges and capacities:
\begin{itemize}
\item $(\alpha, x_j)$ for each $j\in \hat{J}$ with capacity $p_j/(\delta^2 T)$ (Note that $p_j$ is a multiplicity of $\delta^2T$ due to the rounding and thus $p_j/(\delta^2 T)$ is integral),
\item $(x_j, u_{j \times \ell})$ for each $j\in \hat{J}$ and $\ell\in L$ with capacity $1$,
\item $(u_{j \times \ell}, v_{i\times \ell})$ for each $j\in \hat{J}$, $i\in M$ and $\ell\in L$ with capacity $\chi_{i,j}$,
\item $(v_{i\times \ell}, y_i)$ for each $i\in M$ and $\ell\in L$ with capacity $1$,
\item and $(y_i, \omega)$ for each $i\in M$ with capacity $\lceil D_i / (\delta^2 T) \rceil$. 
\end{itemize} 
The construction is summarized in Figure \ref{fig:flow_well-structured_preemptive}.
\begin{figure}
\centering
\scalebox{1.0}{
\begin{tikzpicture}
\pgfmathsetmacro{\lgap}{2.2} 
\pgfmathsetmacro{\vlgap}{2.6} 
\pgfmathsetmacro{\vbgap}{1.2} 
\pgfmathsetmacro{\vsgap}{0.7} 
\pgfmathsetmacro{\recgap}{0.39} 


\node[draw, label = {[label distance= 0 pt]180:{$\alpha$}}, circle, fill=black, inner sep=0pt, minimum width=4pt] (source) at (0:0) {};

\begin{scope}[xshift = \lgap cm]
\node[draw, circle, fill=black, inner sep=0pt, minimum width=4pt] (x1) at (0,\vbgap) {};
\node[draw, circle, fill=black, inner sep=0pt, minimum width=4pt] (xj) at (0:0) {};
\node[draw, label = {[label distance=0pt]180:{}}, circle, fill=black, inner sep=0pt, minimum width=4pt] (xn) at (0,-\vbgap) {};

\node at ($(x1)!0.39!(xj)$) {$\vdots$};
\node at ($(xj)!0.39!(xn)$) {$\vdots$};

\draw[dashed] ($(xn) + (-\recgap,-\recgap)$) rectangle ($(x1) + (\recgap,\recgap)$);

\draw[->, >={Latex[length=1.8mm]}] (source) to (x1);
\draw[->, >={Latex[length=1.8mm]}] (source) to node [midway, above, yshift = -2pt] {\small$\frac{p_j}{\delta^2 T}$} (xj) ;
\draw[->, >={Latex[length=1.8mm]}] (source) to (xn);

\end{scope}

\begin{scope}[xshift = 2*\lgap cm]

\begin{scope}[yshift = \vlgap cm]
\node[draw, circle, fill=black, inner sep=0pt, minimum width=4pt] (u11) at (0,\vsgap) {};
\node[draw, circle, fill=black, inner sep=0pt, minimum width=4pt] (u1l) at (0:0) {};
\node[draw, circle, fill=black, inner sep=0pt, minimum width=4pt] (u1L) at (0,-\vsgap) {};

\draw[dashed] ($(u1L) + (-\recgap,-\recgap)$) rectangle ($(u11) + (\recgap,\recgap)$);

\node at ($(u11)!0.39!(u1l)$) {$\vdots$};
\node at ($(u1l)!0.39!(u1L)$) {$\vdots$};

\end{scope}

\begin{scope}[yshift = 0 cm]
\node[draw, circle, fill=black, inner sep=0pt, minimum width=4pt] (uj1) at (0,\vsgap) {};
\node[draw, circle, fill=black, inner sep=0pt, minimum width=4pt] (ujl) at (0:0) {};
\node[draw, circle, fill=black, inner sep=0pt, minimum width=4pt] (ujL) at (0,-\vsgap) {};

\draw[dashed] ($(ujL) + (-\recgap,-\recgap)$) rectangle ($(uj1) + (\recgap,\recgap)$);

\node at ($(uj1)!0.39!(ujl)$) {$\vdots$};
\node at ($(ujl)!0.39!(ujL)$) {$\vdots$};

\draw[->, >={Latex[length=1.8mm]}] (xj) to (uj1);
\draw[->, >={Latex[length=1.8mm]}] (xj) to node [midway, above, yshift = -2pt] {\small $1$} (ujl);
\draw[->, >={Latex[length=1.8mm]}] (xj) to (ujL);
\end{scope}

\begin{scope}[yshift = -\vlgap cm]
\node[draw, circle, fill=black, inner sep=0pt, minimum width=4pt] (un1) at (0,\vsgap) {};
\node[draw, circle, fill=black, inner sep=0pt, minimum width=4pt] (unl) at (0:0) {};
\node[draw, circle, fill=black, inner sep=0pt, minimum width=4pt] (unL) at (0,-\vsgap) {};

\draw[dashed] ($(unL) + (-\recgap,-\recgap)$) rectangle ($(un1) + (\recgap,\recgap)$);

\node at ($(un1)!0.39!(unl)$) {$\vdots$};
\node at ($(unl)!0.39!(unL)$) {$\vdots$};

\end{scope}

\draw[dashed] ($(unL) + (-2*\recgap,-2*\recgap)$) rectangle ($(u11) + (2*\recgap,2*\recgap)$);
\end{scope}

\begin{scope}[xshift = 3.2*\lgap cm]

\begin{scope}[yshift = \vlgap cm]
\node[draw, circle, fill=black, inner sep=0pt, minimum width=4pt] (v11) at (0,\vsgap) {};
\node[draw, circle, fill=black, inner sep=0pt, minimum width=4pt] (v1l) at (0:0) {};
\node[draw, circle, fill=black, inner sep=0pt, minimum width=4pt] (v1L) at (0,-\vsgap) {};

\draw[dashed] ($(v1L) + (-\recgap,-\recgap)$) rectangle ($(v11) + (\recgap,\recgap)$);

\node at ($(v11)!0.39!(v1l)$) {$\vdots$};
\node at ($(v1l)!0.39!(v1L)$) {$\vdots$};
\end{scope}

\begin{scope}[yshift = 0 cm]
\node[draw, circle, fill=black, inner sep=0pt, minimum width=4pt] (vi1) at (0,\vsgap) {};
\node[draw, circle, fill=black, inner sep=0pt, minimum width=4pt] (vil) at (0:0) {};
\node[draw, circle, fill=black, inner sep=0pt, minimum width=4pt] (viL) at (0,-\vsgap) {};

\draw[dashed] ($(viL) + (-\recgap,-\recgap)$) rectangle ($(vi1) + (\recgap,\recgap)$);

\node at ($(vi1)!0.39!(vil)$) {$\vdots$};
\node at ($(vil)!0.39!(viL)$) {$\vdots$};
\end{scope}

\begin{scope}[yshift = -\vlgap cm]
\node[draw, circle, fill=black, inner sep=0pt, minimum width=4pt] (vm1) at (0,\vsgap) {};
\node[draw, circle, fill=black, inner sep=0pt, minimum width=4pt] (vml) at (0:0) {};
\node[draw, circle, fill=black, inner sep=0pt, minimum width=4pt] (vmL) at (0,-\vsgap) {};

\draw[dashed] ($(vmL) + (-\recgap,-\recgap)$) rectangle ($(vm1) + (\recgap,\recgap)$);

\node at ($(vm1)!0.39!(vml)$) {$\vdots$};
\node at ($(vml)!0.39!(vmL)$) {$\vdots$};
\end{scope}

\draw[->, >={Latex[length=1.8mm]}] (ujl) to (v1l);
\draw[->, >={Latex[length=1.8mm]}] (ujl) to node [midway, above, yshift = -2pt] {\small $\chi_{i,j}$} (vil);
\draw[->, >={Latex[length=1.8mm]}] (ujl) to (vml);

\draw[->, >={Latex[length=1.8mm]}] (u1l) to (vil);
\draw[->, >={Latex[length=1.8mm]}] (unl) to (vil);

\draw[dashed] ($(vmL) + (-2*\recgap,-2*\recgap)$) rectangle ($(v11) + (2*\recgap,2*\recgap)$);
\end{scope}

\begin{scope}[xshift = 4.2*\lgap cm]
\node[draw, circle, fill=black, inner sep=0pt, minimum width=4pt] (y1) at (0,\vbgap) {};
\node[draw, circle, fill=black, inner sep=0pt, minimum width=4pt] (yi) at (0:0) {};
\node[draw, circle, fill=black, inner sep=0pt, minimum width=4pt] (ym) at (0,-\vbgap) {};

\draw[dashed] ($(ym) + (-\recgap,-\recgap)$) rectangle ($(y1) + (\recgap,\recgap)$);

\node at ($(y1)!0.39!(yi)$) {$\vdots$};
\node at ($(yi)!0.39!(ym)$) {$\vdots$};

\draw[->, >={Latex[length=1.6mm]}] (vi1) to (yi);
\draw[->, >={Latex[length=1.6mm]}] (vil) to node [midway, above, yshift = -2pt] {\small $1$} (yi);
\draw[->, >={Latex[length=1.6mm]}] (viL) to (yi);
\end{scope}

\begin{scope}[xshift = 5.2*\lgap cm]
\node[draw, label = {[label distance=0 pt]0:{$\omega$}}, circle, fill=black, inner sep=0pt, minimum width=4pt] (sink) at (0:0) {};

\draw[->, >={Latex[length=1.6mm]}] (y1) to (sink);
\draw[->, >={Latex[length=1.6mm]}] (yi) to node [midway, above, yshift = -2pt, xshift = -4pt] {\footnotesize $\big\lceil\! \frac{D_i}{\delta^2\! T} \big\rceil$} (sink);
\draw[->, >={Latex[length=1.6mm]}] (ym) to (sink);
\end{scope}

\node at ($ (xj) + (0, \vbgap + 1.8*\recgap) $) {\small Jobs};
\node[circle, fill = white, minimum width=4pt, inner sep=0pt] (lxj) at ($ (xj) + (45:0.4) $) {$x_j$};

\node at ($ (ujl) + (0, \vsgap + \vlgap + 2.8*\recgap) $) {\small Jobs $\times$ Layers};
\node[circle, fill = white, minimum width=4pt, inner sep=0pt] (lxj) at ($ (ujl) + (38:0.45) $) {$u_{j,\ell}$};
\node[circle, fill = white, minimum width=4pt, inner sep=0pt] (lxj) at ($ (ujl) + (60:1.2) $) {$j$};

\node at ($ (vil) + (0, \vsgap + \vlgap + 2.8*\recgap) $) {\small Slots};
\node[circle, fill = white, minimum width=4pt, inner sep=0pt] (lxj) at ($ (vil) + (38:0.43) $) {$v_{i,\ell}$};
\node[circle, fill = white, minimum width=4pt, inner sep=0pt] (lxj) at ($ (vil) + (60:1.2) $) {$i$};

\node at ($ (yi) + (0, \vbgap + 1.8*\recgap) $) {\small Machines};
\node[circle, fill = white, minimum width=4pt, inner sep=0pt] (lxj) at ($ (yi) + (45:0.38) $) {$y_i$};

\end{tikzpicture}
}
\caption{The flow network used to proof the existence of a well-structured schedule. Only edges incident to the nodes on the middle vertical axes ($\alpha$, $x_j$, $u_{j,\ell}$, \dots) are added.}
\label{fig:flow_well-structured_preemptive}
\end{figure}
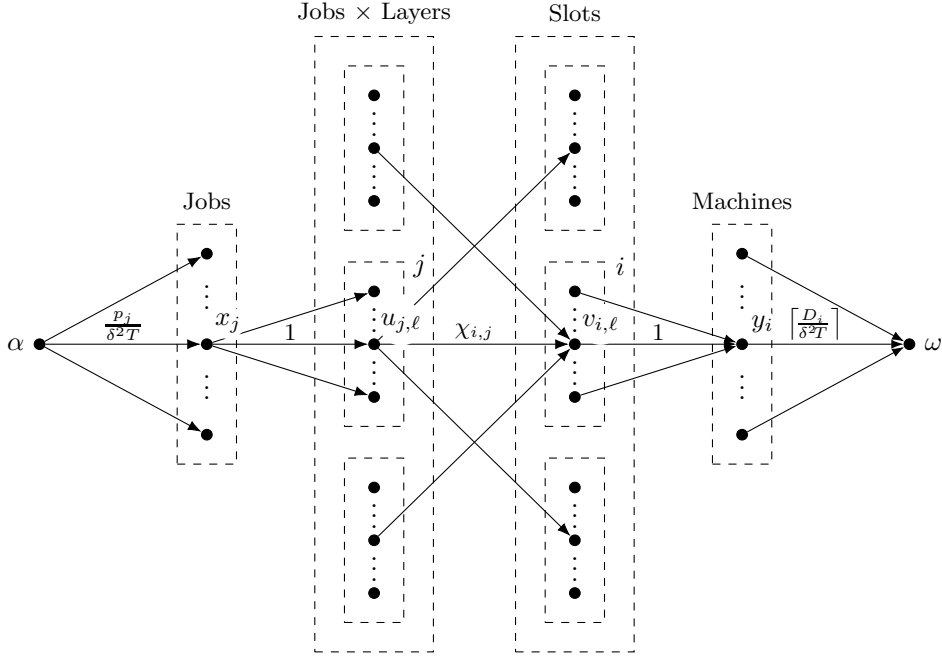
Note that all the capacities are integral and $\sum_{j\in \hat{J}} p_j/(\delta^2 T)$ is an obvious upper bound for a maximum flow in the network.
Let $p(i,j,\ell)$ be the processing time of job $j$ placed in slot $(i,\ell)$ in the given schedule.
It is not hard to verify that we get a feasible flow $f$ with value $\sum_{j\in \hat{J}} p_j/(\delta^2 T)$ by setting:
\begin{center}
\begin{tabular}{l|ccccc}
$\eta$ & $(\alpha, x_j)$ & $(x_j, u_{j \times \ell})$ & $(u_{j \times \ell}, v_{i\times \ell})$ & $(v_{i\times \ell}, y_i)$ & $(y_i, \omega)$\\\midrule
$f(\eta)$ &  $\frac{p_j}{\delta^2 T}$ & $\sum_{i\in M}\frac{p(i,j,\ell)}{\delta^2 T}$ & $\frac{p(i,j,\ell)}{\delta^2 T}$ & $\sum_{j\in \hat{J}}\frac{ p(i,j,\ell)}{\delta^2 T}$ & $\frac{D_i}{\delta^2 T}$\\
\end{tabular}
\end{center}
By flow integrality, there also exists an integral flow $\bar{f}$ with value $\sum_{j\in \hat{J}} p_j/(\delta^2 T)$.
We may use $\bar{f}$ in turn to define a schedule for the jobs of $\hat{J}$.
We do so by defining the processing time $\bar{p}(i,j,\ell)$ of job $j$ placed in slot $(i,\ell)$ in the new schedule.
Note that $\bar{f}(e)\in \set{0,1}$ for any edge $e$ of the second, third or fourth type.
We set $\bar{p}(i,j,\ell) = \bar{f}(u_{j \times \ell}, v_{i\times \ell}) \delta^2 T$.
Due to the structure of the flow network, we have:
\begin{itemize}
\item For each job $j$ and layer $\ell$ there is at most one machine $i$ with $\bar{p}(i,j,\ell) > 0$.
\item For each machine $i$ and layer $\ell$ there is at most one job $j$ with $\bar{p}(i,j,\ell) > 0$.
\item For each job $j$, we have $p_j/(\delta^2 T) = \sum_{i\in M}\sum_{\ell\in L} \bar{p}(i,j,\ell)$.
\item For each machine $i$, we have $ \sum_{j\in\hat{J}}\sum_{\ell\in L}\bar{p}(i,j,\ell) \leq D_i + \delta^2 T$
\end{itemize}
Hence, if we schedule the jobs belonging to small classes on the same machines as before and place them greedily into the gaps, we get a feasible schedule with makespan at most $T' + \delta^2 T$.
\end{proof}

\subparagraph{Setting up the \nfold{}.}

Taking the above steps into considerations, we set $\bar{T} = (1+3\delta)(1+\delta^2)T =(1+\cO(\delta))T$ and search for a well-structured schedule with makespan $\bar{T}$ via an \nfold{} IP.
We define the set of layers $L$ with respect to this makespan bound, that is, $L = \sett{\ell\in\ints_{> 0}}{(\ell - 1)\delta^2 T \leq \bar{T}}$.
In a well-structured schedule, jobs fill up whole slots on a given machine, and the slots filled up by jobs of a certain class may be distributed in any possible way on that machine.
Hence, we define modules in this context as 0-1-vectors indexed by the layers that include at least one 1, i.e., $\mathcal{M} = \set{0,1}^L\setminus\set{(0,\dots,0)^\top}$.
Moreover, we define configurations as 0-1-vectors indexed by the modules such that at most $c$ modules are chosen, and no two modules occupying the same layer are chosen, that is, $\mathcal{K} = \sett[\big]{K\in\set{0,1}^{\mathcal{M}}}{\Vert K \Vert_1\leq c, \forall \ell\in L: \sum_{M\in\mathcal{M}}K_M M_\ell \leq 1}$.
The size of a configuration $\Lambda(K)$ is determined by the number of filled up slot, i.e., $\Lambda(K) = \delta^2 T\sum_{M\in\mathcal{M}} K_M\Vert M\Vert_1$.
Note that $\Vert K\Vert_1 \leq |L| = \cO(1/\delta^2)$ for each $K\in\mathcal{K}$.
Correspondingly, we set $c^* = \min\set{c, |L|}$ and we define $\mathcal{K}(h,b) = \sett{K\in\mathcal{K}}{\Lambda(K) = h, \Vert K\Vert_1 = b}$ for each $h\in\Lambda(\mathcal{K})$ and $c\in[c^*]$.

Let $u\in[C]$ be a class.
We have a variable $y^u_M\in\set{0,\dots, m}$ for each module $M\in \mathcal{M}$ indicating how often $M$ is chosen to cover the jobs of class $u$.
Moreover, we introduce a variable $x_K^u \in\set{0,\dots, m}$ for each configuration $K\in \mathcal{K}$. 
Like before, the duplication of the latter variables does not carry meaning and is only used to obtain the desired \nfold{} structure of the constraint matrix. 
Furthermore, we have binary variables $ z^u_{h,b}\in\set{0,1}$ for each $h\in\Lambda(\mathcal{K})$ and $b\in[c^*]$ which are used to decide whether the class is assigned to a machine on which $b$ job pieces with overall size $h$ and belonging to large classes are scheduled.
Lastly, we introduce variables $a^u_{p,\ell}\in\set{0,\dots,m}$ for each processing time $p\in\mathcal{P}$ and layer $\ell$.
These variables are used to determine how many slots in a given layer $\ell$ are filled by jobs with size $p$ and belonging to class $u$.
The \nfold{} has the following constraints:

\begin{align*}
\tag{0}
\label{c0_p}
&\sum\limits_{u=1}^C \sum\limits_{K \in \mathcal K} x^u_K = m\\
\tag{1}
\label{c1_p}
&\sum\limits_{u=1}^C \sum\limits_{K \in \mathcal K} K_M  x^u_K = \sum\limits_{u=1}^C  y^u_{M} & \forall M \in \mathcal M\\
\tag{2}
\label{c2_p}
&\sum\limits_{u=1}^{C} z^u_{h,b} + b  \sum\limits_{u=1}^C \sum\limits_{K \in \mathcal K(h,b)} x^u_K \leq c  \sum\limits_{u=1}^C \sum\limits_{K \in \mathcal K(h,b)} x^u_K &\forall h \in \Lambda(\mathcal{K}),b \in [c^*]\\
\tag{3}
\label{c3_p}
&\sum\limits_{u=1}^{C} p''_u  z^u_{h,b} + h  \sum\limits_{u=1}^C \sum\limits_{K \in \mathcal K(h,b)} x^u_K \leq \bar{T} \sum\limits_{u=1}^C \sum\limits_{K \in \mathcal K(h,b)} x^u_K &\forall h \in \Lambda(\mathcal{K}),b \in [c^*]\\
\tag{4}
\label{c4_p}
&\sum\limits_{\ell \in L} a_{p,\ell}^{u} = (1-\xi_u)\frac{p}{\delta^2T}n_p^u  &\forall u \in [C], p\in \mathcal{P} \\
\tag{5}
\label{c5_p}
&\sum\limits_{M\in\mathcal{M}} M_\ell y^u_M = \sum_{p\in\mathcal{P}}a_{p,\ell}^u &\forall u \in [C],\ell\in L\\
\tag{6}
\label{c6_p}
&\sum\limits_{h\in \Lambda(\mathcal{K})} \sum\limits_{b\in [c^*]} z^u_{h,b} = \xi_u & \forall u \in [C]
\end{align*}
Note that the \nfold{} is very similar two the ones presented so far. 
The main difference lies in the changed definitions of modules and configurations and in the Constraints (\ref{c4_p}) and (\ref{c5_p}).
Due to Constraint (\ref{c4_p}), it is guaranteed that a proper number of slots is reserved to place all the jobs of a certain class and size.
Furthermore, these slots are proberly covered by modules because of Constraint (\ref{c5_p}).
\begin{lemma}\label{lem:preemptive_nfold_correct}
If there is a schedule with makespan $\bar{T}$ for instance $I''$, then there is also a solution to the above \nfold{} IP.
\end{lemma}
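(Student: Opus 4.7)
The plan is to mirror the constructions in Lemma \ref{lem:splittable_nfold_correct} and Lemma \ref{lem:non-preemptive_nfold_correct}, adapted to the preemptive setting where the elementary unit is a slot $(i,\ell)\in S$ rather than a job piece of free shape. First, I would invoke the preceding well-structured schedule lemma to convert the given schedule into one in which every job piece of a large class exactly fills some slot and each job of a small class lies entirely on a single machine. (Any small additive slack incurred here is absorbed into the definition of $\bar T = (1+3\delta)(1+\delta^2)T$.)

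With a well-structured schedule at hand, I would read off the variables machine by machine. For each machine $i$ and each class $u$ with at least one piece on $i$, let $M^{(i,u)}\in\mathcal{M}$ be the $0/1$ vector whose $\ell$-th entry is $1$ iff the slot $(i,\ell)$ is occupied by class $u$; since every slot hosts at most one class, the modules $\{M^{(i,u)}\}_u$ are pairwise distinct and their sum is $\leq\mathbf{1}$ componentwise, so the vector $K^{(i)}\in\{0,1\}^{\mathcal{M}}$ with $K^{(i)}_M=|\{u:M^{(i,u)}=M\}|$ lies in $\mathcal K$. Collapsing onto the first block, I set $x^1_K$ equal to the number of machines with $K^{(i)}=K$ and $x^u_K=0$ for $u\geq 2$. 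For each large class $u$, I set $y^u_M$ to be the number of machines on which $u$ covers exactly pattern $M$, and $a^u_{p,\ell}$ to be the number of layer-$\ell$ slots occupied by a class-$u$ job of size $p$. For each small class $u$, placed entirely on a single machine $i$, I set $z^u_{\Lambda(K^{(i)}),\,\|K^{(i)}\|_1}=1$ and leave all remaining $z^u,y^u,a^u$ entries at zero; $\xi_u$ toggles appropriately between the small and large cases.

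It then remains to verify the six constraints, which I would do by double counting. Constraint (\ref{c0_p}) counts machines; (\ref{c1_p}) double counts (machine, module) incidences, with the LHS summing over configurations and the RHS over classes. The bundled constraints (\ref{c2_p}) and (\ref{c3_p}) reduce, on each pair $(h,b)$, to the fact that the machines with configuration in $\mathcal K(h,b)$ have exactly $(c-b)$ free class slots and $\bar T-h$ free time per machine, both of which dominate what the small classes routed there use. Constraint (\ref{c4_p}) holds because a class-$u$ job of size $p$ occupies exactly $p/(\delta^2 T)$ slots in the well-structured schedule (well-defined by the rounding), so summing over layers gives $\frac{p}{\delta^2 T}n_p^u$. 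Constraint (\ref{c5_p}) is the per-layer consistency between the number of class-$u$ slots in layer $\ell$ counted via the modules ($\sum_M M_\ell y^u_M$) and via job sizes ($\sum_p a^u_{p,\ell}$). Finally (\ref{c6_p}) is immediate from the small/large case distinction.

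The part I expect to require the most care is not any single inequality but the bookkeeping around the two preprocessing steps: one must chase through the error accounting so that the well-structured schedule produced at the beginning really has a makespan compatible with the parameters used to define $\mathcal K$, $L$, and $\bar T$; once this is in place, the rest of the verification is a sequence of double-counting arguments essentially identical in spirit to the two previous lemmas.
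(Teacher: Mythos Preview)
Your proposal is correct and follows essentially the same approach as the paper: read off configurations machine by machine, collapse the $x$-variables onto the first block, set $y^u_M$ and $a^u_{p,\ell}$ from the slot occupancy of each large class, and set the $z$-variables for small classes from the configuration of their host machine. The paper's proof is terser and simply starts from a well-structured schedule (consistent with the surrounding text, which already sets $\bar T$ to absorb the conversion cost), so your opening step invoking the well-structured lemma is unnecessary; the rest of your construction and the double-counting verification of the constraints match the paper exactly.
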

\begin{proof}
Given a well-structured schedule, the slots occupied by job pieces belonging to a large class define the modules included in $\mathcal{M}$. The combination of modules appearing on one machine then derives a configuration included in $\mathcal{K}$. 
We set the $x^1$-variables accordingly, and set the $x^u$ variables for $u\neq 1$ to $0$.
Let $u\in[c^*]$. If $u$ is a large class it contains only large jobs. These jobs are split into pieces of size $\delta^2 T$ starting at multiplicities of $\delta^2T$ defining the used modules for that class. We set the $y^u$ variables accordingly and the $z^u$ variables to $0$. Further we can count layer-wise the number of slots being used for one large processing time from any class deriving the values for $a^u_{p, \ell}$ variables. If, on the other hand, $u$ is a small class, then the whole class is scheduled on the same machine $i$. Let $K$ be the configuration corresponding to $i$, $h = \Lambda(K)$, and $b = \norm{K}_1$.
We set $z^u_{h,b} = 1$ and $z^u_{h',b'} = 0$ for each $(h',b')\neq (h,b)$.
Furthermore, we set all the $y^u$ variables to $0$.
It is easy to verify that this solution is feasible.
\end{proof}
Hence, if the \nfold{} has no feasible solution, we can reject the makespan guess $T$.

\subparagraph{Solving the \nfold{}.}

We bound the parameters $r, s, t$, $\Delta$ and $L$ in the application of Theorem \ref{nfold}.
First note that we have $s = |\mathcal{P}| + |L| + 1 = \cO(1/\delta^2)$ many locally and $r = 1 + |\mathcal{M}| + 2|\Lambda(\mathcal{K})|c^* = 2^{\cO(1/\delta^2)}$ globally uniform constraints. 
Moreover, there are $t = |\mathcal{K}| + |\mathcal{M}| + 3|\Lambda(\mathcal{K})|c^* + |P||L| = 2^{\cO(1/\delta^4)}$ many variables for each block (including slack variables).
The largest number $\Delta$ is again upper bounded by $\cO(c/\delta^2)$ and the encoding length $L$ of the largest number in the input by $L=\cO(\log(c/\delta^2 + m))$.
Summing up, we get the following running time (using $C\geq \max\set{2,c}$):
\[(rs\Delta)^{\cO(r^2s + s^2)} L \cdot Nt \log^{\cO(1)}(Nt) \leq C^{2^{\cO(1/\delta^2)}} \cdot \log m\]

\subparagraph{Constructing the Schedule.}
Again, we still have to construct a packing using the solution of the \nfold{} LP. First, we assign the configurations chosen by the $x$-variables onto the machines, \ie create the corresponding slots of size $\delta^2T$ at the layers of the belonging modules. Next, we reserve the slots for the corresponding classes according to the $y^u_M$ variables. Finally, we fill the job pieces belonging to large classes accordingly to the $a^u_{p,\ell}$ variables greedily by proceeding as follows: We go trough the layers $\ell \in L$ in an arbitrary order. Fill $a^u_{p,\ell}$ many job pieces of  $a^u_{p,\ell}$ different jobs with processing time $p$, class $u$ and the most unassigned job pieces of size $\delta^2T$ onto the machines which have slots reserved for that class. It is easy to verify that we have sufficient many slots for placing the job in this manner due to the constraints of the \nfold{} ILP. Further, the next theorem proves that this approach will assign all large jobs without conflict, \ie no job pieces belonging to the same large job will be assigned to the same layer. 

\begin{theorem}
We can greedily assign jobs accordingly to the $a^u_{p,\ell}$ variables of class $u$ with processing time $p$, respecting the $y^u_M$ variables, such that job pieces of the same class are not executed in parallel.
\end{theorem}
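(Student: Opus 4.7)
The plan is to analyze the greedy independently for each (large class, large size) pair $(u,p)$ with $u\in[C]$ and $p\in\mathcal P$. Fix such a pair and write $n:=n_p^u$ and $k:=p/(\delta^2 T)$. Constraint~(\ref{c4_p}) then states $\sum_{\ell\in L}a^u_{p,\ell}=nk$. The subgoal is to assign one piece of each of the $n$ jobs of class $u$ and size $p$ to exactly $k$ distinct layers in such a way that layer $\ell$ receives exactly $a^u_{p,\ell}$ pieces. Once each job has its pieces in pairwise distinct layers, no two pieces of the same job share a time window and the preemptive constraint is automatically respected.

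The first step I would take is to argue that without loss of generality $a^u_{p,\ell}\le n$ holds for every layer $\ell$. This is necessary for the greedy to even be well-defined, since it must select $a^u_{p,\ell}$ distinct jobs of class $u$ and size $p$. I would obtain the bound by a swap on the \nfold{}-solution: whenever $a^u_{p,\ell}>n$ but $a^u_{p,\ell'}<n$ (which must hold somewhere since $nk\le n|L|$), we decrease $a^u_{p,\ell}$ by one, increase $a^u_{p,\ell'}$ by one, and simultaneously replace one copy of a selected module $M$ with $M_\ell=1$, $M_{\ell'}=0$ by the module $M'$ that agrees with $M$ except for $M'_\ell=0$ and $M'_{\ell'}=1$, updating the corresponding $y^u_M$ counters. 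Constraints (\ref{c1_p}), (\ref{c4_p}) and (\ref{c5_p}) remain intact, and iterating the swap terminates with the desired bound.

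The second step is to prove by induction the invariant: after any prefix of processed layers, the residual demands of the $n$ jobs differ pairwise by at most one. Initially all residuals equal $k$, so the invariant holds. If it holds before a layer with demand $a=a^u_{p,\ell}\le n$, greedy decrements the $a$ largest residuals by one; a short case analysis (splitting on how many jobs currently attain the maximum) shows the spread stays $\le 1$. Finally I would conclude by contradiction: suppose greedy fails at some layer $\ell^*$, so fewer than $a^u_{p,\ell^*}$ jobs have positive residual. The invariant then forces every residual to lie in $\{0,1\}$, so the total remaining demand is at most $a^u_{p,\ell^*}-1$; yet that total equals $\sum_{\ell'\text{ unprocessed}}a^u_{p,\ell'}\ge a^u_{p,\ell^*}$, a contradiction. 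The main obstacle is the WLOG reduction, which requires a careful swap preserving all \nfold{}-constraints; once it is established, the remainder is a textbook LPT-style balance argument.
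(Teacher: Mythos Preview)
Your invariant argument (steps~2 and~3) is correct and in fact spells out carefully what the paper only sketches: the paper argues by contradiction that if the greedy fails then some job was finished while another still had at least two pieces, which is the same ``residuals differ by at most one'' phenomenon you make explicit. So on the core combinatorics you are on solid ground and more rigorous than the paper.

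The gap is in your step~1, the swap that enforces $a^u_{p,\ell}\le n$. Two problems:
\begin{itemize}
\item You claim constraint~(\ref{c1_p}) is preserved, but it is not. You decrement $y^u_M$ and increment $y^u_{M'}$, while the left-hand side $\sum_{v}\sum_{K}K_M x^v_K$ is untouched. Constraint~(\ref{c1_p}) is exactly what lets the schedule-construction step ``reserve the slots for the corresponding classes according to the $y^u_M$ variables'' succeed: the configurations laid down via $x$ must supply, for every module $M$, precisely $\sum_v y^v_M$ copies. After your swap that bookkeeping is broken and the reservation step can fail. Repairing this would require also swapping one configuration $K\ni M$ for $K'\ni M'$ in $x$, but then you need layer $\ell'$ to be unoccupied in $K$, which is a further unproved assumption.
\item Even before touching $x$, the existence of a module $M$ with $y^u_M>0$, $M_\ell=1$, $M_{\ell'}=0$ is not justified. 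The hypotheses $a^u_{p,\ell}>n>a^u_{p,\ell'}$ constrain only the $p$-th summand of $\sum_{p'}a^u_{p',\ell}$ and $\sum_{p'}a^u_{p',\ell'}$; they do not force $\sum_M M_\ell y^u_M>\sum_M M_{\ell'} y^u_M$, which is what you would need.
\end{itemize}
A cleaner way to secure step~1 is to observe that the well-structured schedule in Lemma~\ref{lem:preemptive_nfold_correct} satisfies $a^u_{p,\ell}\le n^u_p$ automatically (no job contributes two pieces to the same layer), so one may add $a^u_{p,\ell}\le n^u_p$ as extra locally-uniform constraints to the \nfold{} without losing feasibility; then your steps~2--3 apply directly to the IP solution with no swap needed. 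The paper's own proof is brief enough that it does not visibly confront this issue either; your instinct that the bound needs an argument is correct, only the particular swap you propose does not go through.
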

\begin{proof}
Suppose the opposite. At some layer $\ell$ there are w.l.o.g. two slots but just one job $j$ of class $u$ with two job pieces. This would imply, that we placed the last job piece of another job of that class in some layer before while having two job pieces of $j$. This contradicts the procedure of the greedy algorithm. It remains to prove, that this also cannot happen while filling the first layer. Indeed this would imply, that we only have one job of that processing time $p$ and class $u$ and thus constraint~(\ref{c4_p}) would only allow one placeholder on each layer. Altogether, this proves the theorem. 
\end{proof}

Next, we assign the small jobs similar to before by using the round robin approach. For each $h\in\Lambda(\mathcal{K})$ and $b\in[c^*]$, we assign the jobs of the small classes $u$ with $z_{h,b}^u = 1$ onto the machines with configurations $K\in\mathcal{K}(h,b)$ via round robin.
Due to (\ref{c2_p}), all the jobs can be placed by this procedure.
Furthermore, due to Lemma \ref{t:RoundRobin} and (\ref{c3_p}), this yields a schedule with makespan at most $\bar{T} + \delta T$.
Lastly, we have to use the original running times and jobs, which can be done using a greedy approach.

The overall running time for placing the large classes is linear in the number of involved job pieces, that is $\cO(m/\delta^2)$.
When placing the small jobs, we touch each class at most once, \ie it takes time $\cO(C)$. Further, when we insert the original jobs and job sizes, we have to consider each job and job piece in the schedule once. This yields an overall running time of $\cO(n(m/\delta^2 + C))$.

\subparagraph{Total running time and error.}
We have seen how we can also solve the preemptive case of the Class Constraint Scheduling problem when we are given a guess $T$ on the makespan. Following the idea of the algorithms above, we have to complete the algorithm given above by embedding it into a binary search. Again, $n \cdot p_{\max}$ states an upper bound. Further using the fact, that the optimal makespan is integral, the binary search is exhausted after at most $\cO(\log((n \cdot p_{\max}/\delta))$ steps. Using $C \leq n$, we get a total running time of: 
\begin{align*}
\underbrace{\cO(\log(n \cdot p_{\max}/\delta))}_\text{Binary Search} \cdot [\underbrace{\cO(n)}_\text{Preprocess} + \underbrace{C^{2^{\cO(1/\delta^2)}} \cdot \log m}_\text{\nfold{}} + \underbrace{\cO(n(m/\delta^2 + C))]}_\text{Constructing the packing} \\
= n^{2^{\cO(1/\delta^2)}} \log(m) \log(p_{\max})
\end{align*}

Setting $\epsilon = \cO(\delta)$, we get an overall error of $\epsilon$ as the error of each step is bounded by $\cO(\delta)$ as we argued above. This yields the complete algorithm and its analysis. Summarizing, we get:

\begin{theorem}
A schedule $\sigma$ for the preemptive version of the Class Constrained Scheduling problem is obtained in time $n^{2^{\cO(1/\delta^2)}} \log(m) \log(p_{\max})$ with makespan $\mu(\sigma_P) \leq (1+ \epsilon) \cdot \mu(\textsc{opt}(I))$, where OPT$(I)$ denotes a solution with optimal makespan for packing the instance $I$. This yields the desired PTAS $P$ for this problem.
\end{theorem}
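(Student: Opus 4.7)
The plan is to assemble the four components developed throughout this subsection (preprocessing, well-structured schedule, \nfold{} IP, schedule decoding) into an algorithm that handles a single guess $T$ on the optimal makespan, and then embed the whole thing in a binary search. Since the optimal preemptive makespan is integral and bounded above by $n\cdot p_{\max}$, and we only need to locate it up to a factor $1+\cO(\delta)$, this search requires $\cO(\log(n\cdot p_{\max}/\delta))$ iterations.

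For a fixed guess $T$ I would proceed as follows. First, apply the grouping from Lemma~\ref{lem:preprocess_preemptive} to obtain $I'$ (blowing up the optimum by $1+3\delta$), then round and scale processing times exactly as in the non-preemptive case to obtain $I''$ (further factor $1+2\delta$). Second, build and solve the \nfold{} IP given by constraints~(\ref{c0_p})-(\ref{c6_p}) via Theorem~\ref{nfold}. By Lemma~\ref{lem:preemptive_nfold_correct}, a well-structured schedule with makespan $\bar{T}=(1+\cO(\delta))T$ for $I''$ yields a feasible IP solution; hence IP infeasibility safely rejects the current guess $T$. With the parameter estimates $r=2^{\cO(1/\delta^2)}$, $s=\cO(1/\delta^2)$, $t=2^{\cO(1/\delta^4)}$, $\Delta=\cO(c/\delta^2)$ and $L=\cO(\log(c/\delta^2+m))$ already tallied above, Theorem~\ref{nfold} gives a solve time of $C^{2^{\cO(1/\delta^2)}}\log m$. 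Third, decode the solution: create slots on each machine according to the $x$-variables, reserve them for classes via the $y$-variables, and greedily place large job pieces layer by layer according to the $a^u_{p,\ell}$-variables, using the auxiliary theorem that guarantees no two pieces of the same job are scheduled in parallel. Finally, route small classes by round robin among machines with matching $(h,b)$; by constraints~(\ref{c2_p}), (\ref{c3_p}) and Lemma~\ref{t:RoundRobin}, all small jobs fit and the makespan grows by at most $\delta T$. Replacing rounded sizes with the original processing times is straightforward.

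Adding the errors gives a final makespan of $\bar{T}+\delta T=(1+\cO(\delta))T$, so setting $\delta=\Theta(\epsilon)$ achieves the promised $(1+\epsilon)$-approximation. The overall running time is obtained by multiplying the number of binary search steps with the cost per iteration, namely $\cO(n)$ preprocessing, $C^{2^{\cO(1/\delta^2)}}\log m$ for the IP, and $\cO(n(m/\delta^2+C))$ for schedule construction; using $C\le n$ this simplifies to $n^{2^{\cO(1/\delta^2)}}\log(m)\log(p_{\max})$, as claimed.

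The main conceptual obstacle in this theorem is already discharged by the earlier lemmas of the subsection; what remains for this wrap-up statement is essentially bookkeeping. The one spot that deserves a careful check is the decoding step, specifically the parallel-execution constraint: I would verify explicitly that the greedy layer-wise assignment driven by the $a^u_{p,\ell}$-variables, together with constraint~(\ref{c4_p}) forcing $a^u_{p,\ell}\le n^u_p$ in effect, prevents two pieces of the same physical job from landing in the same layer, since this is the unique place where the preemptive restriction (as opposed to the splittable case) could be violated by a faithful implementation of the IP solution.
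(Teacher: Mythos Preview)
Your proposal is correct and mirrors the paper's own wrap-up essentially step for step: binary search over $T$ with $\cO(\log(n\cdot p_{\max}/\delta))$ iterations, preprocessing via Lemma~\ref{lem:preprocess_preemptive} and rounding, solving the \nfold{} of constraints~(\ref{c0_p})--(\ref{c6_p}) with the stated parameter bounds, greedy layer-wise decoding for large classes backed by the auxiliary theorem, round robin for small classes, and the same running-time product yielding $n^{2^{\cO(1/\delta^2)}}\log(m)\log(p_{\max})$. The one place you flag for extra care---that the greedy assignment respects the no-parallel-execution constraint---is exactly the point the paper isolates into its separate theorem, so your instincts match the paper's structure precisely.
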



\section{Open Questions} \label{CSS:OpenQuestions}
This paper managed to fill the gap of approximation algorithms for the Class Constrainted Scheduling problem. It introduced efficient approximation algorithms with a constant quality for each case. Further it also presented the first PTASs making it possible to solve the problem near-optimal in reasonable time. However, there are still some questions unsolved. First of all one could try to improve the qualities of the constant approximation algorithms. Furthermore, it is not excluded, that there also exists EPTASs for each of the cases. This would imply for our approach, that it is possible to formulate an \nfold{} without making the parameter $c$ appear in the constraint matrix. At this point, it seems hard to handle this obstacle as it is necessary to guarantee that we do not allot too many classes onto a machine. However, different approaches might yield the desired running time. 
Furthermore, if each class only contains one job, an EPTAS is known \cite{JansenLZ16} for the non-preemptive variant.
This result even holds if the number of class slots is dependent on the machines, that is, for each machine $i$ a number of class slots $c_i$ is part of the input.
Hence, it would be interesting to study the corresponding variants of CCS.
\bibliography{ref}

\end{document}